\def\tsc#1{\csdef{#1}{\textsc{\lowercase{#1}}\xspace}}
\newtheorem{thm}{Theorem}[section]
\newtheorem{prop}[thm]{Proposition}
\newtheorem{theorem}[thm]{Theorem}
\newcolumntype{Y}[1]{>{\raggedright\arraybackslash}p{#1}}
\definecolor{respblue}{RGB}{0,90,200}
\begin{document}
\let\WriteBookmarks\relax
\def\floatpagepagefraction{1}
\def\textpagefraction{.001}

\shorttitle{}

\shortauthors{R. Jin and X. Sun}

\title [mode = title]{Optimal Uncertainty Quantification under General Moment Constraints on Input Subdomains}

\makeatletter
\renewcommand\printorcid[1]{}
\makeatother

\author[1]{Rong Jin}
\author[1]{Xingsheng Sun}
\cormark[1]            

\cortext[1]{Corresponding author at: Department of Mechanical and Aerospace Engineering, University of Kentucky, Lexington, KY 40506, United States.
\textit{E-mail address}: \href{mailto:xingsheng.sun@uky.edu}{xingsheng.sun@uky.edu} (X.~Sun)}

\affiliation[1]{organization={Department of Mechanical and Aerospace Engineering, University of Kentucky},
    city={Lexington},
    state={KY},
    postcode={40506},
    country={United States}}

\begin{keywords}
Optimal uncertainty quantification \sep Epistemic and aleatory uncertainties \sep Safety certification \sep Inverse transform sampling \sep Rare events
\end{keywords}

\begin{abstract}
We present an optimal uncertainty quantification (OUQ) framework for systems whose uncertain inputs are statistically independent and characterized by truncated moment constraints defined over subdomains. Given such partial uncertainty information, rigorous optimal upper and lower bounds on the probability of failure (PoF) can be derived over the admissible set of probability measures, providing a principled basis for system safety certification. We formulate the OUQ problem under general moment constraints on subdomains and develop a high-performance computational framework to compute the corresponding optimal bounds. The proposed approach transforms the original infinite-dimensional optimization problems, constrained by subdomain-based moment information, into finite-dimensional unconstrained optimization problems parameterized solely by free canonical moments. To address the prohibitive cost of PoF evaluation in high-dimensional settings, we further incorporate inverse transform sampling (ITS), which enables efficient and accurate estimation of the PoF within the OUQ optimization. We also examine a special case in which the uncertain inputs are constrained only by zeroth-order moments over subdomains, and demonstrate that the resulting OUQ formulation is equivalent to the evidence theory. The effectiveness and scalability of the framework are demonstrated through a set of numerical examples ranging from one-dimensional baselines to a ten-dimensional ballistic impact problem, involving varying numbers of subdomains and moment constraints. The results show that increasing either the number of subdomains or the order of moment constraints systematically decreases the upper bound and increases the lower bound with sufficient uncertainty information, thereby tightening the bound interval. For high-dimensional problems, the ITS strategy reduces the computational cost by up to two orders of magnitude while maintaining a relative error below 1\%. For extreme rare-event settings, the main practical value lies in progressively tightening rigorous upper bounds, which can serve as conservative safety-design criteria. In addition, we identify regimes in which the optimal bounds are more sensitive to subdomain partitioning or to higher-order moment information. These findings provide practical guidance for prioritizing uncertainty reduction efforts when certifying system safety.
\end{abstract}
\maketitle

\section{Introduction}

Uncertainties are inherent in nearly all engineering analyses and decision-making processes, arising from diverse sources ranging from material inhomogeneity to model simplifications~\cite{kovachki2022multiscale, liu2021hierarchical, jin2026ensemble}. These uncertainties can significantly affect the accuracy and reliability of design predictions. Although the use of appropriately chosen safety factors can mitigate the impact of such uncertainties, the resulting designs and products may fail to achieve the desired balance between performance, efficiency, and economy. Therefore, it is essential to quantify and assess the effects of different sources of uncertainty on engineering solutions to provide decision-makers with reliable predictions and quantitative measures of confidence in these predictions.

The design process typically involves constructing a model that reproduces the behavior of the real structure under specified loading conditions. Consequently, it is relatively straightforward to represent the model’s input parameters and output responses using appropriate uncertainty models. Once these uncertainties are characterized, their influence on relevant performance measures can be systematically analyzed and quantified, enabling designers to identify potential weaknesses and refine the design as necessary to enhance robustness and reliability.

Based on their nature and reducibility, uncertainties are commonly classified into aleatory (inherent variability) and epistemic (lack of knowledge)~\cite{der2009aleatory}. Aleatory uncertainties arise from natural randomness---such as material inhomogeneity, manufacturing tolerances, and environmental fluctuations---and are irreducible, even with improved modeling or measurement. Given sufficient data, these inherent variations can typically be characterized using a probability density function (PDF). The most straightforward approach for propagating aleatory uncertainties involves Monte Carlo-based sampling methods~\cite{hastings1970monte}, often grounded in Bayesian inference~\cite{dempster1968generalization}. A critical step in these approaches is the appropriate selection of PDFs for the uncertain parameters, as this choice can significantly influence the resulting predictions~\cite{fetz2004propagation, zhang2018effect, zhang2018quantification}. 

In contrast, epistemic uncertainties originate from incomplete or imperfect knowledge, including model simplifications, numerical approximations, measurement errors, and limited experimental data. Unlike aleatory uncertainties, they can often be reduced through refined models, additional data, or improved experimental techniques. Traditionally, epistemic uncertainties are represented using non-probabilistic methods. For example, evidence theory (also known as Dempster-Shafer theory) expresses uncertainty by assigning degrees of belief and plausibility to events~\cite{dempster2008upper, shafer2020mathematical}. Probability boxes (p-boxes) define envelopes of possible PDFs constrained by available statistical information such as means and variances~\cite{ferson2004arithmetic}. Interval analysis represents uncertain quantities as bounded ranges, with computations carried out using interval arithmetic to ensure that the true (but unknown) values are guaranteed to lie within the resulting intervals~\cite{weichselberger2000theory}. Fuzzy probability theory employs fuzzy numbers, typically characterized by membership functions, to express a continuum of possible probabilities along with their degrees of plausibility~\cite{buckley2005fuzzy}. Epistemic uncertainty can also be represented probabilistically under the Bayesian interpretation, in which probability quantifies degrees of belief rather than intrinsic randomness~\cite{der2009aleatory}. A comprehensive review of epistemic uncertainty models can be found in Ref.~\cite{beer2013imprecise}. It is worth noting that aleatory and epistemic uncertainties are not always easily distinguishable during the characterization of input variables or the modeling and solution of engineering systems. When both types coexist and interact, they are collectively referred to as polymorphic uncertainties~\cite{gotz2015structural}.

More recently, a distinctive approach known as Optimal Uncertainty Quantification (OUQ) has been developed to compute rigorous bounds on quantities of interest (QoIs), such as the upper and lower limits of the probability of failure (PoF)~\cite{owhadi2013optimal}. Unlike traditional methods, OUQ can incorporate partial information about input (or prior) probability measures without requiring the full specification of a PDF or other explicit uncertainty models. Specifically, OUQ reformulates the infinite-dimensional optimization problem of bounding the PoF into a finite-dimensional problem expressed as a convex combination of Dirac measures~\citep{winkler1979integral, winkler1988extreme}. By optimizing the supports and weights of these Dirac measures, OUQ yields the mathematically sharpest possible bounds on the QoI. In the OUQ framework, partial information of uncertainties---such as variable bounds, means, or higher-order moments---is incorporated as constraints in the optimization problem, ensuring that the computed bounds are both rigorous and consistent with the available knowledge. The OUQ framework has been successfully applied to a wide range of engineering problems, including the design of thermal-hydraulic reactors~\citep{stenger2020optimal}, ballistic impact analysis of aluminum and magnesium alloys~\citep{kamga2014optimal, sun2022uncertainty}, rupture prediction of soft collagenous tissues~\citep{balzani2017method}, and the buckling analysis of wide-flange columns~\citep{miska2022method}.

Most previous OUQ studies have focused on incorporating statistical information defined globally over the entire support of random inputs~\cite{stenger2020optimal, kamga2014optimal, sun2022uncertainty, balzani2017method, miska2022method, miska2025reliability, stenger2021optimization}. However, in practical applications, statistical information can also be available only within specific subdomains or subintervals of the input space~\cite{wang2024subdomain, zhao2022sub, liu2015probability, jiang2007optimization}. In many engineering settings, limited data, modeling uncertainty, or incomplete knowledge of dependence structures make it difficult to specify reliable PDFs over the entire input domain. Instead, uncertainty can be characterized through partial statistical information, such as bounds, low-order moments, or region-wise (e.g., subdomain-specific or bin-based) statistics obtained from experiments, field measurements, or physics-based considerations. Such information is generally insufficient to reconstruct the full distribution, particularly in high-dimensional or rare-event regimes where tail behavior is difficult to infer accurately. This perspective is consistent with moment-based ambiguity sets in distributionally robust optimization~\cite{delage2010distributionally, nie2025distributionally}, partition-based uncertainty descriptions~\cite{estebanperez2022partition}, and truncated moment formulations~\cite{curto1996truncated}. Related forms of local statistical characterization also arise in engineering practice, for example in random-field modeling of spatially varying material  roperties~\cite{vanmarcke1986random, sudret2000stochastic, chen2012characterization} and in conditional random fields for geotechnical and material-property modeling~\cite{wang2016using, ouyang2021patching, feng2025efficient}, where field or experimental measurements are used to condition local variability. In addition, computational simulations can be used to estimate statistical moments, such as skewness and kurtosis, through lower-scale statistical volume element simulations~\citep{yin2008statistical}, sparse approximations in moment-based arbitrary polynomial chaos~\citep{ahlfeld2016samba}, and inverse stochastic analyses based on non-sampling generalized polynomial chaos~\citep{sepahvand2014identification}. Nevertheless, higher-order statistical information beyond bounds and means is often difficult to obtain because of limited data availability and computational cost.

These considerations motivate the present framework, which seeks to compute rigorous bounds on quantities of interest over all admissible probability measures consistent with partial and locally defined information. To address these challenges, the present study extends the OUQ framework to compute optimal bounds on the PoF in cases where partial uncertainty information is available locally only over subdomains of the input space. We aim to mathematically formulate and solve the corresponding optimization problem, and to systematically examine how the number of subdomains and the number of prescribed statistical moments affect the optimal bounds on the PoF. Here, the subdomains are prescribed in the input space rather than inferred from regions of the output space. Thus, the proposed formulation does not require prior knowledge of which output regions are influenced by data. Instead, it addresses situations in which local information is available for specified ranges of input variables, such as measurements, design specifications, operating regimes, or field observations associated with prescribed input intervals.

The remainder of the paper is organized as follows. Section~\ref{sec:method} formulates the OUQ problem under general truncated moment constraints and outlines the corresponding solution strategies. Section~\ref{sec:evidence} examines a special case of OUQ under zeroth-order moment constraints and compares the resulting formulation with evidence theory. Section~\ref{sec:examples} presents five numerical examples, including identity functions of four basic distributions, a five-dimensional nonlinear smooth problem, a two-dimensional non-smooth four-branch problem, an eight-dimensional rare-event roof-truss problem, and a ten-dimensional ballistic impact problem. Finally, Section~\ref{sec:summary} concludes the paper with a summary of key findings and a discussion of future directions.

\section{Methodology}
\label{sec:method}

In this section, we begin with a brief overview of the OUQ framework. We then formulate the OUQ problem under general truncated moment constraints and describe its solution using reduction theorem, a canonical-moment-based approach, eigendescompsition of the Jacobi matrix, and inverse transform sampling (ITS). Readers are referred to Ref.~\cite{owhadi2013optimal} for details on the reduction theorem and to Ref.~\cite{stenger2020optimal2} for the canonical-moment methodology.

\subsection{Optimal Uncertainty Quantification and Certification}

In this work, we consider a system characterized by a validated physics-based model. In the context of UQ, we aim to determine the probabilities of system outcomes when the response is stochastic or uncertain due to variability or lack of knowledge in the system inputs or operating conditions. Specifically, the system performance is described by a known response function
\begin{equation}
    Y = G(X) ,
\end{equation}
where the forward mapping \( G: \mathcal{X} \to \mathcal{Y} \) transforms the input space \(\mathcal{X} \equiv \prod_{i=1}^{m} \mathcal{X}_i \subseteq \mathbb{R}^m\) into the output space \(\mathcal{Y} \equiv \prod_{i=1}^{n} \mathcal{Y}_i \subseteq \mathbb{R}^n\). The input vector \( X \equiv (X_1, \dots, X_m) \) consists of \( m \) independent random variables, where each \( X_i \in \mathcal{X}_i \) (\( i = 1, 2, \dots, m \)), representing imperfectly known or uncertain properties of the system. This independence assumption is adopted throughout Sections~\ref{sec:method}-\ref{sec:examples} and is required by the specific admissible-set construction and computational framework developed in the present paper. The output vector \( Y \equiv (Y_1, \dots, Y_n) \), with each \( Y_i \in \mathcal{Y}_i \) (\( i = 1,2, \dots, n \)), corresponds to the system’s performance measures. The random inputs are generated according to an unknown probability law \(\mathbb{P} \in \Phi(\mathcal{X})\), defined on the measurable space \(\mathcal{X}\) and possibly subject to certain known characteristics (e.g., bounds, moments). 

A representative example of uncertainty quantification in engineering is design certification, which involves assessing the probability that a system will perform safely and remain within specified operational limits. The design requirements impose that the output \( Y \) must remain within an \emph{admissible} set \(\mathcal{Y}^{\mathrm{a}} \subseteq \mathcal{Y}\). Failure occurs when \( Y \) falls within the \emph{inadmissible} set \(\mathcal{Y}^{\mathrm{c}} = \mathcal{Y} \setminus \mathcal{Y}^{\mathrm{a}}\). Ideally, the probability measure associated with \( Y \) would have its entire support contained 
within the admissible set, i.e.,
\begin{equation}
    \mathbb{P}[Y \in \mathcal{Y}^{\mathrm{a}}] = 1.
\end{equation}
Systems satisfying this condition can be certified with complete certainty. 
However, such absolute assurance of safe performance is often impractical or unattainable, for example, when \(\mathbb{P}\) lacks compact support or when achieving it would be prohibitively costly. In practice, this strict condition is typically relaxed by introducing an acceptable PoF. Let \(\epsilon \in [0, 1]\) denote the maximum tolerable PoF. 
The system is considered \emph{safe} if
\begin{equation}
    \mathbb{P}[Y \in \mathcal{Y}^{\mathrm{c}}] \leq \epsilon,
\end{equation}
and \emph{unsafe} if
\begin{equation}
    \mathbb{P}[Y \in \mathcal{Y}^{\mathrm{c}}] > \epsilon.
\end{equation}

However, due to incomplete information, the exact probability measure of the input random variables \( X \) is generally unknown. To address this, we define a subset
\begin{equation}
    \mathcal{A} \subseteq \big\{ \mu ~|~ \mu \in \Phi(\mathcal{X}) \big\},
\end{equation}
which encodes all available information about the probability measure \(\mathbb{P}\) of \(X\). This information may originate from experimental data, lower-fidelity simulations, or expert judgment. By construction, the true measure \(\mathbb{P}\) is assumed to belong to \(\mathcal{A}\). Within this admissible set, some scenarios may be \emph{safe} (i.e., \(\mu [Y \in \mathcal{Y}^{\mathrm{c}}] \leq \epsilon\)), 
while others may be \emph{unsafe} (i.e., \(\mu [Y \in \mathcal{Y}^{\mathrm{c}}] > \epsilon\)). Given this information set \(\mathcal{A}\), it is possible to determine {\it rigorous} and {\it optimal} upper and lower bounds on the PoF, corresponding to the most and least conservative scenarios consistent with the assumptions. 
These bounds are defined as
\begin{subequations}
\begin{equation}
    U(\mathcal{A}) := \sup_{\mu \in \mathcal{A}} \mu [Y \in \mathcal{Y}^{\mathrm{c}}],
\end{equation}
and
\begin{equation}
    L(\mathcal{A}) := \inf_{\mu \in \mathcal{A}} \mu [Y \in \mathcal{Y}^{\mathrm{c}}].
\end{equation}
\label{eq:bounds}
\end{subequations}

The bounds \( U(\mathcal{A}) \) and \( L(\mathcal{A}) \) defined in Eq.~(\ref{eq:bounds}) provide a rigorous basis for solving the certification problem. Assuming that the information set \(\mathcal{A}\) is valid (i.e., \(\mathbb{P} \in \mathcal{A}\)), the following criteria apply: if \( U(\mathcal{A}) \leq \epsilon \), the system is \emph{certifiably safe}; if \( \epsilon < L(\mathcal{A}) \), the system is \emph{certifiably unsafe}; and if \( L(\mathcal{A}) \leq \epsilon < U(\mathcal{A}) \), the available information is insufficient to determine the system’s safety. This certification logic and its optimality are illustrated in Fig.~\ref{fig:certify}(a). In addition, the tightness of these bounds depends directly on the information contained in \(\mathcal{A}\). For two information sets \(\mathcal{A}_1\) and \(\mathcal{A}_2\), if \(\mathcal{A}_1 \subseteq \mathcal{A}_2\), then \( U(\mathcal{A}_1) \le U(\mathcal{A}_2) \) and \( L(\mathcal{A}_1) \ge L(\mathcal{A}_2) \). In other words, the more information about uncertainties is incorporated into the UQ analysis, the tighter the resulting bounds \( U(\mathcal{A}) \) and \( L(\mathcal{A}) \) become, and the closer they approach the true PoF, as shown in Fig.~\ref{fig:certify}(b). This tightening improves the efficiency of certification but generally increases computational cost,  which reveals a fundamental trade-off between the economy of certification and computational feasibility.

\begin{figure}[pos=htbp]
\centering
\includegraphics[width=0.9\textwidth]{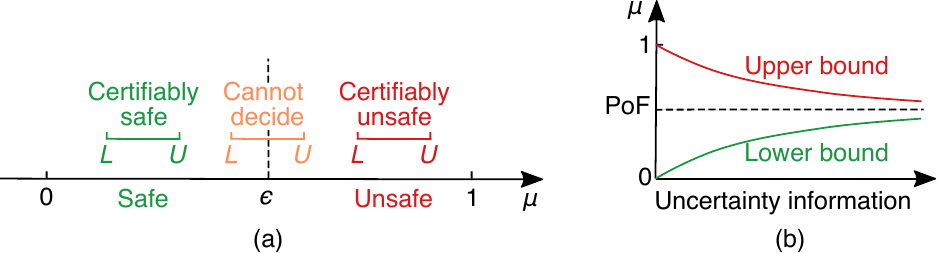}
\caption{Illustration of certification in OUQ: (a) Rigorous decision criterion for system safety, and (b) Convergence of optimal bounds with increasing information.}
\label{fig:certify}
\end{figure}

\subsection{OUQ under Local Moment Constraints}

In this work, we consider a scenario in which the input domain of the system, 
\(\mathcal{X}\), is partitioned into multiple subdomains. Without loss of generality, we assume that the domain of the \(i\)-th input variable, \(\mathcal{X}_i\), is divided into \(K\) subintervals \(\mathcal{X}_{i,j}\) (\(j = 1, \dots, K\)). Each \(\mathcal{X}_{i,j}\) is referred to as a partition of the input domain and satisfies the following properties: (i) the partitions are mutually disjoint,
\begin{equation}
    \mathcal{X}_{i,j} \cap \mathcal{X}_{i,l} = \varnothing, \quad \text{if } j \neq l,
    \label{eq:xixj}
\end{equation}
and (ii) their union covers the entire input space,
\begin{equation}
    \bigcup_{j=1}^{K} \mathcal{X}_{i,j} = \mathcal{X}_i.
    \label{eq:union}
\end{equation}

Due to limited information, the exact probability measure \(\mathbb{P}\) of the inputs may be unknown. However, we assume that the statistical raw moments of the probability measure within each partition can be obtained. Without loss of generality, we further assume that all random inputs share the same number of available moments (denoted by \(r\)) in each subdomain. Accordingly, the information set \(\mathcal{A}\) can be expressed as
\begin{equation}
    \mathcal{A} = 
    \bigg\{
    \mu ~\bigg|~
    \mu = \bigotimes_{i=1}^{m} \mu_i,~
    \int_{\mathcal{X}_{i,j}} x^{k-1} \, \mathrm{d}\mu_i = M_{i,j,k-1},~
    i = 1, 2, \dots, m;~
    j = 1, 2, \dots, K;~
    k = 1, 2, \dots, r+1
    \bigg\},
\label{eq:infset}
\end{equation}
where \(\mu_i\) denotes the probability measure of the \(i\)-th input and \(M_{i,j,k-1}\) represents the \((k-1)\)-th raw truncated moment on the \(j\)-th subinterval of the \(i\)-th input. The product form $\mu = \bigotimes_{i=1}^{m} \mu_i$ is a central assumption of the present formulation. It enables the admissible set to be expressed in terms of marginal subdomain moment constraints and allows the subsequent canonical-moment construction to be carried out separately for each input dimension. The zeroth raw moment corresponds to the probability mass, which 
is constrained by the normalization condition
\begin{equation}
    \sum_{j=1}^{K} M_{i,j,0} = 1.
\end{equation}

For notational simplicity, we assume in this work that all random inputs are partitioned into the same number of subdomains and that each subdomain is characterized by the same number of moment constraints. This assumption is made solely for clarity of presentation and does not restrict the generality of the proposed approach, which can be straightforwardly extended to cases where the number of subdomains or the number of prescribed moment constraints varies among input dimensions.

\subsection{Reduction Theorem}

All realizations in the information set \(\mathcal{A}\) defined in Eq.~(\ref{eq:infset}) are considered equally plausible representations of the underlying reality. Consequently, the most extreme scenarios within \(\mathcal{A}\) are of primary interest for UQ and certification. However, since the optimization problems in Eq.~(\ref{eq:bounds}) are infinite-dimensional due to the uncountably many admissible probability measures \(\mu\), direct solutions are computationally intractable in practice.

To overcome this challenge, we employ the reduction theorem developed in Ref.~\cite{owhadi2013optimal}, which builds upon Winkler’s results on the extreme points of moment sets~\cite{winkler1979integral, winkler1988extreme}. Specifically, the reduction theorem reformulates the infinite-dimensional optimization problem into a finite-dimensional one expressed as a convex combination of Dirac measures. The partial information about uncertainties---namely, the raw moments defined over subdomains in Eq.~(\ref{eq:infset})---is incorporated as constraints in the optimization problem. The theorem states that, for each random input, the number of Dirac supports and corresponding weights equals the number of equality constraints within each subdomain~\cite{owhadi2013optimal}. Consequently, the probability measure can be written as
\begin{equation}
\label{eq:classical_measure_decomposition}
    \mu_i = \sum_{j=1}^{K} \sum_{k=1}^{r+1} t_{i,j,k} \, \delta_{x_{i,j,k}}, 
    \quad i = 1, 2, \dots, m,
\end{equation}
where \(x_{i,j,k}\) and \(t_{i,j,k}\) denote the \(k\)-th support and corresponding weight in the \(j\)-th subdomain of the \(i\)-th input, satisfying
\begin{equation}
    \sum_{j=1}^{K} \sum_{k=1}^{r+1} t_{i,j,k} = 1,
    \quad \text{and} \quad
    x_{i,j,k} \in \mathcal{X}_{i,j}.
\end{equation}

Since there are \(r+1\) moment constraints in each subdomain, the same number of supports and weights are required locally. The computation of the upper and lower bounds \(U(\mathcal{A})\) and \(L(\mathcal{A})\) is thus reduced to solving the finite-dimensional optimization problems
\begin{equation} 
\label{eq:ouq_program_subint}
\begin{aligned}
    U(\mathcal{A})/L(\mathcal{A}) = 
    \underset{\{x\}, \{t\}}{\sup/\inf} \quad &
    \sum_{\alpha_1=1}^{K(r+1)} \!\!\cdots\! 
    \sum_{\alpha_m=1}^{K(r+1)} 
    \left(\prod_{i=1}^m t_{i,\alpha_i}\right) 
    \mathbf{1} \big[G(x_{1,\alpha_1}, \dots, x_{m,\alpha_m}) \in \mathcal{Y}^\text{c}\big], \\[4pt]
    \text{subject to} \quad &
    x_{i,j,k} \in \mathcal{X}_{i,j}, \quad 
    t_{i,j,k} \ge 0, \quad 
    \sum_{l=1}^{r+1} t_{i,j,l} x_{i,j,l}^{k-1} = M_{i,j,k-1}, \\[3pt]
    & i = 1, 2, \dots, m, ~
    j = 1, 2, \dots, K, ~
    k = 1, 2, \dots, r+1,
\end{aligned}
\end{equation}
where \(\{x\} \equiv \{x_{i,j,k}: i= 1, \dots, m;~j = 1, \dots, K;~k = 1, \dots, r+1\} \in \mathbb{R}^{mk(r+1)} \) and \(\{t\} \equiv \{t_{i,j,k}: i= 1, \dots, m,~j = 1, \dots, K,~k = 1, \dots, r+1\} \in \mathbb{R}^{mk(r+1)}\). The product structure in Eq.~\eqref{eq:ouq_program_subint} follows from the independence assumption in Eq.~\eqref{eq:infset}. For dependent inputs, the reduction would need to be reformulated in terms of a joint admissible measure rather than a product of marginal measures. The variables \(x_{i,\alpha_i}\) and \(t_{i,\alpha_i}\) denote the flattened notations enumerating all Dirac supports and weights across subdomains, defined by
\begin{equation}
\label{eq:flattened_index}
x_{i,\alpha_i} = x_{i,j,k}, \quad 
t_{i,\alpha_i} = t_{i,j,k}, \quad 
\text{where } \alpha_i = (r+1)(j-1) + k.
\end{equation}
The flattened notation \(x_{i,\alpha_i}\) and \(t_{i,\alpha_i}\) is adopted in the objective function for conciseness, while the hierarchical form \(x_{i,j,k}\) and \(t_{i,j,k}\) is retained in the constraint expressions. The indicator function \(\mathbf{1}[E]\) is defined as
\begin{equation}
\mathbf{1}[E] =
    \begin{cases}
        1, & \text{if the statement } E \text{ is true}, \\[4pt]
        0, & \text{otherwise.}
    \end{cases}
\end{equation}

\subsection{OUQ Solution }
\label{sec:numerical_method}

Two major challenges arise when solving the constrained optimization problem formulated in Eq.~(\ref{eq:ouq_program_subint}). The first challenge is the discrete nature of the objective function, i.e., the PoF, which is expressed as a sum of indicator functions that take binary values of either $0$ or $1$. This discreteness makes the direct application of gradient-based optimization methods infeasible. The second challenge lies in the enforcement of moment constraints representing partial information about the input variables. This becomes particularly difficult when a large number of equality constraints are imposed, as derivative-free algorithms often struggle to efficiently satisfy them.

To address these challenges, several numerical strategies have been developed. The first is the Mystic framework~\cite{mckerns2012optimal}, which employs a nested (two-level) optimization scheme --- an outer loop that minimizes or maximizes the PoF and an inner loop that identifies feasible candidates that satisfy the imposed constraints. Both loops rely on derivative-free optimizers such as Differential Evolution (DE)~\cite{storn1997differential}. The second approach introduces a neural-network-based approximation of the indicator functions~\cite{sun2023learning}, leveraging their similarity to classification problems in machine learning. This learning-based OUQ formulation enables the use of efficient gradient-based algorithms to solve the resulting optimization problem, wherein the constrained formulation is transformed into an unconstrained one through penalty methods. The third strategy utilizes canonical moments to reformulate the OUQ problem~\cite{stenger2020optimal2}. In this work, since a large number of moment constraints are imposed, both the Mystic framework and penalty-based methods become inefficient. The Mystic framework struggles to sufficiently explore the feasible constrained space within the inner loop, while the penalty-based methods introduce significant numerical errors due to the large penalty terms dominating the objective function. Consequently, we employ the canonical-moment-based approach to efficiently and accurately solve Eq.~(\ref{eq:ouq_program_subint}), which has demonstrated superior performance in handling moment-constraint optimization problems in OUQ~\cite{stenger2020optimal2}.

Canonical moments can be interpreted as the relative positions of classical moments within their admissible moment spaces~\cite{dette1997theory}. They are defined on the interval \([0,1]\), and any finite sequence of classical moments can be transformed into an equivalent sequence of canonical moments of the same order. The key idea of the canonical-moment-based approach is to convert the constrained optimization problem, originally formulated in terms of the Dirac supports \(\{x\}\) and weights \(\{t\}\), into an unconstrained optimization problem parameterized by a set of canonical moments of higher order than those appearing in the constraints. These canonical moments, treated as optimization variables, are referred to as \emph{free canonical moments}. The moment constraints are enforced by constructing a polynomial function using \emph{fixed canonical moments}, which are derived from the given moment constraints. The roots of this polynomial correspond to the Dirac supports \(\{x\}\), and the associated weights \(\{t\}\) are subsequently determined by solving the linear system defined by the moment constraint equations. This canonical-moment-based formulation preserves the feasibility of the constraint space, ensuring nonnegativity of the Dirac weights, containment of supports within the prescribed subdomains, and exact satisfaction of the moment-matching conditions. 

For each subdomain \(\mathcal{X}_{i,j}\) and its corresponding truncated moments \(M_{i,j,k-1}\), the optimization problem is reformulated using canonical moments through the following steps: (1) transform the raw moments from the original subdomain \(\mathcal{X}_{i,j}\) to the unit interval \([0,1]\); (2) compute the first \(r\) fixed canonical moments from the given moment constraints and augment them with \(r+1\) free canonical moments that serve as optimization variables; (3) construct a tridiagonal Jacobi matrix whose eigenvalues correspond to the Dirac supports and whose eigenvectors determine the Dirac weights; (4) map the resulting discrete measure back to the original subdomain \(\mathcal{X}_{i,j}\); (5) formulate a constraint-free optimization problem in terms of these free canonical moments; and (6) reduce the number of indicator-function evaluations in the PoF calculation by ITS.

\subsubsection{Affine Transformation to the Unit Interval}
\label{sec:affine}

We first transform random inputs and the corresponding truncated raw moments from the original subdomain \(\mathcal{X}_{i,j}\) to the unit interval \([0,1]\). Specifically, each random input is affinely scaled using the min-max normalization
\begin{equation}
\xi_i = \frac{x_i - a_{i,j}}{b_{i,j} - a_{i,j}}, 
\quad x_i \in \mathcal{X}_{i,j},
\label{eq:affine_transform}
\end{equation}
where \(a_{i,j}\) and \(b_{i,j}\) denote the lower and upper bounds of \(\mathcal{X}_{i,j}\), respectively, and \(\xi_i \in [0,1]\) represents the normalized variable corresponding to the random input \(x_i\). The truncated raw moments are then transformed using a binomial expansion
\begin{equation}
\tilde{M}_{i,j,k-1}
= \frac{1}{(b_{i,j}-a_{i,j})^{k-1}}
\sum_{l=0}^{k-1} 
\binom{k-1}{l}
(-a_{i,j})^{k-l-1} M_{i,j,l},
\qquad k = 1, \dots, r+1.
\label{eq:moment_01}
\end{equation}

This affine transformation serves two primary purposes. First, it normalizes the total probability mass over the subdomain \(\mathcal{X}_{i,j}\), ensuring that \(\tilde{M}_{i,j,0} = 1\). Second, it standardizes the supports to the unit interval \([0,1]\), on which the theory of canonical moments is defined. By construction, each \(\tilde{M}_{i,j,k-1}\) represents the \((k-1)\)-th raw moment of a probability measure on \([0,1]\) corresponding to the original truncated measure on \(\mathcal{X}_{i,j}\).

\subsubsection{Computation of Canonical Moments}
\label{sec:canonical}

Given a sequence of \(r+1\) raw moment constraints \(\tilde{M}_{i,j,k-1}\) (\(k = 1, \dots, r+1\)) defined over the subdomain \(\mathcal{X}_{i,j}\), the corresponding fixed canonical moments \(p_{i,j,k-1}\) (\(k = 1, \dots, r+1\)) can be computed. The analytical expressions for the first three moments (\(k = 1, 2, 3\)) are available
\begin{equation}
\label{eq:p1_p2_explicit}
p_{i,j,0} = 0, 
\quad 
p_{i,j,1} = \tilde{M}_{i,j,1}, 
\quad 
p_{i,j,2} = 
\frac{\tilde{M}_{i,j,2} - (\tilde{M}_{i,j,1})^2}
{\tilde{M}_{i,j,1}(1 - \tilde{M}_{i,j,1})}.
\end{equation}

For higher-order moments, the corresponding canonical moments can be computed using two classical algorithms. The first is based on Hankel determinants, in which a sequence of square Hankel matrices is constructed from the raw moments, and the canonical moments are derived from the determinants of these matrices~\cite{dette1997theory}. However, this approach requires evaluating many Hankel determinants and is therefore computationally expensive and potentially numerically unstable. To overcome this issue, a more efficient alternative is the Quotient-Difference (Q-D) algorithm~\cite{henrici1993applied,gautschi2004orthogonal}, which recursively generates a triangular array---known as the Q-D table~\cite{dette1997theory}---and yields the canonical moments with significantly improved numerical efficiency. It is important to note that enforcing the original moment constraints \(M_{i,j,k-1}\) \((k = 1, \dots, r+1)\) or their canonical counterparts \(p_{i,j,k-1}\) \((k = 1, \dots, r+1)\) is mathematically equivalent in terms of moment matching.

Once the first \(r+1\) fixed canonical moments \(p_{i,j,k-1}\) have been determined, they are augmented by an additional set of \(r+1\) higher-order canonical moments, \(p_{i,j,k-1}\) \((k = r+2, \dots, 2r+2)\), which serve as optimization variables in the OUQ problem. These free canonical moments take values in the interval \((0,1)\), which ensures that the resulting measure remains admissible within the moment space.

\subsubsection{Solution of Supports and Weights via Eigendecomposition}
\label{sec:eigen}

Given the canonical moment sequence \(p_{i,j,k-1}\) \((k = 1, \dots, 2r+2)\) associated with the subdomain \(\mathcal{X}_{i,j}\), we seek to construct an \((r+1)\)-point Dirac measure \(\tilde{\mu}_{i,j} = \sum_{k=1}^{r+1} \omega_{i,j,k}\,\delta_{\xi_{i,j,k}} \) supported on \([0,1]\), whose first \(r+1\) moments match the transformed moment constraints \(\tilde{M}_{i,j,k-1}\) \((k = 1, \dots, r+1)\). Here, \(\xi_{i,j,k}\) and \(\omega_{i,j,k}\) denote the Dirac supports and weights, respectively. Following~\cite{stenger2020optimal}, the supports \(\xi_{i,j,k}\) \((k = 1, \dots, r+1)\) coincide with the roots of a polynomial \(P_{(r+1)}\) of degree \(r+1\), which is generated through the three-term recurrence relation
\begin{equation}
\label{eq:three_term_recurrence}
P_{(k+1)}(\xi)
=
\bigl(\xi - (\zeta_{2k+1} + \zeta_{2k+2})\bigr) P_{(k)}(\xi)
-
\zeta_{2k}\,\zeta_{2k+1}\,P_{(k-1)}(\xi),
\qquad k = 0,1,\dots,r,
\end{equation}
with the initial polynomials \(P_{(-1)}:= 0\) and \(P_{(0)}:= 1\). The recurrence coefficients \(\zeta_k\) over \(\mathcal{X}_{i,j}\) are defined as
\begin{equation}
\zeta_k := p_{i,j,k-1}\bigl(1 - p_{i,j,k}\bigr),
\qquad 
\zeta_0 := 0,\quad 
\zeta_{2r+2} := 0.
\label{eq:recurrence}
\end{equation}

Once the supports \(\xi_{i,j,k}\) are computed as the roots of \(P_{(r+1)}\), the weights \(\omega_{i,j,k}\) are obtained by enforcing the moment-matching conditions
\begin{equation}
\label{eq:moment}
\sum_{k=1}^{r+1} \omega_{i,j,k}\,\xi_{i,j,k}^{\,l-1}
=
\tilde{M}_{i,j,l-1},
\qquad
l = 1, 2, \dots, r+1.
\end{equation}
Eq.~\eqref{eq:moment} forms a linear system in unknown weights \(\omega_{i,j,k}\), whose coefficient matrix is a Vandermonde matrix constructed from the supports \(\xi_{i,j,k}\). This system is invertible as long as all supports \(\xi_{i,j,k}\) are distinct. However, the polynomial-based computation of the supports can become unstable when two or more roots are very close together or coincide. In such cases, the Vandermonde matrix becomes ill-conditioned, leading to substantial numerical errors in the recovered weights.

Alternatively, in this work we employ an eigendecomposition-based procedure known as the Golub-Welsch algorithm~\cite{golub1969calculation}. A symmetric tridiagonal Jacobi matrix \(J_{i,j}\) over \(\mathcal{X}_{i,j}\) is constructed from the recurrence coefficients
\begin{equation}
\label{eq:jacobi_matrix_unit}
J_{i,j} =
\begin{bmatrix}
\zeta_1{+}\zeta_2 & \sqrt{\zeta_2\zeta_3} & & \\
\sqrt{\zeta_2\zeta_3} & \zeta_3{+}\zeta_4 & \ddots & \\
 & \ddots & \ddots & \sqrt{\zeta_{2r}\zeta_{2r+1}} \\
 & & \sqrt{\zeta_{2r}\zeta_{2r+1}} & \zeta_{2r+1}{+}\zeta_{2r+2}
\end{bmatrix}
\in \mathbb{R}^{(r+1)\times(r+1)} .
\end{equation}
The eigenvalues of the Jacobi matrix \(J_{i,j}\) equal to the Dirac supports \(\xi_{i,j,k}\), while the corresponding weights are obtained from the normalized eigenvectors
\begin{equation}
\omega_{i,j,k} = v_{k,1}^{\,2},
\label{eq:weight}
\end{equation}
where \(v_{k,1}\) is the first component of the \(k\)-th eigenvector. 

Constructing the Jacobi matrix requires canonical moments up to order \(2r+2\): the first \(r+1\) fixed canonical moments, derived from the truncated moment constraints, and additional \(r+1\) free canonical moments of higher order. Any admissible choice of these free moments \(p_{i,j,k-1}\) \((k=r+2,\dots,2r+2)\), together with the fixed moments \(p_{i,j,k-1}\) \((k=1,\dots,r+1)\), uniquely determines a probability measure satisfying the moment-matching conditions in Eq.~(\ref{eq:moment}). Thus, these higher-order canonical moments serve as the optimization variables in the resulting constraint-free OUQ formulation. Because canonical moments are naturally bounded within \([0,1]\), admissible candidate solutions are straightforward to generate, in contrast to the classical-moment representation.

From a numerical standpoint, the eigendecomposition-based procedure replaces the polynomial root-finding step with the eigendecomposition of a real, symmetric, tridiagonal matrix---a well-conditioned problem for which fast and stable algorithms are readily available. In addition, the corresponding eigenvectors directly provide the Dirac weights, thereby eliminating the need to solve a potentially ill-conditioned Vandermonde system.

\subsubsection{Recovery of Supports and Weights}
\label{sec:recovery}

Once the Dirac supports \(\xi_{i,j,k}\) and weights \(\omega_{i,j,k}\) on the unit interval \([0,1]\) are obtained via the eigendecomposition of the Jacobi matrix, they are mapped back to the original subdomain \(\mathcal{X}_{i,j}\) using
\begin{equation}
\label{eq:atom_weight_recovery}
x_{i,j,k} = a_{i,j} + (b_{i,j}-a_{i,j})\,\xi_{i,j,k}, 
\qquad
t_{i,j,k} = M_{i,j,0}\,\omega_{i,j,k},
\qquad
i=1,\dots,m,\; j=1,\dots,K,\; k=1,\dots,r+1.
\end{equation}
The recovered supports \(x_{i,j,k}\) and weights \(t_{i,j,k}\) on the original subdomain are then supplied to the forward model \(F\) for the evaluation of the PoF.

\subsubsection{Constraint-Free Optimization over Canonical Moments}

After transforming the classical moments into canonical moments and introducing additional higher-order canonical moments, the constrained OUQ problem in Eq.~\eqref{eq:ouq_program_subint} can be reformulated as an unconstrained optimization problem over a hypercube with the following objective function
\begin{equation} 
\label{eq:ouq_canonical}
\begin{aligned}
    \text{PoF}\big(\{p\}^{\mathrm{free}}\big) = 
    \sum_{\alpha_1=1}^{K(r+1)} \!\!\cdots\!
    \sum_{\alpha_m=1}^{K(r+1)}
    \left(\prod_{i=1}^m t_{i,\alpha_i}\big(\{p\}\big)\right)
    \mathbf{1}
    \bigg[
        G\Big(
            x_{1,\alpha_1}\big(\{p\}\big),
            \dots,
            x_{m,\alpha_m}\big(\{p\}\big)
        \Big)
        \in \mathcal{Y}^\text{c}
    \bigg],
\end{aligned}
\end{equation}
where  
\[
\{p\}^{\mathrm{free}} \equiv 
\{p_{i,j,k-1} : i=1,\dots,m;\; j=1,\dots,K;\; k=r+2,\dots,2r+2\}
\in (0,1)^{\,mk(r+1)}
\]
denotes the free higher-order canonical moments, and  
\[
\{p\} \equiv 
\{p_{i,j,k-1} : i=1,\dots,m;\; j=1,\dots,K;\; k=1,\dots,2r+2\}
\]
contains both fixed and free canonical moments. The mappings from canonical moments to Dirac supports \(x_{i,\alpha_i}(\{p\})\) and weights \(t_{i,\alpha_i}(\{p\})\) are performed following the procedures in Sections~\ref{sec:affine}-\ref{sec:recovery}. Accordingly, the canonical-moment-based unconstrained formulation developed here is specific to the independent-input setting, in which each marginal measure can be parameterized separately on its partitioned subdomains. For dependent inputs, a different parameterization of the admissible set would be required.

It is important to note that this canonical-moment reformulation does not alter the discrete nature of the objective function, which still involves indicator functions. However, it dramatically reduces the number of optimization variables. Only \(mK(r+1)\) free canonical moments are optimized, compared with the original formulation requiring \(mK(r+1)\) Dirac supports and \(mK(r+1)\) associated weights---a total of \(2mK(r+1)\) variables. As a result, this approach not only converts the problem into a fully constraint-free optimization but also substantially decreases the dimensionality of the search space, leading to a more efficient and numerically stable optimization procedure.

\subsubsection{Model Reduction via Inverse Transform Sampling}
\label{sec:mc_sampling}

The objective function in the constraint-free OUQ formulation given in Eq.~\eqref{eq:ouq_canonical} requires \((K(r+1))^m\) evaluations of the forward map and its associated indicator function to compute a single PoF value. Consequently, the computational cost can become prohibitive, especially for high-dimensional problems. For example, in our numerical study with \(K=8\), \(r=2\), and \(m=5\), computing a single PoF evaluation would require nearly eight million forward-map and indicator-function evaluations.

To mitigate this cost, we approximate the PoF through ITS~\cite{robert1999monte, rubinstein2016simulation}, a classical method for generating random samples from an arbitrary probability distribution using its cumulative distribution function (CDF). For the discrete Dirac measure in Eq.~\eqref{eq:classical_measure_decomposition}, the CDF of the \(i\)-th random input is
\begin{equation}
    F_i(x)
    = \sum_{j=1}^{K} \sum_{k=1}^{r+1}
      t_{i,j,k}\,
      \mathbf{1}\!\left[x_{i,j,k} \le x\right].
\end{equation}

We draw samples via the ITS method by generating i.i.d. random numbers from a uniform distribution on \([0,1]\), i.e., \(c_l \sim \mathrm{Unif}[0,1]\) for \(l = 1,2,\dots, N_{\mathrm{ITS}}\), where \(N_{\mathrm{ITS}}\) denotes the number of inverse-transform samples. The \(l\)-th sample of the \(i\)-th input is then obtained by
\begin{equation}
\label{eq:inverse_transform}
    X_i^{(l)}
    =
    \inf\big\{x : F_i(x) \ge c_l \big\}, \quad i=1,2,\dots,m,
\end{equation}
which corresponds to selecting the smallest Dirac support whose cumulative weight exceeds the uniform draw. Fig.~\ref{fig:inverse} schematically illustrates the ITS procedure using a three-point Dirac measure as an example. With these sampled inputs, the PoF is approximated through Monte Carlo estimation
\begin{equation}
\label{eq:mc_estimator}
\text{PoF}\big(\{p\}^{\mathrm{free}}\big)
\;\approx\;
\frac{1}{N_{\mathrm{ITS}}}
\sum_{l=1}^{N_{\mathrm{ITS}}}
\mathbf{1}\!\left[
    G\big(X_1^{(l)},\dots,X_m^{(l)}\big)
    \in \mathcal{Y}^{\mathrm{c}}
\right].
\end{equation}

\begin{figure}[pos=htbp]
\centering
\includegraphics[width=0.6\textwidth]{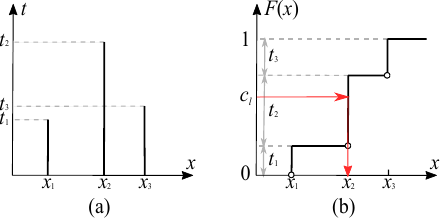}
\caption{Illustration of ITS in OUQ: (a) PDF of a three-point Dirac measure, and (b) Corresponding CDF and ITS.}
\label{fig:inverse}
\end{figure}

An additional motivation for using ITS arises from examining how each term of the joint Dirac measure contributes to the PoF. As shown in Eq.~\eqref{eq:ouq_canonical}, the Dirac supports determine whether a term contributes, while the Dirac weights determine the magnitude of that contribution. The contribution of a joint term is on the order of \(O(t^m)\), where \(t\) denotes a representative weight. Thus, for large \(m\), contributions from terms with small weights become negligible. In other words, computational effort should focus on combinations with relatively large weights. As illustrated in Fig.~\ref{fig:inverse}(b), ITS automatically emphasizes these dominant terms (e.g., $t_1$ compared with $t_2$ in Fig.~\ref{fig:inverse}(b)). Dirac supports with larger weights are selected more frequently, while those with very small weights are rarely sampled.

To compute a single PoF value, the ITS approach requires $N_{\mathrm{ITS}}$ evaluations of the forward map and its associated indicator function. The accuracy of ITS also depends on the number of samples: larger sample sizes generally yield more accurate approximations. Therefore, in practice, $N_{\mathrm{ITS}}$ should be chosen so that the sampling error is small relative to the changes in the PoF values being compared during optimization. However, for low-dimensional problems, the cost of ITS may exceed that of directly evaluating the PoF using all Dirac combinations, since \((K(r+1))^m\) may be smaller than \(N_{\mathrm{ITS}}\). Therefore, we introduce a threshold to decide whether ITS should be used. Specifically, if \((K(r+1))^m < N_{\mathrm{ITS}}\), we compute the PoF exactly using the original Dirac measure. Otherwise, when \((K(r+1))^m \ge N_{\mathrm{ITS}}\), we employ ITS with \(N_{\mathrm{ITS}}\) samples to approximate the PoF. 

Moreover, to stabilize the search for extremal PoF values, we employ the Common Random Numbers (CRN) variance reduction technique~\cite{carlo2001monte, rossetti2023simulation}. A single large vector of uniform random numbers is generated once and reused for the sampling step of every candidate measure evaluated. This induces positive correlation between the PoF estimates of similar candidates,  reducing the variance of their differences and enabling a more reliable and efficient optimization process.

\subsection{Implementation Details}

The OUQ methodology computes the extremal PoF values by optimizing over combinations of Dirac measures subject to truncated moment constraints. Algorithm~\ref{alg:cm_ouq} outlines the procedure for evaluating the objective function, namely, the PoF, in the resulting unconstrained canonical-moment formulation. It is important to emphasize that the inputs to the optimization algorithm are the \(mK(r+1)\) free canonical moments associated with all input variables, each taking values in the interval \((0,1)\).

In this work, we employ DE~\cite{storn1997differential} to search over the space of free canonical moments. DE is a population-based, derivative-free, stochastic optimization algorithm that is particularly well suited for OUQ. It handles the non-smooth, discontinuous nature of the indicator function in the PoF computation, naturally enforces simple box constraints, and performs reliably on global optimization problems with multiple local optima~\cite{price2005differential}. At each iteration, DE mutates and recombines a population of candidate solutions, decodes each candidate into Dirac supports and weights via Algorithm~\ref{alg:cm_ouq}, evaluates the corresponding PoF, and selects the best-performing candidates for the next generation. The procedure is repeated until convergence or until a prescribed maximum number of iterations is reached. In our implementation, the population size is chosen to be between $20$ and $50$ times the number of optimization variables, and the maximum number of iterations is set between $100$ and $200$, which was found to provide a good balance between convergence robustness and computational efficiency.

\begin{algorithm}[!ht]
\caption{Calculation of PoF as a function of free canonical moments}
\label{alg:cm_ouq}
\algblock{Begin}{End}
\begin{algorithmic}[1]

\State \textbf{Input}: Free canonical moments \(p_{i,j,k+r+1}\); subdomains \(\mathcal{X}_{i,j}\); truncated moments \(M_{i,j,k-1}\) (\(i=1,\dots,m\), \(j=1,\dots,K\), \(k=1,\dots,r+1\)); forward mapping \(G\); inadmissible set \(\mathcal{Y}^{\mathrm{c}}\); number of inverse-transform samples \(N_{\mathrm{ITS}}\)

\Begin

\For{$i = 1,\dots,m$}
\For{$j = 1,\dots,K$}
    \begin{enumerate}[label=(\alph*), leftmargin=2cm,  topsep=0pt, itemsep=1pt]
        \item Map the subdomain \(\mathcal{X}_{i,j}\) to \([0,1]\) and compute the transformed moments \(\tilde{M}_{i,j,k-1}\) using Eq.~\eqref{eq:moment_01}
        \item Compute fixed canonical moments \(p_{i,j,k}\) (\(k=1,\dots,r+1\)) using Eq.~\eqref{eq:p1_p2_explicit} or the Q-D algorithm
        \item Augment the fixed canonical moments with free canonical moments \(p_{i,j,k}\) (\(k=r+2,\dots,2r+2\)) to obtain the full sequence \(p_{i,j,k}\) (\(k=1,\dots,2r+2\))
        \item Compute recurrence coefficients via Eq.~\eqref{eq:recurrence} and construct the Jacobi matrix \(J_{i,j}\) using Eq.~\eqref{eq:jacobi_matrix_unit}
        \item Eigendecompose \(J_{i,j}\) to obtain Dirac supports \(\xi_{i,j,k}\) and weights \(\omega_{i,j,k}\) using Eq.~\eqref{eq:weight}
        \item Recover original supports \(x_{i,j,k}\) and weights \(t_{i,j,k}\) using Eq.~\eqref{eq:atom_weight_recovery}
    \end{enumerate}
\EndFor
\EndFor

\If{$(K(r+1))^m < N_{\mathrm{ITS}}$}
    \State Compute PoF exactly using Eq.~\eqref{eq:ouq_canonical}
\Else
    \For{$i = 1,\dots,m$}
        \State Generate \(N_{\mathrm{ITS}}\) uniform random samples on \([0,1]\)
        \State Generate \(N_{\mathrm{ITS}}\) samples of the \(i\)-th input using Eq.~\eqref{eq:inverse_transform}
    \EndFor
    \State Estimate PoF using Eq.~\eqref{eq:mc_estimator}
\EndIf

\End

\State \textbf{Output}: \(\mathrm{PoF}\big(\{p\}^{\mathrm{free}}\big)\)

\end{algorithmic}
\end{algorithm}

\section{A Special Case: OUQ under Zeroth-Order Moment Constraints}
\label{sec:evidence}

The OUQ framework under general moment constraints offers a powerful and flexible approach for modeling uncertainty and quantifying its impact on system outputs. It is natural to compare this framework with other established uncertainty models. To this end, in this section we examine a special case in which the uncertain inputs are constrained only by zeroth-order moments. We show that, under this setting, the resulting OUQ formulation becomes equivalent to the evidence theory~\cite{dempster2008upper, shafer2020mathematical}.

We assume that the zeroth moments, i.e., the probabilities of the input lying in each subdomain, are known. As a result, the information set $\mathcal{A}$ can be written as
\begin{equation}
    \mathcal{A} = 
    \bigg\{
    \mu ~\bigg|~
    \mu = \bigotimes_{i=1}^{m} \mu_i,~
    \mu_i\!\left[X_i \in \mathcal{X}_{i,j}\right] = M_{i,j,0},~
    i = 1, \dots, m;\;
    j = 1, \dots, K
    \bigg\}.
\label{eq:infset0}
\end{equation}

\begin{figure}[pos=htbp]
\centering
\includegraphics[width=0.4\textwidth]{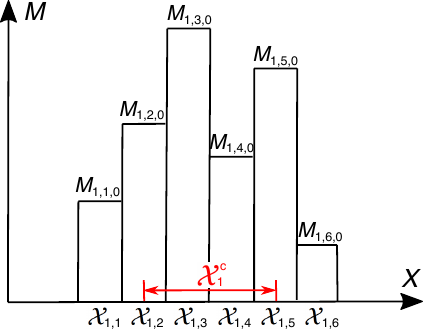}
\caption{Schematic illustration of the probability constraints for a one-dimensional random variable. The input domain is partitioned into subintervals $\mathcal{X}_{1,j}$, and the failure region $\mathcal{X}^\text{c}_1$ is highlighted for comparison. In this example, the upper and lower bounds on the PoF are $U=M_{1,2,0}+M_{1,3,0}+M_{1,4,0}+M_{1,5,0}$ and $L=M_{1,3,0}+M_{1,4,0}$, respectively.}
\label{fig:partition}
\end{figure}

A schematic illustration of partitioning the domain of a one-dimensional random input is shown in Fig.~\ref{fig:partition}. Our goal is to compute upper and lower bounds on $\mathbb{P}[Y \in \mathcal{Y}^{\mathrm{c}}]$ based on the available information in Eq.~\eqref{eq:infset0}. For any probability measure $\mu$, the PoF can be expressed as
\begin{equation}
	\mu\!\left[Y \in \mathcal{Y}^{\mathrm{c}}\right]
	= \mu\!\left[X \in \mathcal{X}^{\mathrm{c}}\right],
	\label{eq:muxy}
\end{equation}
where $\mathcal{X}^{\mathrm{c}}$ is the failure domain in the input space
\begin{equation}
	\mathcal{X}^{\mathrm{c}}
    \equiv 
    \big\{\, X~|~G(X) \in \mathcal{Y}^{\mathrm{c}} \big\}
    \subseteq \mathcal{X}.
    \label{eq:failuredomain}
\end{equation}

To exploit the structure of the input domain, we partition $\mathcal{X}$ using Eq.~(\ref{eq:union}) as
\begin{equation}
    \mathcal{X}
    =
    \prod_{i=1}^{m}\bigcup_{\alpha=1}^{K}\mathcal{X}_{i,\alpha}.
\end{equation}
Expanding the product of unions yields
\begin{equation}
    \mathcal{X}
    =
    \bigcup_{\alpha_1=1}^{K}\cdots\bigcup_{\alpha_m=1}^{K}
    \mathcal{R}_{\alpha},
    \label{eq:domaindecomp}
\end{equation}
where each $\mathcal{R}_{\boldsymbol{\alpha}}$ is a rectangular partition element defined by
\begin{equation}
    \mathcal{R}_{\alpha}
    :=
    \prod_{i=1}^{m}\mathcal{X}_{i,\alpha_i},
\end{equation}
with multi-index $\alpha=(\alpha_1,\dots,\alpha_m)\in\{1,\dots,K\}^m$. The collection $\mathcal{R}_\alpha$ $(\alpha\in\{1,\dots,K\}^m)$ forms a partition of $\mathcal{X}$. Based on this partition, the upper and lower bounds on the PoF over the information set $\mathcal{A}$ follow as
\begin{subequations}
\label{eq:bounds0}
\begin{equation}
  	U(\mathcal{A})
    =
    \sum_{\alpha_1=1}^{K}\cdots\sum_{\alpha_m=1}^{K}
    \left(
    \prod_{i=1}^{m} M_{i,\alpha_i,0}
    \right)
    \mathbf{1}\!\left[
    \mathcal{X}^{\mathrm c}\cap\mathcal{R}_{\boldsymbol{\alpha}}\neq\varnothing
    \right],
\end{equation}    
and
\begin{equation}
    L(\mathcal{A})
    =
    \sum_{\alpha_1=1}^{K}\cdots\sum_{\alpha_m=1}^{K}
    \left(
    \prod_{i=1}^{m} M_{i,\alpha_i,0}
    \right)
    \mathbf{1}\!\left[
    \mathcal{X}^{\mathrm c}\cap\mathcal{R}_{\boldsymbol{\alpha}}=\mathcal{R}_{\alpha}
    \right].
\end{equation}
\end{subequations}

A detailed derivation of Eq.~\eqref{eq:bounds0} is provided in Appendix~\ref{sec:deri}. In words, the upper bound sums the probabilities of all subdomains $\mathcal{R}_{\alpha}$ that intersect the failure region $\mathcal{X}^{\mathrm{c}}$, whereas the lower bound sums the probabilities of those subdomains entirely contained in the failure region. These quantities correspond to the \emph{plausibility} and \emph{belief} functions in the evidence theory~\cite{dempster2008upper, shafer2020mathematical}. An example of these bounds is illustrated in Fig.~\ref{fig:partition}.

\section{Numerical Examples}
\label{sec:examples}

In this section, we present five numerical examples to evaluate the proposed subdomain-based OUQ methodology across progressively more challenging settings. Case 1 considers one-dimensional baseline problems with identity mappings and four basic distributions, for which the reference PoFs are available by quadrature and can therefore be used to verify the computed OUQ bounds. Case 2 examines a five-dimensional smooth nonlinear problem to assess the canonical-moment formulation and the use of ITS in a moderate-dimensional setting. Case 3 introduces a two-dimensional non-smooth four-branch benchmark, including both a standard-probability and a low-probability subcase, in order to test the applicability of the framework beyond smooth performance functions. Case 4 considers an eight-dimensional rare-event roof-truss problem to investigate the behavior of the method in a high-dimensional tail-probability regime, where active-dimension refinement is employed to control computational cost. Finally, Case 5 presents a ten-dimensional ballistic impact problem, in which a surrogate model is used to make large-scale OUQ computations tractable and additional studies are carried out under varying failure thresholds. Taken together, these cases assess the proposed framework across smooth and non-smooth mappings, low- and high-dimensional settings, and both moderate- and rare-event failure regimes.

In all examples, the forward mapping $G$ produces a single output with a maximum allowable threshold $Y^{\mathrm{c}}$, and the inadmissible set is therefore $\mathcal{Y}^{\mathrm{c}} = [\,Y^{\mathrm{c}}, +\infty)$. Although the true distributions of inputs and their corresponding PDFs are specified for each example, in the OUQ setting we assume that these PDFs are unknown. The truncated moments required by OUQ are computed using the true PDFs solely for the purpose of benchmarking. Moreover, we consider only equally partitioned input domains, though the methodology can be straightforwardly extended to accommodate non-equally partitioned domains.

\subsection{Case 1: One-Dimensional Identity Functions Problems}
\label{sec:1D}

We consider the case in which the mapping $G$ is an identity function defined over a bounded one-dimensional domain, i.e., $G(X) = X$ for $X \in [a,b]$. Four basic probability distributions are examined: the normal, uniform, Weibull, and bimodal normal-mixture distributions. Each distribution is further truncated to the input domain $[a,b]$ according to
\begin{equation}
    f^*(x) = \frac{f(x)}{F(b) - F(a)} \, \mathbf{1}\big[x \in [a,b]\big],
    \label{eq:trunPDF}
\end{equation}
where $f(x)$ and $F(x)$ denote the PDF and CDF of the original untruncated distribution. Table~\ref{tab:1-d_cases} summarizes the original PDF $f(x)$, the input domain $[a,b]$, the output threshold $Y^{\mathrm{c}}$, and the reference PoF. Fig.~\ref{fig:1d_true_pdf} illustrates the truncated PDFs and the corresponding failure domains for the four distributions. In this one-dimensional example, ITS is not employed. Instead, the PoF in OUQ is computed exactly using Eq.~\eqref{eq:ouq_canonical}.

\begin{table}[width=\linewidth,cols=5,pos=h]
\caption{Summary of the four one-dimensional problems.}
\label{tab:1-d_cases}
\begin{tabular*}{\tblwidth}{@{} L L L L L @{}}
\toprule
Distribution & $f(x)$ & $[a,b]$ & $Y^{\mathrm{c}}$ & reference PoF \\
\midrule
Normal                  & $\phi(x;0,1)$ & $[-5,5]$ & $0.7$ & $0.2420$ \\
Uniform                 & $0.1$     & $[-5,5]$ & $1.7$ & $0.3300$ \\
Weibull                &  $\dfrac{k}{\lambda}\!\left(\dfrac{x}{\lambda}\right)^{k-1}\!e^{-(x/\lambda)^k}$ with $k{=}1.5$ and $\lambda{=}2$ & $[0,10]$ & $3.0$ & $0.1593$ \\
Bimodal normal-mixture  &  $0.35\,\phi(x;-2,1)+0.65\,\phi(x;2,1)$ & $[-5,5]$ & $1.3$ & $0.4927$ \\
\bottomrule
\end{tabular*}
\vspace{2pt}
\parbox{\tblwidth}{%
  \footnotesize\normalfont\raggedright
  \emph{Notes:} (i) $\phi(x;m,s)=\frac{1}{\sqrt{2\pi}\,s}
  \exp\!\big(-\tfrac{(x-m)^2}{2s^2}\big)$. 
  (ii) The reference PoF is calculated by quadrature.%
}
\end{table}

\begin{figure}[pos=htbp]
\centering
\includegraphics[width=0.9\textwidth]{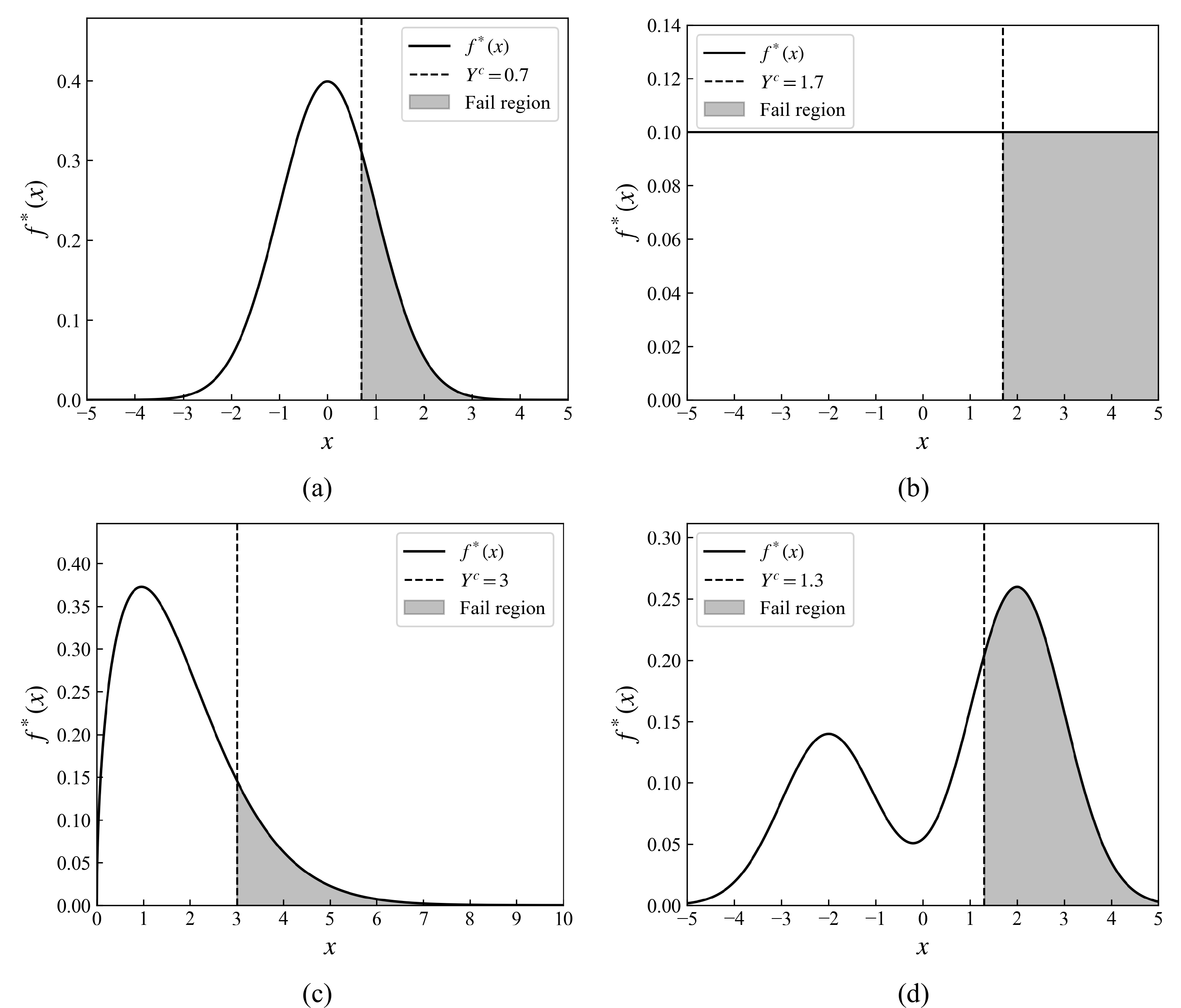}
\caption{Truncated PDFs and corresponding failure regions for the four considered distributions: (a) truncated normal, (b) uniform, (c) truncated Weibull, and (d) truncated bimodal normal-mixture.}
\label{fig:1d_true_pdf}
\end{figure}

\begin{figure}[pos=htbp]
\centering
\includegraphics[width=0.9\textwidth]{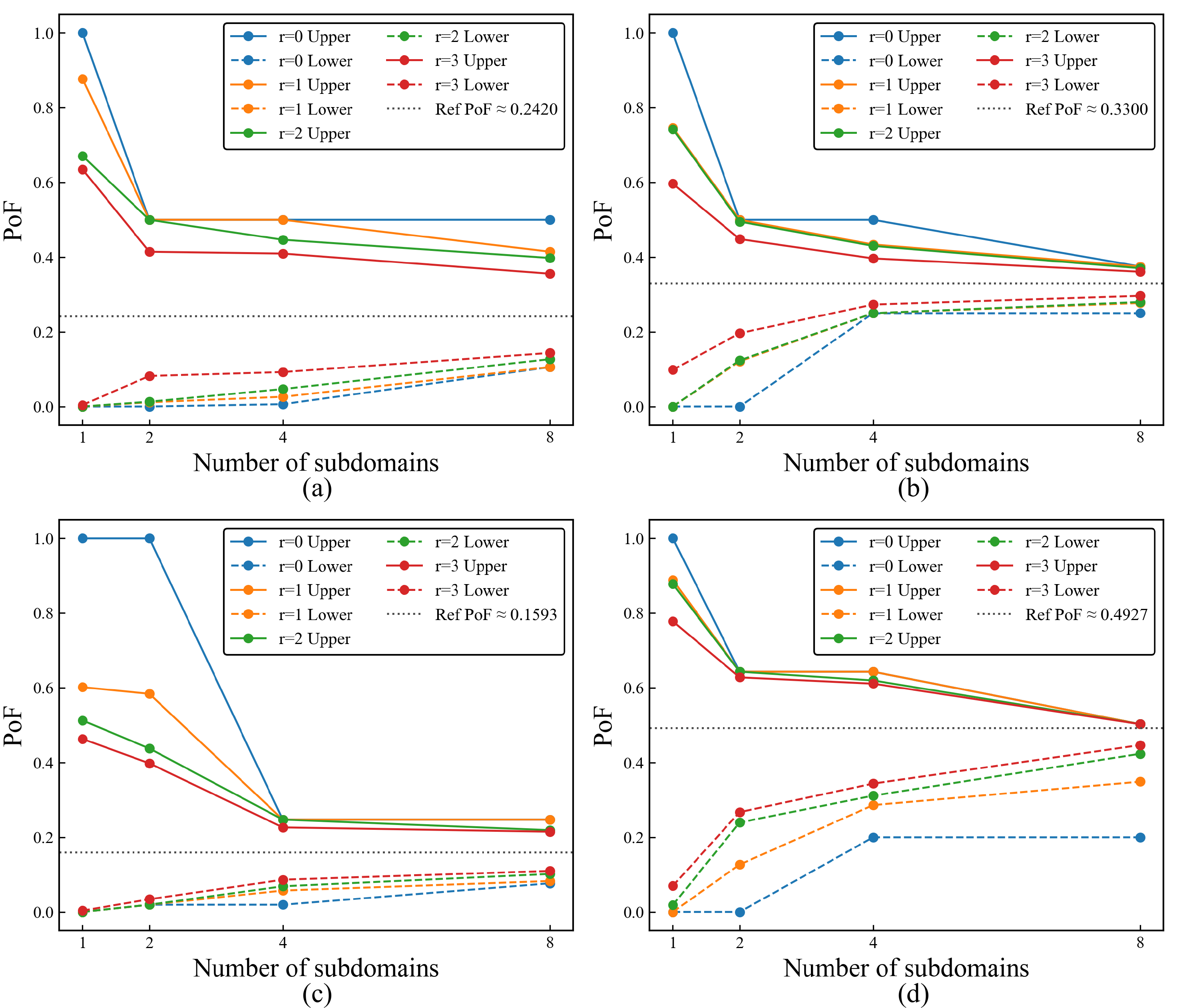}
\caption{Optimal PoF bounds as functions of the number of subdomains and moment constraints: (a) truncated normal, (b) uniform, (c) truncated Wei-bull, and (d) truncated bimodal normal-mixture.} 
\label{fig:1d_pof_bounds}
\end{figure}

The input domains of the four test cases are partitioned into $K = 1$, $2$, $4$, and $8$ subdomains. We consider moment constraints up to the third order, i.e., $r = 0$, $1$, $2$, and $3$. Recall that the case $r=0$ corresponds to the setting of the evidence theory. Fig.~\ref{fig:1d_pof_bounds} presents the reference PoF, along with its optimal upper and lower bounds obtained from the proposed OUQ framework. The numerical values of these optimal bounds are summarized in Tables~\ref{tab:ouq_1d_normal}-\ref{tab:ouq_1d_bimodal} in Appendix~\ref{sec:tables}.

A clear trend observed in Fig.~\ref{fig:1d_pof_bounds} is that the upper bound $U$ decreases and the lower bound $L$ increases as either the number of subdomains or the number of moment constraints increases. Consequently, the gap between $U$ and $L$ shrinks as additional uncertainty information is incorporated. Importantly, for all test cases, the reference PoF is consistently bracketed by the two optimal bounds. Since these bounds are optimal given the available information, further tightening requires incorporating additional knowledge about the uncertainties. As expected, in the limit of sufficiently many subdomains or higher-order moment constraints, both $U$ and $L$ should converge closely to the reference PoF. Thus, enriching the available uncertainty information enhances the tightness of the PoF bounds.

More specifically, the four distributions yield different optimal bounds for the same values of $K$ and $r$, except in certain extreme cases where $U = 1$ or $L = 0$, which correspond to scenarios with minimal uncertainty information and therefore uninformative bounds. The tightest bounds are obtained when the uncertainty information is richest, i.e., at $K = 8$ and $r = 3$. The resulting optimal bounds are $[L,\,U] = [0.1436,\,0.3556]$ for the normal distribution, $[0.2963,\,0.3608]$ for the uniform distribution, $[0.1098,\,0.2148]$ for the Weibull distribution, and $[0.4467,\,0.5027]$ for the bimodal normal-mixture distribution. These intervals deviate from their respective reference PoFs by $[-40.66\%,\, +46.94\%]$, $[-10.21\%,\, +9.33\%]$, $[-31.05\%,\, +34.87\%]$, and $[-9.34\%,\, +2.03\%]$. Therefore, among the four cases, the uniform and bimodal normal-mixture distributions show the greatest sensitivity to additional uncertainty information, the normal distribution shows the least sensitivity, and the Weibull distribution exhibits intermediate behavior.

An additional observation in Fig.~\ref{fig:1d_pof_bounds} is that increasing the number of subdomains or moment constraints does not always strictly decrease the upper bound or increase the lower bound. This behavior arises from the simplicity of the identity mapping considered in this example. When the additional uncertainty information pertains to regions of the input space that lie entirely within either the safe domain or the failure domain, it does not affect the PoF bounds, which are primarily governed by the probability mass near the boundary between the safe and failure domains, as illustrated in Fig.~\ref{fig:1d_true_pdf}.

\begin{figure}[pos=htbp]
\centering
\includegraphics[width=0.9\textwidth]{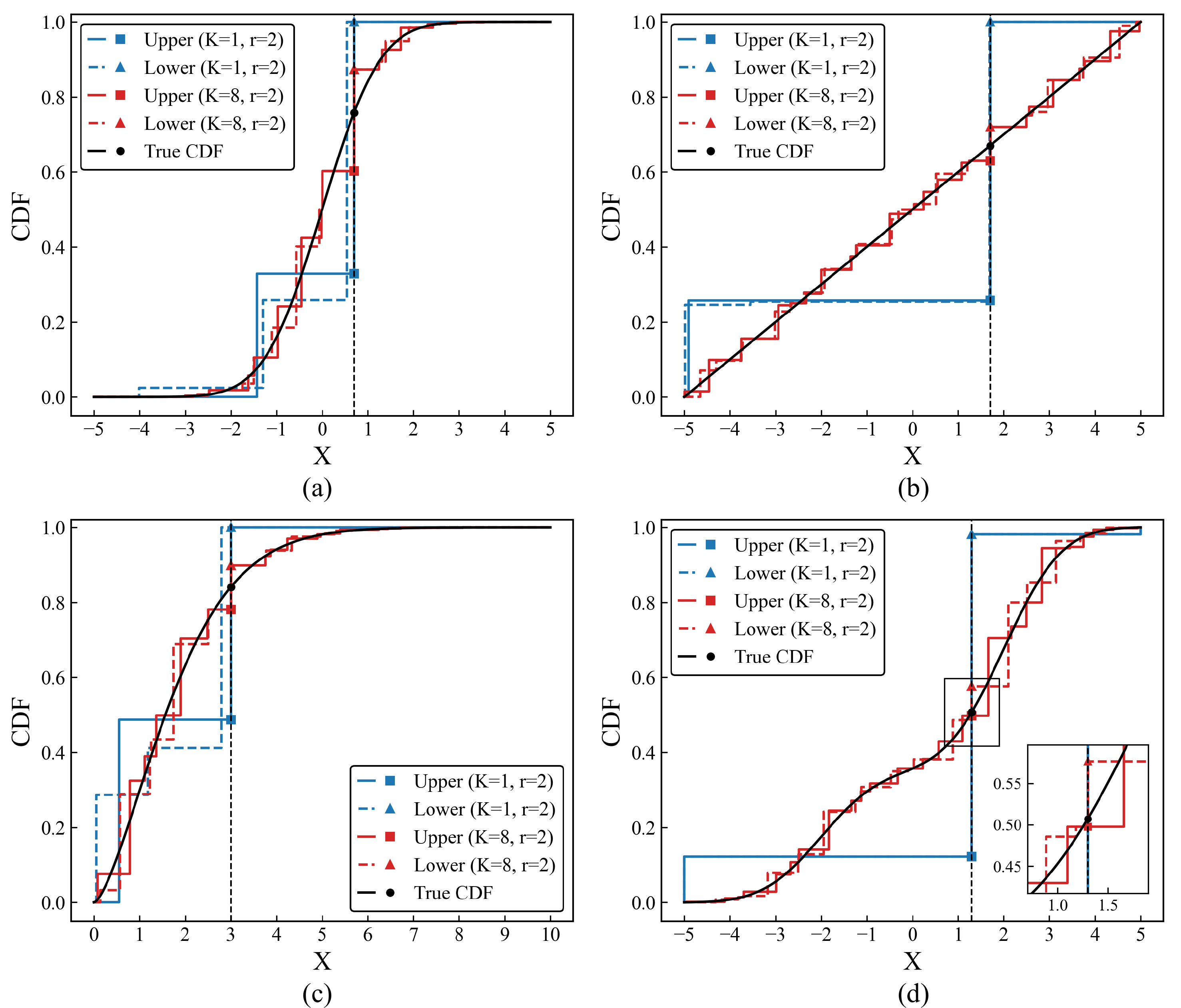}
\caption{Comparison between the Dirac CDFs at the optimal bounds and the corresponding true CDFs: (a) truncated normal, (b) uniform, (c) truncated Wei-bull, and (d) truncated bimodal normal mixture.}
\label{fig:1d_cdf_compare}
\end{figure}

To illustrate how uncertainty information influences the extremal measures, Fig.~\ref{fig:1d_cdf_compare} compares the Dirac CDFs obtained at the optimal bounds for the cases $K=1$ and $K=8$ (both with $r=2$) against the corresponding true probability CDFs. In these CDF plots, the supports of the Dirac measures appear as the locations of the step changes, while the height of each step represents the associated Dirac weight. As expected, as more information about the true measure is incorporated, the optimal-bound CDFs increasingly resemble the true CDF in all four cases. More uncertainty information leads to Dirac CDFs with more steps, yielding extremal measures that more closely approximate the true probability measures. It is also important to note, as shown in Fig.~\ref{fig:1d_cdf_compare}, that the obtained Dirac measures are extremal only with respect to the targeted PoF. They do \emph{not} provide optimal bounds on the CDF over its entire domain, in contrast to the behavior typically observed in p-box formulations~\cite{ferson2004arithmetic}.

\subsection{Case 2: A Five-Dimensional Nonlinear Smooth Problem}
\label{sec:5D}

We next consider a more complex five-dimensional nonlinear problem. The forward mapping $G$ is defined as
\begin{equation}
    G(X) = 0.7 X_1 + 0.35 X_2^2 - 0.25 X_3 X_4 + \frac{0.2 X_3^2}{1 + X_3^2} + \sin(X_5),
\end{equation}
where $X \equiv (X_1, \dots, X_5)$. The failure threshold for the output is $Y^{\mathrm{c}} = 0.8$. Each random input $X_i$ ($i=1,\dots,5$) is assumed to follow a truncated normal distribution computed using Eq.~(\ref{eq:trunPDF}). The input domains and the corresponding untruncated PDFs are summarized in Table~\ref{tab:5-d_case_input}.

\begin{table}[width=0.8\linewidth,cols=3,pos=h]
\caption{Summary of the inputs for the five-dimensional nonlinear problem.}
\label{tab:5-d_case_input}
\begin{tabular*}{\tblwidth}{@{} L L L @{}}
\toprule
Input & $f(x)$ & $[a,b]$ \\
\midrule
$X_1$ & $\phi(x;-1, 1.2)$ & $[-4,3]$  \\
$X_2$ & $\phi(x;0.5,0.7)$ & $[-2,4]$  \\
$X_3$ & $\phi(x;1.6,0.9)$ & $[-3.5,5]$  \\
$X_4$ & $\phi(x;-0.8,1.5)$ & $[-5,2]$  \\
$X_5$ & $\phi(x;0,0.6)$ & $[-2.5,2.5]$  \\
\bottomrule
\end{tabular*}
\vspace{2pt}
\parbox{\tblwidth}{%
  \footnotesize\normalfont\raggedright
  \emph{Note:} $\phi(x;m,s)=\frac{1}{\sqrt{2\pi}\,s}
  \exp\!\big(-\tfrac{(x-m)^2}{2s^2}\big)$. 
}
\end{table}

The input domains of the five variables are partitioned into $K = 1$, $2$, $4$, and $8$ subdomains simultaneously. We consider moment constraints up to the second order, i.e., $r = 0$, $1$, and $2$. Recall that a single exact evaluation of the PoF requires $(K(r+1))^{m}$ evaluations of the forward mapping. In this example, $m = 5$, so the exact computation becomes extremely expensive and even infeasible for large values of $K$ and $r$. Consequently, in these computationally prohibitive cases we employ the ITS method described in Section~\ref{sec:mc_sampling} to estimate the PoF within the OUQ framework. To determine an appropriate number of inverse-transform samples, we first examine how the PoF varies with the number of samples $N_{\mathrm{ITS}}$ for two representative cases: $K=4$ and $K=8$ both with $r=2$. For this comparison, we fix the free canonical moments $\{p\}^{\mathrm{free}}$ and hence the corresponding Dirac measures, and compute the resulting PoF.

\begin{figure}[pos=htbp]
\centering
\includegraphics[width=0.9\textwidth]{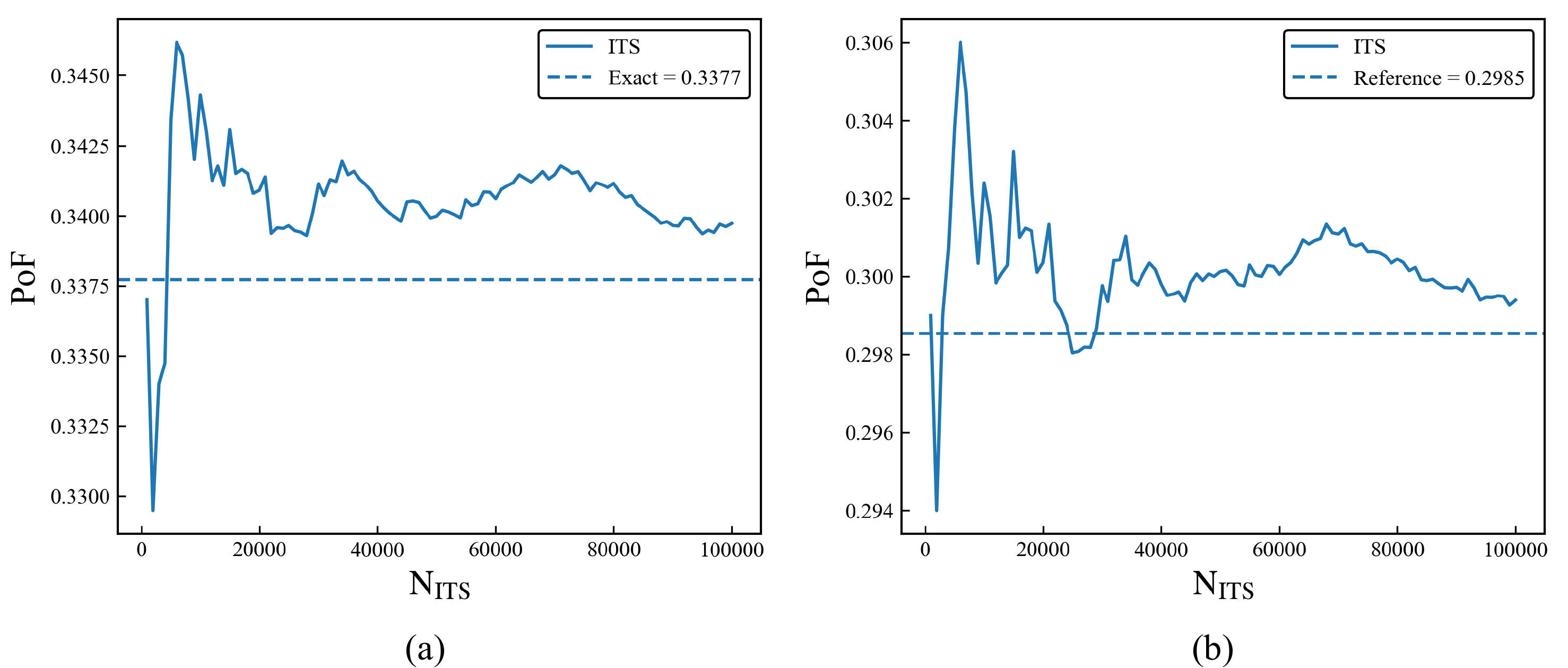}
\caption{Estimation of PoF using ITS: (a) $K=4$, $r=2$; and (b) $K=8$, $r=2$.}
\label{fig:5d_ouq_mc_exact_pof}
\end{figure}

Fig.~\ref{fig:5d_ouq_mc_exact_pof} shows the PoF as a function of $N_{\mathrm{ITS}}$ for these two cases. For the first case, Fig.~\ref{fig:5d_ouq_mc_exact_pof}(a), we also compute the exact PoF using Eq.~(\ref{eq:ouq_canonical}), which requires $248{,}832$ evaluations of the forward mapping. As expected, the estimate obtained via ITS converges to the exact value as $N_{\mathrm{ITS}}$ increases. In particular, when $N_{\mathrm{ITS}} > 4 \times 10^4$, the approximation error falls below $1\%$. For the second case, Fig.~\ref{fig:5d_ouq_mc_exact_pof}(b), computing the exact PoF would require $7{,}962{,}624$ forward evaluations, which is infeasible given our computational resources. Instead, we compute a high-fidelity reference value of the PoF using a large number of samples, specifically $N_{\mathrm{ITS}} = 2 \times 10^6$. The ITS estimate again exhibits convergence behavior. Once $N_{\mathrm{ITS}} > 2 \times 10^4$, the variation in the estimated PoF is less than $1\%$. Based on these observations, we choose $N_{\mathrm{ITS}} = 5 \times 10^4$ for all the computations in this and the following examples and apply the selection criterion described in Section~\ref{sec:mc_sampling} and Algorithm~\ref{alg:cm_ouq}. This choice balances accuracy and computational feasibility, reducing the computational cost by up to two orders of magnitude while maintaining acceptable accuracy. Fig.~\ref{fig:5d_ouq_mc_exact_pof}(b) also illustrates a practical guideline for selecting \(N_{\mathrm{ITS}}\): when ITS is used, a convergence study should be performed, and the final sample size should be chosen based on both convergence behavior and available computational resources.

\begin{figure}[pos=htbp]
\centering
\includegraphics[width=0.5\textwidth]{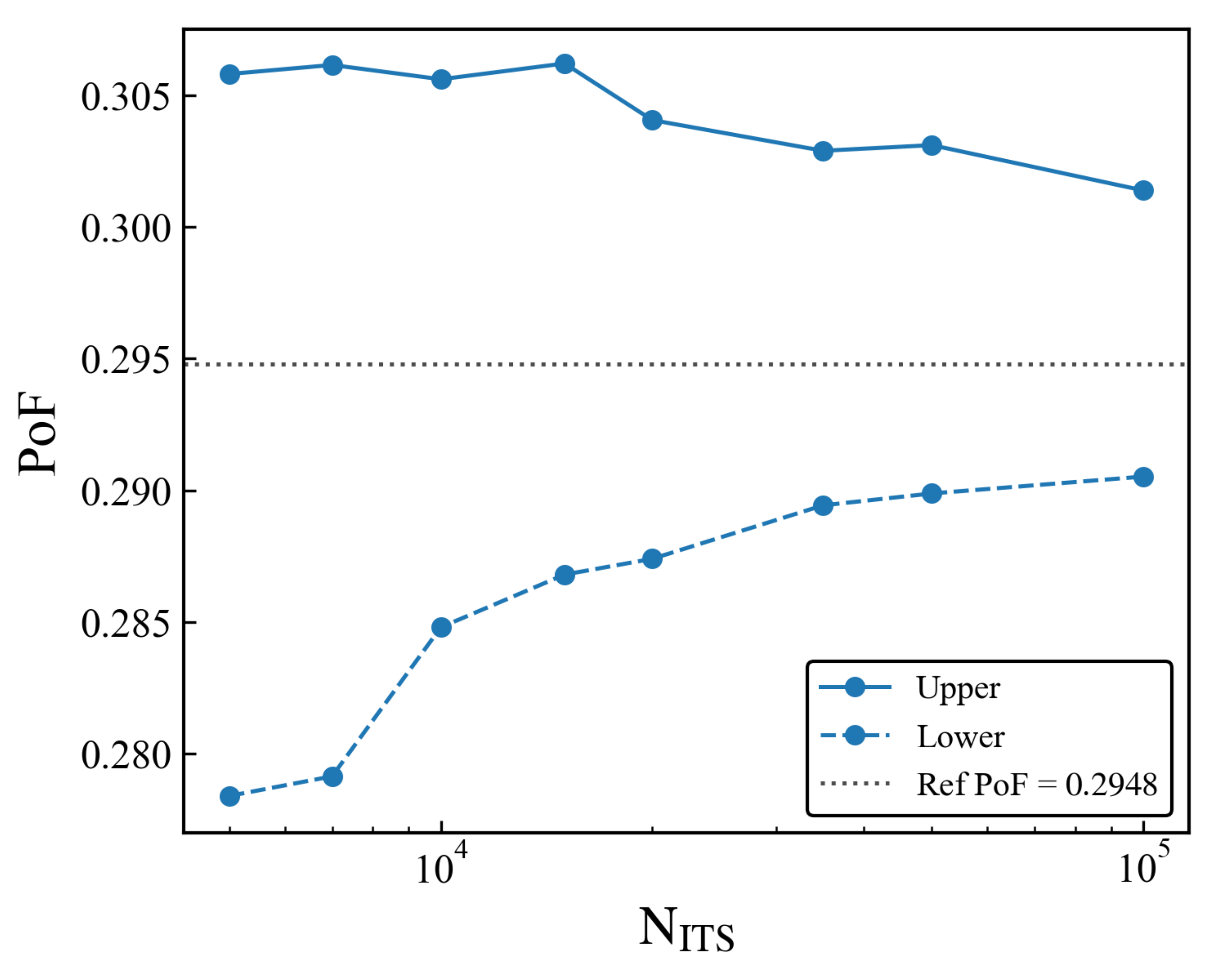}
\caption{Sensitivity of OUQ bounds to the ITS sample size $N_{\mathrm{ITS}}$ for the representative case $K=8$ and $r=2$.}
\label{fig:ouq_nits}
\end{figure}

We further examine the sensitivity of the OUQ bounds to the ITS sample size for the representative case $K=8$ and $r=2$, using $N_{\mathrm{ITS}} \in \{5\times10^3,\, 7\times10^3,\, 10^4,\, 1.5\times10^4,\, 2\times10^4,\, 3.5\times10^4,\, 5\times10^4,\, 10^5\}$. Fig.~\ref{fig:ouq_nits} reports the resulting upper and lower bounds as functions of $N_{\mathrm{ITS}}$, with the corresponding values listed in Table~\ref{tab:ouq_nits_scan} in Appendix~\ref{sec:tables}. As $N_{\mathrm{ITS}}$ increases, the upper bound decreases while the lower bound increases. The lower bound is more sensitive to $N_{\mathrm{ITS}}$. Both bounds stabilize by $N_{\mathrm{ITS}}=5\times10^4$. In particular, increasing $N_{\mathrm{ITS}}$ from $5\times10^4$ to $10^5$ changes the upper bound from $0.3031$ to $0.3014$ (a $0.56\%$ decrease) and the lower bound from $0.2899$ to $0.2905$ (a $0.21\%$ increase). In contrast, the computational cost approximately doubles, since it scales linearly with $N_{\mathrm{ITS}}$. These results indicate that, for this representative difficult case, sample sizes below $5\times10^4$ still under-resolve the extremal measures, whereas increasing $N_{\mathrm{ITS}}$ beyond $5\times10^4$ yields only marginal changes relative to the additional cost. Accordingly, we adopt $N_{\mathrm{ITS}}=5\times10^4$ as the default sample size.

\begin{figure}[pos=htbp]
\centering
\includegraphics[width=1\textwidth]{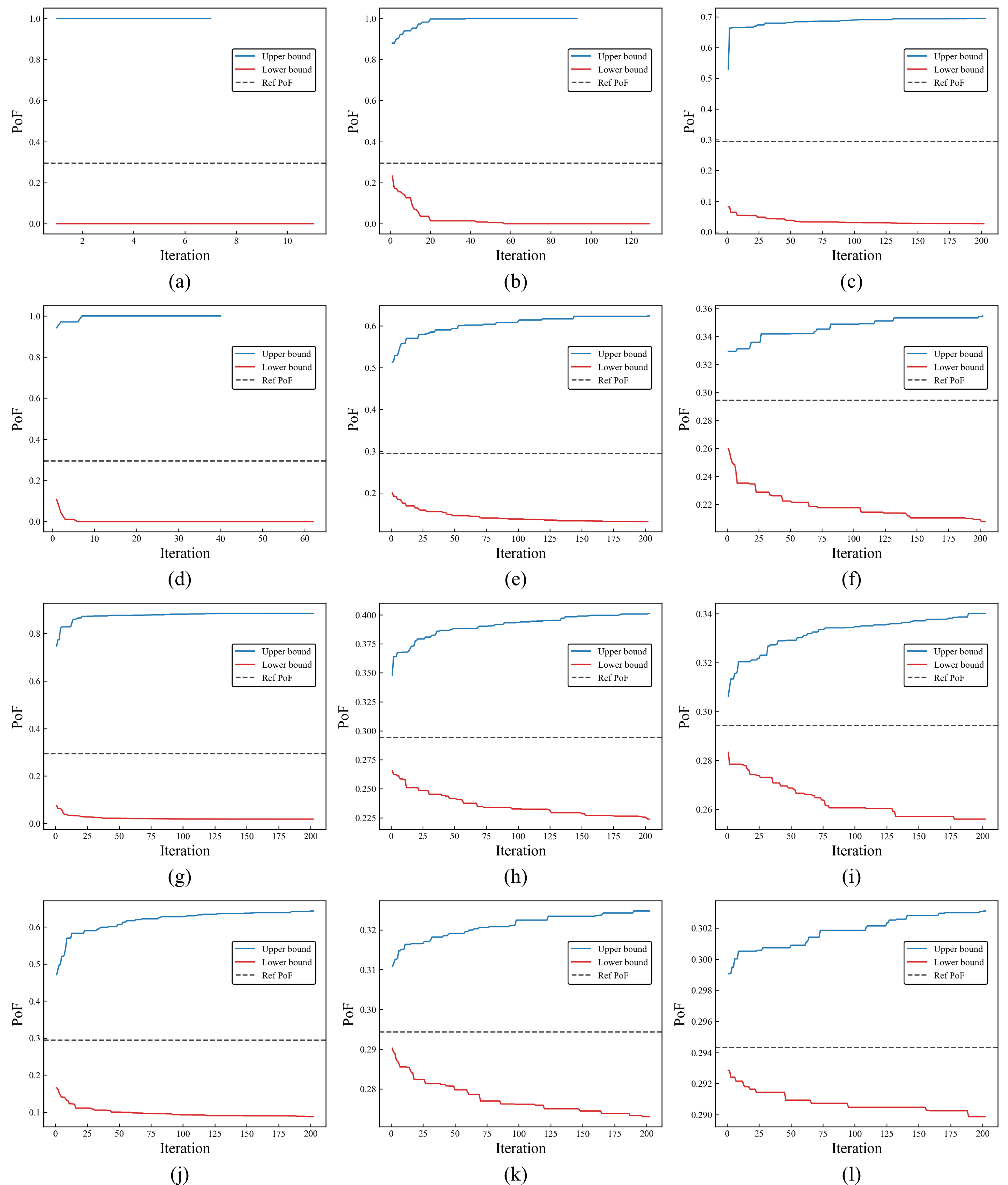}
\caption{DE convergence under different situations: (a) $K=1$, $r=0$; (b) $K=1$, $r=1$; (c) $K=1$, $r=2$; (d) $K=2$, $r=0$; (e) $K=2$, $r=1$; (f) $K=2$, $r=2$; (g) $K=4$, $r=0$; (h) $K=4$, $r=1$; (i) $K=4$, $r=2$; (j) $K=8$, $r=0$; (k) $K=8$, $r=1$; and (l) $K=8$, $r=2$.}
\label{fig:5d_ouq_bound_converge}
\end{figure}

Fig.~\ref{fig:5d_ouq_bound_converge} shows the best objective value at each generation during the DE iterations, together with the reference PoF estimated via Monte Carlo sampling using $2 \times 10^6$ samples and the underlying truncated normal distributions. Remarkably, despite the discreteness, high dimensionality, and strong nonlinearity of the OUQ problem, all DE runs converge to stable optima within $200$ generations. In particular, the objective values change by less than $1\%$ over the final $100$ iterations. This rapid convergence is largely attributable to the relatively large population size employed, set to $50$ times the number of optimization variables in this example. Across all situations, the reference PoF remains consistently bracketed by the optimal bounds throughout the iterations.

\begin{figure}[pos=htbp]
\centering
\includegraphics[width=0.5\textwidth]{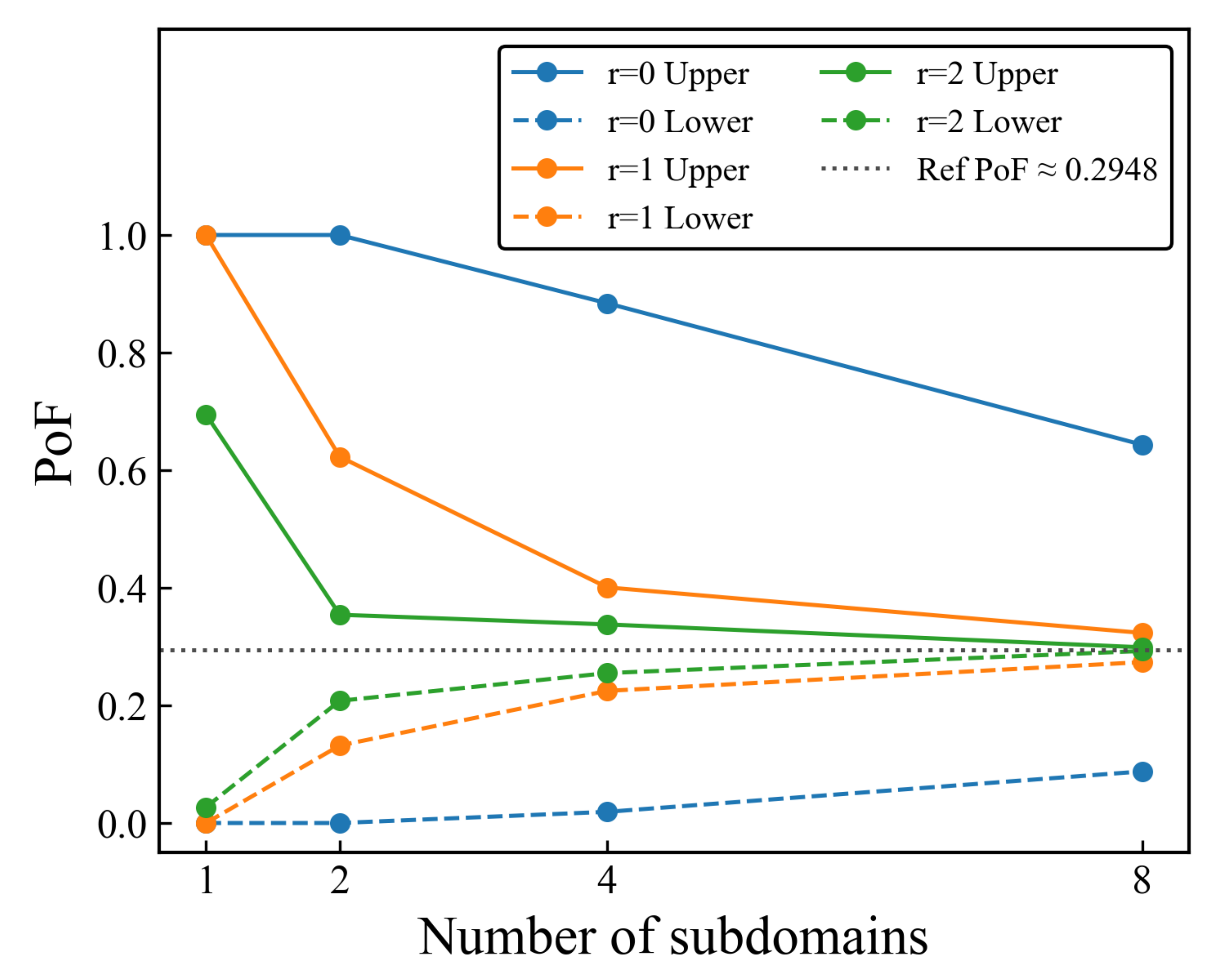}
\caption{Optimal PoF bounds for the five-dimensional nonlinear problem.}
\label{fig:5d_ouq_bound_compare}
\end{figure}

Fig.~\ref{fig:5d_ouq_bound_compare} compares the optimal bounds with the reference PoF. The numerical values of the optimal bounds are summarized in Table~\ref{tab:ouq_5d_bounds} in Appendix~\ref{sec:tables}. As expected, increasing either the number of subdomains or the number of moment constraints decreases the upper bound $U$ and increases the lower bound $L$, thereby tightening the bound interval. In the cases with the least uncertainty information---namely $K=1$, $r=0$; $K=1$, $r=1$; and $K=2$, $r=0$---the optimal bounds are $[L,\,U] = [0,\,1]$, which are uninformative. In contrast, in the case with the greatest amount of information, i.e., $K=8$, $r=2$, the optimal bounds are $[L,\,U] = [0.2925,~0.2988]$, deviating from the reference PoF $0.2948$ by only $[-0.61\%,~+1.53\%]$.

Another notable observation from Fig.~\ref{fig:5d_ouq_bound_compare} is that both the upper and lower bounds are more sensitive to the number of moment constraints than to the number of subdomains. This conclusion can be drawn by comparing the optimal bounds for the cases $K=4$, $r=0$ and $K=2$, $r=1$. Since the dimension of the OUQ optimization problem is $mK(r+1)$, both cases have the same computational cost in terms of evaluating the objective functions. However, the resulting bounds differ substantially. The bounds for $K=2$, $r=1$ are $[L,~U] = [0.1319,~0.6229]$, which are significantly tighter than those for $K=4$, $r=0$, namely $[L,~U] = [0.0190,~0.8839]$. A similar conclusion follows from comparing the cases $K=8$, $r=0$ and $K=4$, $r=1$. These findings suggest that, if the goal is to improve the tightness of the optimal bounds---and therefore the efficiency of certification---priority should be given to acquiring additional information in the form of higher-order local moment constraints rather than simply increasing the number of subdomains.

\subsection{Case 3: A Two-Dimensional Non-Smooth Four-Branch Problem}
\label{sec:2D}

We next consider a two-dimensional non-smooth benchmark problem based on the classical four-branch function~\cite{moustapha2024reliability}, which is widely used in structural reliability studies as a representative multi-branch failure system. In contrast to the smooth mappings considered in the preceding examples, the present benchmark is non-smooth because the system response is defined as the minimum of four branch functions, and hence exhibits branch switching along the interfaces where different branches become active. This feature makes the example well suited for examining the applicability of the proposed subdomain-based OUQ framework beyond smooth performance functions.

The four-branch problem is defined on a two-dimensional input vector $X \equiv (X_1,X_2)$, where $X_1$ and $X_2$ are independent truncated standard
normal random variables supported on $[-5,5]$. The system performance function is
\begin{equation}
G(X) = \min\{g_1(X),g_2(X),g_3(X),g_4(X)\},
\end{equation}
where the four branch functions are
\begin{equation}
\left\{\begin{array}{l}
g_1(X)=3+0.1\left(X_1-X_2\right)^2-\frac{1}{\sqrt{2}}\left(X_1+X_2\right),\\
g_2(X)=3+0.1\left(X_1-X_2\right)^2+\frac{1}{\sqrt{2}}\left(X_1+X_2\right),\\
g_3(X)=\left(X_1-X_2\right)+\frac{P}{\sqrt{2}},\\
g_4(X)=\left(X_2-X_1\right)+\frac{P}{\sqrt{2}},
\end{array}\right.
\label{eq:2d_nonsmooth_prob}
\end{equation}
with $P=6$ as the standard reference setting.

Two subcases are considered. In the first case, we set $Y^{\mathrm{c}}=0$, which corresponds to the standard four-branch failure condition. In the second case, we increase the critical threshold to $Y^{\mathrm{c}}=2$, thereby shrinking the failure domain while preserving the same nonsmooth, multi-branch structure. This second setting is introduced to examine the behavior of the proposed method in a low-probability (rare-event) regime without modifying the underlying benchmark function. The reference PoF for each subcase is estimated by direct Monte Carlo simulation using $2\times10^7$ samples. The resulting reference PoFs are approximately $4.51\times10^{-3}$ and $1.16\times10^{-5}$ for the first and second cases, respectively.

\begin{figure}[pos=htbp]
\centering
\includegraphics[width=0.9\textwidth]{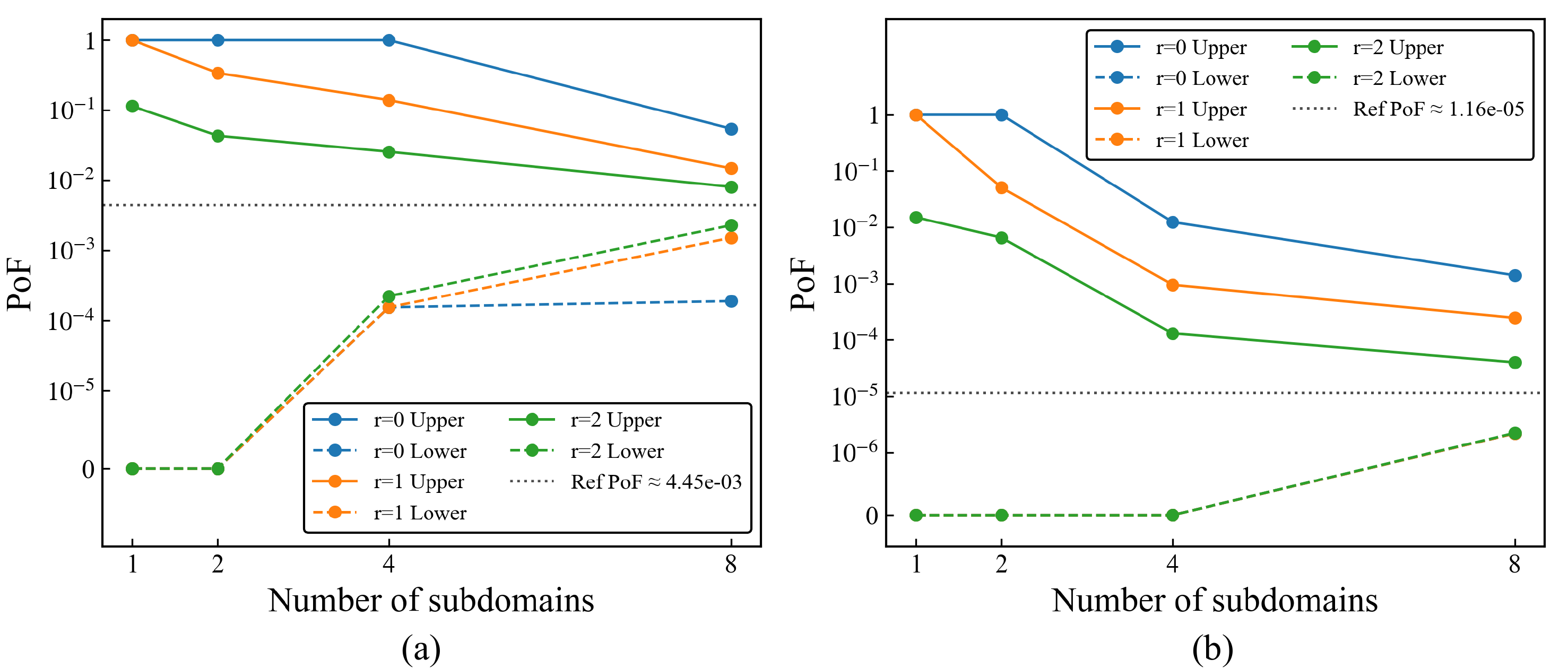}
\caption{Optimal PoF bounds for the two-dimensional non-smooth four-branch problem: (a) baseline case with $Y^{\mathrm{c}}=0$; and (b) low-probability case with $Y^{\mathrm{c}}=2$.}
\label{fig:2d_ouq_compare}
\end{figure}

We partition the domain $[-5,5]$ of both inputs, and Fig.~\ref{fig:2d_ouq_compare} compares the resulting optimal bounds with the reference PoF. The PoF values on the y-axis are plotted on a logarithmic scale. Detailed bound values are reported in Appendix Tables~\ref{tab:ouq_case3a_fourbranch} and~\ref{tab:ouq_case3b_fourbranch}. The results for both subcases show the same overall trend observed in the previous examples: increasing either the number of subdomains or the order of the moment constraints systematically tightens the PoF bounds. When only limited information is imposed, such as $K=1$ with $r=0$ or $r=1$, the bounds in both subcases remain nearly vacuous. As more local information is incorporated, the upper bound decreases substantially, by more than three orders of magnitude, while the lower bound increases.

For example, Fig.~\ref{fig:2d_ouq_compare}(a) shows that, for $Y^{\mathrm{c}}=0$, $K=8$, and $r=2$, the computed bounds are approximately $[L,\,U]=[2.29\times10^{-3},\,8.09\times10^{-3}]$, which bracket the reference PoF $4.51\times10^{-3}$ within a relatively narrow interval. For the low-probability rare-event case with $Y^{\mathrm{c}}=2$, Fig.~\ref{fig:2d_ouq_compare}(b) shows that the bounds at $K=8$ and $r=2$ are approximately $[L,\,U]=[2.23\times10^{-6},\,4.00\times10^{-5}]$, which still bracket the reference PoF $1.16\times10^{-5}$. With the richest uncertainty information considered here, both upper bounds are reduced to the same order of magnitude as their corresponding reference PoFs. These results confirm that the proposed subdomain OUQ framework can effectively handle non-smooth, multi-branch performance functions and remains feasible and informative for low-probability rare events in low-dimensional settings. A comparison between the two cases from $K=2$ to $K=4$ further suggests that the lower bound in the first case is more sensitive to the number of subdomains, whereas the upper bound in the second case is more sensitive to subdomain refinement.

\subsection{Case 4: An Eight-Dimensional Rare-Event Roof-Truss Problem}
\label{sec:8D}

To further examine the applicability of the proposed subdomain-based OUQ framework to high-dimensional rare-event problems, we consider an eight-dimensional roof-truss benchmark~\cite{moustapha2024reliability}, as shown in Fig.~\ref{fig:8d_roof_truss_schematic}. The structure consists of an assembly of concrete and steel members and is subjected to a distributed load on the top beams. This distributed loading is equivalently represented by point loads of magnitude $ql/4$ applied at nodes $D$ and $F$, and a point load of magnitude $ql/2$ applied at node $C$, where $q$ denotes the line load and $l$ is the span of the roof base. The uncertain input vector is \(X\equiv(q,l,A_s,A_c,E_s,E_c,f_s,f_c)\), whose physical meanings are summarized in Table~\ref{tab:8d_roof_truss_input} and illustrated in Fig.~\ref{fig:8d_roof_truss_schematic}. The roof-truss system exhibits three distinct failure modes. The first mode is associated with the vertical displacement at the roof apex C, which is required to remain below the critical value of $3$ cm. The second mode corresponds to compressive failure of member AD, whose internal force is $1.185\,ql$ and must not exceed its ultimate compressive resistance $f_cA_c$. The third mode corresponds to tensile failure of member EC, whose internal force is $0.75\,ql$ and must not exceed its ultimate tensile resistance $f_sA_s$. These three component limit states are written as
\begin{equation}
    \left\{\begin{array}{l}
    g_1(X)=0.03-\dfrac{q l^2}{2}\left(\dfrac{3.81}{A_cE_c}+\dfrac{1.13}{A_sE_s}\right), \\
    g_2(X)=f_cA_c-1.185\, ql, \\
    g_3(X)=f_sA_s-0.75\, ql.
    \end{array}\right.
\label{eq:8d_roof_truss_failure_modes}
\end{equation}
We introduce the scalar system output
\begin{equation}
    G(X)=-\min\{g_1(X),g_2(X),g_3(X)\},
\label{eq:8d_roof_truss_output}
\end{equation}
with a threshold \(Y^{\mathrm{c}}=1.25\times10^5\). This threshold differs from the failure criterion used in Ref.~\cite{moustapha2024reliability} and is introduced to create a rare-event setting.

Each input is modeled as an independent truncated lognormal marginal, with truncation bounds taken as the \(10^{-6}\) and \(1-10^{-6}\) quantiles of the corresponding lognormal distribution. The reference input distributions, means, coefficients of variation, and truncated lower and upper bounds are summarized in Table~\ref{tab:8d_roof_truss_input}. A system-level subset simulation~\cite{au2001estimation} is performed to estimate the reference PoF, yielding an extremely small value of approximately \(5.025\times10^{-7}\).

\begin{figure}[pos=htbp]
\centering
\includegraphics[width=0.7\textwidth]{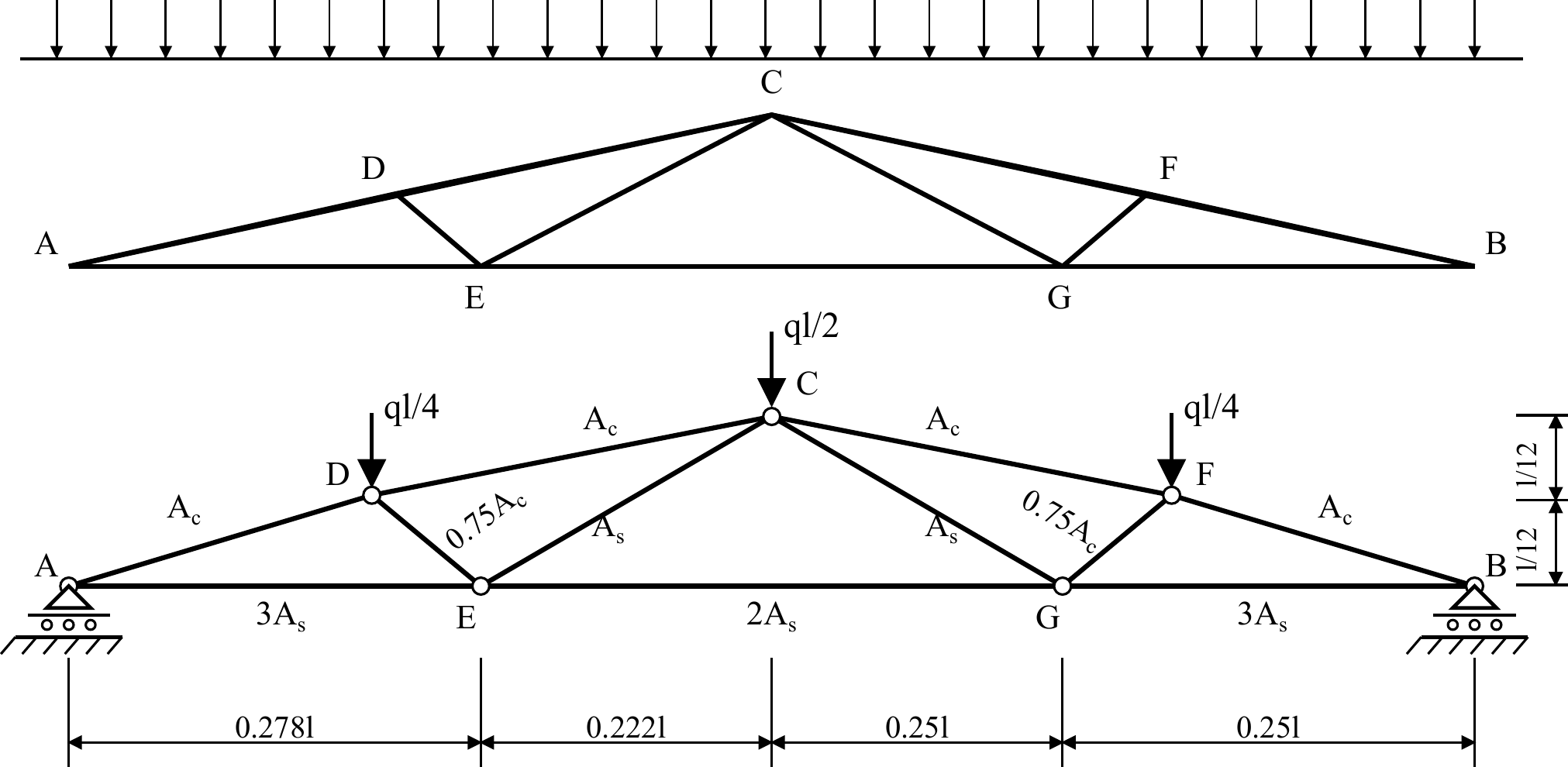}
\caption{Schematic illustration of the roof truss system~\cite{yun2019ak}.}
\label{fig:8d_roof_truss_schematic}
\end{figure}
\begin{table}[width=0.8\linewidth,cols=7,pos=htbp]
\caption{Probabilistic reference distribution of the input variables related to the roof truss problem~\cite{moustapha2024reliability}.}
\label{tab:8d_roof_truss_input}
\begin{tabular*}{\tblwidth}{@{} L L L L L L L@{}}
\toprule
Parameter & Unit & Distribution & Mean & CV & Lower bound & Upper bound \\
\midrule
Uniform load $q$  & [N/m] & Lognormal & $20,000$  & $0.07$ & $1.4310\times10^4$ & $2.7816\times10^4$ \\
Length $l$        & [m]   & Lognormal & $12$      & $0.01$ & $1.1442\times10^1$ & $1.2584\times10^1$ \\
Cross-sectional area $A_s$ & [m$^2$] & Lognormal & $9.82\times10^{-4}$  & $0.06$ & $7.3719\times10^{-4}$ & $1.3034\times10^{-3}$ \\
Cross-sectional area $A_c$ & [m$^2$] & Lognormal & $0.04$  & $0.12$ & $2.2496\times10^{-2}$ & $7.0113\times10^{-2}$\\
Young's modulus $E_s$ & [N/m$^2$] & Lognormal & $2\times10^{11}$  & $0.06$ & $1.5014\times10^{11}$ & $2.6546\times10^{11}$ \\
Young's modulus $E_c$ & [N/m$^2$] & Lognormal & $3\times10^{11}$  & $0.06$ & $2.2521\times10^{11}$ & $3.9819\times10^{11}$ \\
Tensile strength $f_s$ & [N/m$^2$] & Lognormal & $3.35\times10^{8}$  & $0.12$ & $1.8841\times10^8$ & $5.8719\times10^8$ \\
Compressive strength $f_c$ & [N/m$^2$] & Lognormal & $1.34\times10^{7}$  & $0.18$ & $5.6435\times10^6$ & $3.0819\times10^7$ \\
\bottomrule
\end{tabular*}
\end{table}

A central challenge in this case is that, when all eight input variables are refined simultaneously within the OUQ framework, exact PoF evaluation becomes computationally prohibitive. Moreover, in this extreme rare-event regime, simply increasing the number of inverse-transform samples is inefficient for PoF estimation. To address this issue, we preserve exact PoF evaluation on a reduced active set and control the overall computational cost through active-dimension refinement. Specifically, since \(q\) and \(l\) appear in all three limit-state functions in Eq.~\eqref{eq:8d_roof_truss_failure_modes}, they are retained in the reduced active set. A screening procedure based on McDiarmid subdiameters~\cite{sun2020rigorous} is then applied to the remaining six variables to prioritize additional inputs for domain refinement. For each input \(x_i\), the subdiameter is defined as
\begin{equation} 
\mathcal{D}_i=\sup_{\hat{x}_i\in \hat{I}_i,\;x_i,x_i'\in I_i} \left|G(\hat{x}_i,x_i)-G(\hat{x}_i,x_i')\right|, 
\end{equation}
where $I_i$ denotes the input domain of $x_i$, \(\hat{I}_i=\prod_{j\neq i} I_j\), \(\hat{x}_i=(x_1,\ldots,x_{i-1},x_{i+1},\ldots,x_m)\), and \((\hat{x}_i,x_i)\) denotes the full input vector formed by combining \(\hat{x}_i\) with the \(i\)-th coordinate value \(x_i\). In the present implementation, each \(\mathcal{D}_i\) is evaluated using DE-based global optimization over the remaining coordinates. The subdiameters are then normalized by their total sum, and the resulting ranking is used to identify the variables for which additional subdomain and moment information should be imposed.

The computed subdiameters are shown in Fig.~\ref{fig:8d_ouq_compare}(a), and the corresponding values are reported in Table~\ref{tab:McDiarmid_results}. The ranking indicates that \(f_c\) and \(A_c\) are the two most influential variables among the remaining six inputs, whereas \(E_s\) and \(E_c\) have negligible influence. Accordingly, in the OUQ analysis, only the four active variables \((q,l,A_c,f_c)\) are refined using subdomain partitioning and local moment constraints, while the remaining variables \((A_s,E_s,E_c,f_s)\) are kept at the coarsest resolution, namely \(K=1\) and \(r=0\). For the active set \((q,l,A_c,f_c)\), we consider \(K\in \{1,2,4,8\}\) subdomains and impose moment constraints up to second order, \(r\in \{0,1,2\}\). The resulting optimal PoF bounds are summarized in Fig.~\ref{fig:8d_ouq_compare}(b), and the detailed bound values are listed in Table~\ref{tab:ouq_case4_rooftruss_bounds} in Appendix~\ref{sec:tables}.

\begin{figure}[pos=htbp]
\centering
\includegraphics[width=0.9\textwidth]{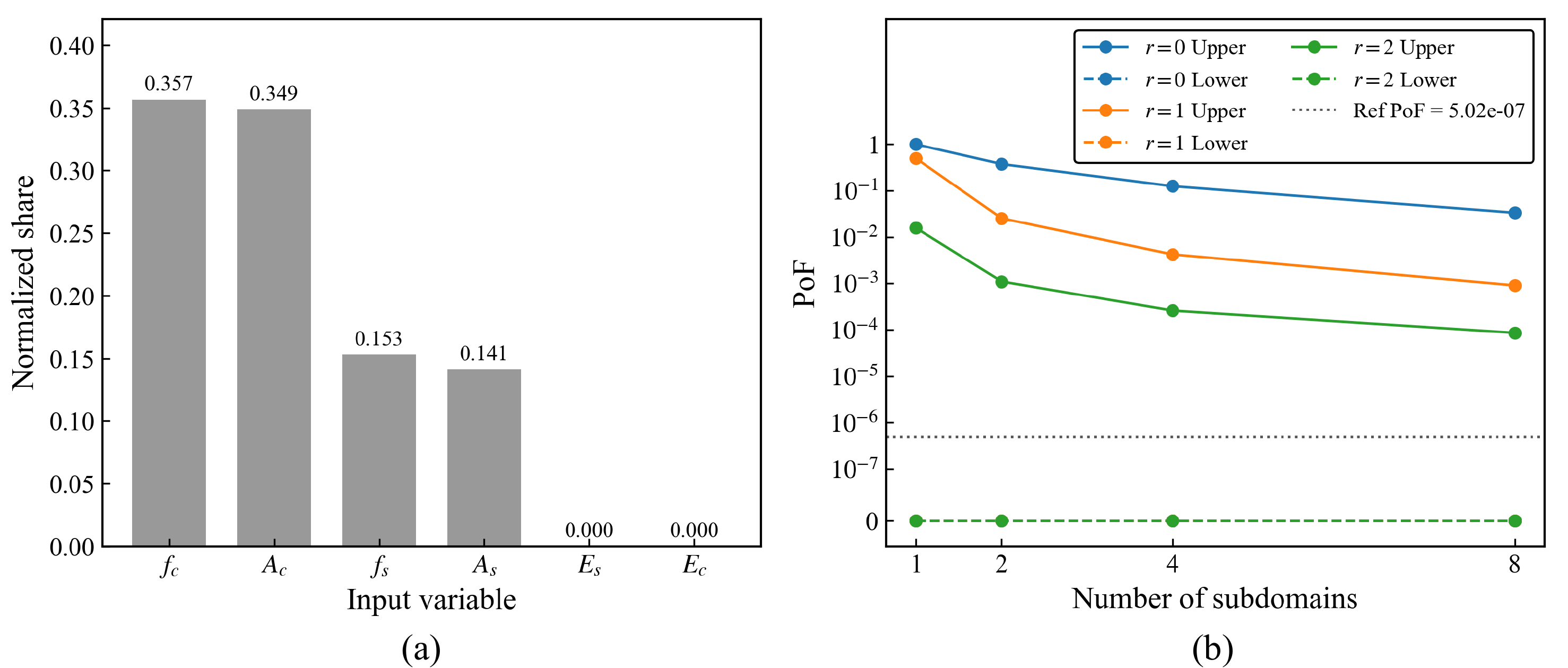}
\caption{(a) Ranking of normalized McDiarmid subdiameters for the six candidate variables $(A_s,A_c,E_s,E_c,f_s,f_c)$. (b) Optimal PoF bounds for the reduced four-variable active set $(q,l,A_c,f_c)$ under different combinations of $K$ and $r$.}
\label{fig:8d_ouq_compare}
\end{figure}

\begin{table}[width=0.8\linewidth,cols=4,pos=htbp]
\caption{Results of McDiarmid subdiameters for $(A_s,A_c,E_s,E_c,f_s,f_c)$.}
\label{tab:McDiarmid_results}
\begin{tabular*}{\tblwidth}{@{} L L L @{}}
\toprule
Parameter & McDiarmid subdiameter & Normalized share \\
\midrule
Cross-sectional area $A_s$ & $1.1404\times10^5$ & $0.1413$ \\
Cross-sectional area $A_c$ & $2.8169\times10^5$ & $0.3490$\\
Young's modulus $E_s$ & $9.7675\times10^{-3}$ & $0.0000$ \\
Young's modulus $E_c$ & $7.1947\times10^{-4}$ & $0.0000$ \\
Tensile strength $f_s$ & $1.2363\times10^5$ & $0.1532$ \\
Compressive strength $f_c$ & $2.8782\times10^5$ & $0.3566$ \\
\bottomrule
\end{tabular*}
\end{table}

Consistent with the general trend observed in the previous examples, increasing either the number of subdomains or the number of moment constraints progressively tightens the OUQ bounds. When the available uncertainty information is minimal, namely for $K=1$ and $r=0$, the optimal bounds remain $[L,U]=[0,1]$, providing no useful certification information. When only zeroth-order subdomain information is incorporated, the upper bound decreases from $3.75\times10^{-1}$ at $K=2$ and $r=0$ to $1.25\times10^{-1}$ at $K=4$ and $r=0$. When first-order moment constraints are additionally imposed, the upper bound is further reduced to $2.5467\times10^{-2}$ at $K=2$ and $r=1$, $4.2388\times10^{-3}$ at $K=4$ and $r=1$, and $9.0931\times10^{-4}$ at $K=8$ and $r=1$. With second-order moment constraints, the upper bound decreases more substantially, reaching $1.1072\times10^{-3}$ at $K=2$ and $r=2$, $2.6038\times10^{-4}$ at $K=4$ and $r=2$, and finally $8.6255\times10^{-5}$ at $K=8$ and $r=2$. In contrast, for all cases considered here, the computed lower bounds remain numerically equal to zero.

These results indicate that, even when uncertainty information is enriched only on a subset of active dimensions, the proposed subdomain-based OUQ framework remains capable of substantially compressing the PoF upper bound in this rare-event regime. Relative to the subset-simulation reference PoF $5.0249\times10^{-7}$, the upper bound is reduced by more than four orders of magnitude, from the vacuous value $1$ at $K=1$ and $r=0$ to $8.6255\times10^{-5}$ at $K=8$ and $r=2$. This demonstrates that active-dimension refinement provides a practical means of controlling the computational cost while preserving the essential tightening behavior of the OUQ bounds. However, the upper bound \(8.6255\times10^{-5}\) obtained with the richest uncertainty information considered here, \(K=8\) and \(r=2\), remains about two orders of magnitude larger than the reference PoF \(5.0249\times10^{-7}\). This is mainly due to the fact that the nonactive variables \((A_s,E_s,E_c,f_s)\) are kept at the coarsest resolution, namely \(K=1\) and \(r=0\).

A further observation from Fig.~\ref{fig:8d_ouq_compare}(b) and Table~\ref{tab:ouq_case4_rooftruss_bounds} is that the upper bound is more sensitive to the order of the moment constraints than to the number of subdomains. This can be seen by comparing the cases \(K=4\), \(r=0\) and \(K=2\), \(r=1\), which have the same computational cost in terms of objective function evaluations. However, the resulting upper bounds differ substantially: the upper bound for \(K=2\), \(r=1\) is \(0.0255\), which is one order of magnitude smaller than that for \(K=4\), \(r=0\), namely \(0.1250\). A similar trend is observed by comparing \(K=8\), \(r=0\) with \(K=4\), \(r=1\). These results suggest that, for this roof-truss problem, improving bound sharpness is more effectively achieved by incorporating higher-order local moment information than by merely increasing the number of subdomains.

It is also worth noting that the lower bounds remain unchanged at zero despite the addition of increasingly informative local constraints. This behavior is consistent with the rare-event nature of the present benchmark and indicates that, under the current information budget and optimization tolerance, the available uncertainty information is still insufficient to raise the lower bound away from the numerical zero level. Consequently, for this class of rare-event problems, the main practical benefit of the current framework lies in progressively tightening conservative upper bounds rather than in simultaneously producing informative two-sided certification intervals.

\subsection{Case 5: ATen-Dimensional Ballistic Impact Problem}
\label{sec:ballistic}

We next illustrate the developed OUQ framework through an application involving the performance of an AZ31B magnesium alloy plate subjected to normal impact by a heavy steel ball, as shown in Fig.~\ref{fig:impact_geo}(a). The plate measures $10$~cm in both length and width, with a thickness of $0.35$~cm, while the ball has a diameter of $1.12$~cm. The design specification is defined in terms of the maximum backface deflection of the plate $Y$, illustrated in Fig.~\ref{fig:impact_geo}(b). The system is considered safe if the maximum backface deflection remains below a prescribed threshold $Y^{\mathrm{c}}$. Otherwise, the design is deemed to have failed. We further assume that all uncertainty arises from imperfect characterization of the material variables and impact conditions. Specifically, the ten uncertain inputs consist of nine parameters associated with the target material model and the impact velocity. Partial information about the input probability measure is assumed to be available, specifically in the form of statistical moment constraints defined over subdomains. This allows the proposed OUQ framework to be examined in a substantially higher-dimensional setting. For simplicity, the projectile is assumed to be free of uncertainty. 

\begin{figure}[pos=htbp]
\centering
\includegraphics[width=\textwidth]{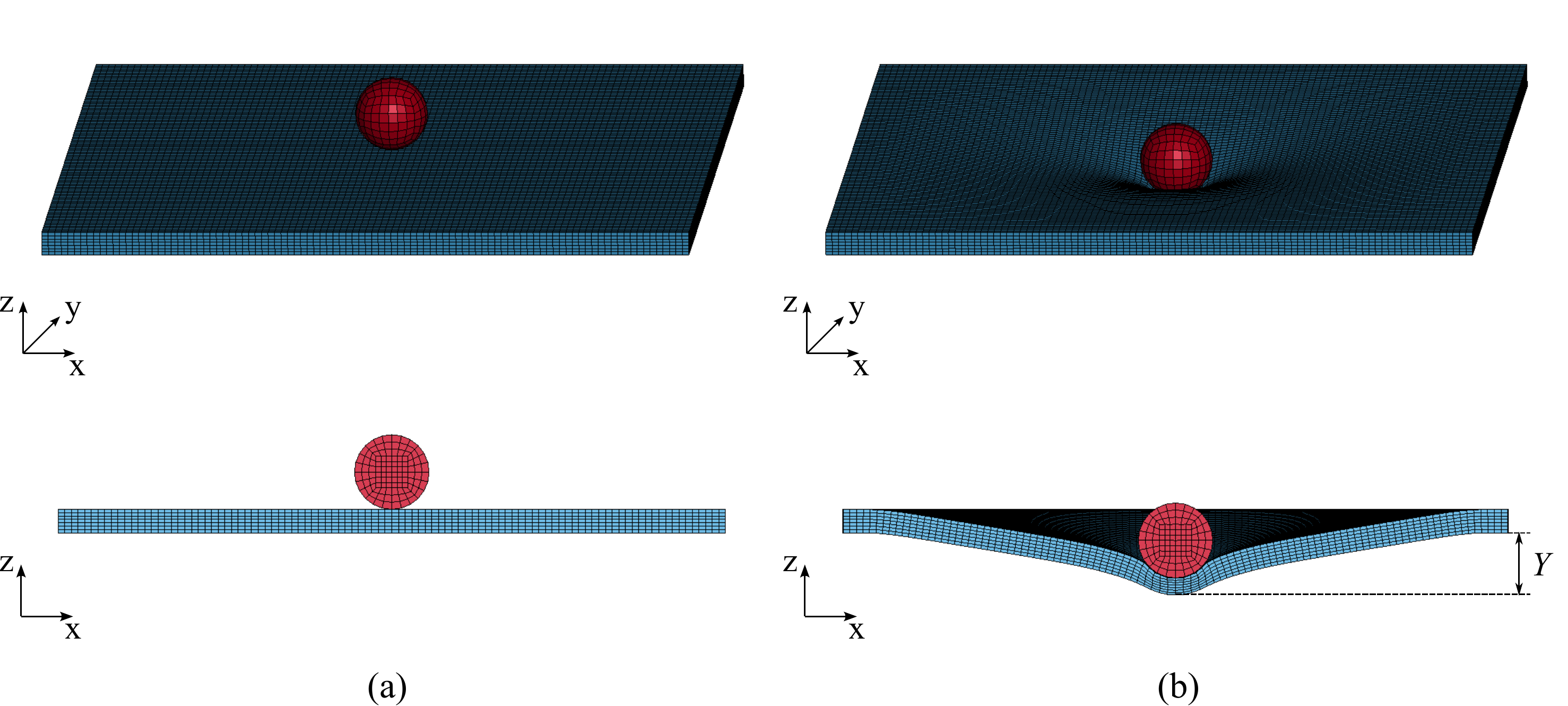}
\caption{Schematic illustration of an AZ31B magnesium plate impacted by a spherical steel projectile: (a) Initial configuration, and (b) Performance metric, with the maximum backface deflection denoted by $Y$. In each subfigure, the top figure shows a three-dimensional view of the projectile-plate system, while the bottom figure presents the corresponding mid-plane $x$-$z$ cross-sectional view. Red and green markers denote the finite element models of the projectile and the plate, respectively.}
\label{fig:impact_geo}
\end{figure}

\subsubsection{Material Modeling}

The constitutive behavior of the plate is described by a Johnson-Cook(JC) plasticity model~\cite{johnson1983constitutive}, which is widely used to characterize the elastic-plastic response of ductile metals over broad ranges of strain rates and temperatures. Within this model, the flow stress is decomposed into contributions from strain hardening, strain-rate sensitivity, and thermal softening, and is expressed as
\begin{equation}
  \sigma_{\mathrm{y}} = 
  \bigl[A + B\,\epsilon_{\mathrm{pl}}^{\,n}\bigr]
  \bigl[1 + C \ln \dot{\epsilon}_{\mathrm{pl}}^\ast \bigr]
  \bigl[1 - T^{\ast m}\bigr],
\label{eq:JCplasticity}
\end{equation}
where $\sigma_{\mathrm{y}}$ denotes the true flow stress and $\epsilon_{\mathrm{pl}}$ is the equivalent plastic strain. The normalized plastic strain rate $\dot{\epsilon}_{\mathrm{pl}}^\ast$ and the normalized temperature $T^\ast$ are defined as
\begin{equation}
  \dot{\epsilon}_{\mathrm{pl}}^\ast = \frac{\dot{\epsilon}_{\mathrm{pl}}}{\dot{\epsilon}_{\mathrm{pl0}}}, 
  \quad
  T^\ast = \frac{T - T_0}{T_{\mathrm{m}} - T_0},
\label{eq:JCeStarTstar}
\end{equation}
where $\dot{\epsilon}_{\mathrm{pl}}$ is the plastic strain rate, $\dot{\epsilon}_{\mathrm{pl0}}$ is a reference plastic strain rate, $T$ is the temperature, $T_0$ is the room temperature, and $T_{\mathrm{m}}$ is the melting temperature of the material. In Eq.~(\ref{eq:JCplasticity}), $A$ denotes the quasi-static yield stress, $B$ the hardening modulus, $n$ the strain-hardening exponent, $C$ the strain-rate sensitivity coefficient, and $m$ the thermal softening exponent.

The volumetric response of the target plate is modeled using the Mie-Grüneisen equation of state (EOS)~\cite{gruneisen1912theorie}. The pressure is written differently in the compressed and expanded states. Let $\mu=\rho/\rho_0-1$ denote the nominal volumetric compression, where $\rho$ and $\rho_0$ are the current and reference mass densities, respectively, and let $e_0$ denote the specific internal energy per unit mass. For the compressed state, the pressure is given by
\begin{equation}
    p=
    \frac{\rho_0 c_0^2\,\mu\bigl[1+\tfrac{1}{2}(1-\gamma_0)\mu-\tfrac{\alpha}{2}\mu^2\bigr]}
         {\bigl[1-(s_1-1)\mu\bigr]^2}
    +\bigl(\gamma_0+\alpha\mu\bigr)e_0,
\label{eq:MG_EOS_comp}
\end{equation}
whereas for the expanded state it reduces to
\begin{equation}
    p=\rho_0 c_0^2\,\mu+\bigl(\gamma_0+\alpha\mu\bigr)e_0.
\label{eq:MG_EOS_exp}
\end{equation}
Here, $c_0$ is the zero-pressure bulk sound speed, $\gamma_0$ is the Gr\"uneisen gamma, and $\alpha$ is the first-order volume correction to $\gamma_0$. The coefficients $c_0$ and $s_1$ are related to a linear shock-particle velocity relation
\begin{equation}
    v_s=c_0+s_1 v_p,
\label{eq:UsUp}
\end{equation}
where $v_s$ and $v_p$ are the shock and particle velocities, respectively.

In the present study, the random input vector is taken as
\begin{equation}
\label{eq:10d_input}
    X \equiv (A,B,n,C,m,v_0,E,c_0,\gamma_0,s_1),
\end{equation}
where $(A,B,n,C,m)$ are the JC plasticity parameters, $v_0$ is the impact velocity, $E$ is the Young's modulus of the target plate, and $(c_0,\gamma_0,s_1)$ are the Mie-Gr\"uneisen EoS parameters. These inputs are assumed to be mutually independent and uniformly distributed within prescribed bounds. The reference values and bounds are summarized in Table~\ref{tab:AZ31B_10d_bounds}. Each uncertain input is assigned a bounded interval defined by $\pm 10\%$ of its reference value.

The Taylor-Quinney factor is employed to quantify the fraction of plastic work converted into heat during plastic deformation~\cite{taylor1934latent}. Due to the substantial disparity in stiffness between AZ31B magnesium and steel, the projectile is assumed to behave as a rigid body. All remaining fixed material parameters for the target plate and the projectile are summarized in Tables~\ref{tab:target_mat} and~\ref{tab:project_mat}, respectively.


\begin{table}[width=.8\linewidth,cols=6,pos=htbp]
\caption{Reference values and bounds of the uncertain inputs used in the ballistic impact problem. The lower and upper bounds are defined as $\pm 10\%$ of the corresponding reference values.}
\label{tab:AZ31B_10d_bounds}
\begin{tabular*}{\tblwidth}{@{} L L L L L L@{}}
\toprule
Parameter & Reference value & Lower bound & Upper bound & Unit & Source\\
\midrule
Johnson Cook $A$         & $225.171$ & $202.654$ & $247.688$ & [MPa] & \cite{hasenpouth2010tensile}\\
Johnson Cook $B$         & $168.346$ & $151.511$ & $185.181$ & [MPa] & \cite{hasenpouth2010tensile}\\
Johnson Cook $n$         & $0.242$   & $0.2178$  & $0.2662$  & - & \cite{hasenpouth2010tensile}\\
Johnson Cook $C$         & $0.013$   & $0.0117$  & $0.0143$  & - & \cite{hasenpouth2010tensile}\\
Johnson Cook $m$         & $1.550$   & $1.395$   & $1.705$   & - & \cite{hasenpouth2010tensile}\\
Impact velocity $v_0$       & $200.0$   & $180.0$   & $220.0$   & [m/s] & -\\
Young's Modulus $E$         & $45.0$    & $40.5$    & $49.5$    & [GPa] & \cite{hasenpouth2010tensile}\\
Grüneisen intercept $c_0$       & $4520.0$  & $4068.0$  & $4972.0$  & [m/s] & \cite{feng2017numerical}\\
Grüneisen gamma $\gamma_0$  & $1.54$    & $1.386$   & $1.694$   & - & \cite{feng2017numerical}\\
Grüneisen slope $s_1$       & $1.242$   & $1.1178$  & $1.3662$  & - & \cite{feng2017numerical}\\
\bottomrule
\end{tabular*}
\end{table}

\begin{table}[width=.8\linewidth,cols=4,pos=h]
\caption{Fixed material parameters for the AZ31B target plate.}
\label{tab:target_mat}
\begin{tabular*}{\tblwidth}{@{} Y{6cm} L L L @{}}
\toprule
Parameter & Value & Unit & Source \\
\midrule
Mass density  & $1.77$    & [g/cm$^3$]    & \cite{hasenpouth2010tensile} \\
Poisson's ratio         & $0.35$    & -             & - \\
Specific heat capacity  & $1.005$   & [J/(K·g)]     & \cite{lee2013thermal} \\
Taylor-Quinney factor   & $0.6$     & -             & \cite{kingstedt2019conversion} \\
Reference strain rate   & $0.001$   & [s$^{-1}$]    & \cite{hasenpouth2010tensile} \\
Reference Temperature   & $298.15$  & [K]           & \cite{hasenpouth2010tensile} \\
Reference melting Temperature & $905.0$   & [K]     & \cite{hasenpouth2010tensile}\\
\bottomrule
\end{tabular*}
\end{table}

\begin{table}[width=.8\linewidth,cols=3,pos=h]
\caption{Fixed material parameters for the steel projectile.}
\label{tab:project_mat}
\begin{tabular*}{\tblwidth}{@{} Y{3cm} L L @{}}
\toprule
Parameter & Value & Unit \\
\midrule
Mass density  & $7.83$    & [g/cm$^3$]     \\
Young's Modulus         & $210.0$   & [GPa]       \\
Poisson's ratio         & $0.30$    & -          \\
\bottomrule
\end{tabular*}
\end{table}

\subsubsection{Forward Solver and Surrogate Modeling}

For a given realization of the input parameters, the maximum backface deflection $Y$ is computed using an explicit finite-element solver implemented in LS-DYNA~\cite{lsdynatheory}. The projectile is discretized using $864$ finite elements, while the plate is modeled with $70,000$ elements. As shown in Fig.~\ref{fig:impact_geo}, all elements are linear hexahedral elements with single-point integration and appropriate hourglass control. The backface nodes of the target plate near the edges are fully constrained to prevent displacement in all directions. Normal contact between the projectile and the plate is enforced using a penalty-based contact formulation~\cite{lsdynatheory}. The time step is governed by the critical element size according to the Courant-Friedrichs-Lewy (CFL) condition~\cite{courant1928partiellen}, and the total simulation time is set to $500~\mu\mathrm{s}$, which is sufficient to capture the complete rebound of the projectile. All simulations are performed under adiabatic conditions, with the initial temperature set to room temperature.

Direct use of the LS-DYNA solver within the objective function of the OUQ optimization is computationally prohibitive. Therefore, we construct a data-driven neural-network surrogate $\mathcal{N}_\theta$, parameterized by network parameters $\theta$, to approximate the forward mapping from the input vector $X = (A,B,n,C,m,v_0,E,c_0,\gamma_0,s_1)$ to the maximum back-face deflection $Y$, i.e.,
\begin{equation}
    Y = G(X) \approx \mathcal{N}_{\theta}(X).
\end{equation}

A dataset of $10{,}000$ samples is generated using the LS-DYNA solver with Latin hypercube sampling (LHS) over the input space defined in Table~\ref{tab:AZ31B_10d_bounds}. The dataset is randomly partitioned into training, validation, and test sets with fixed proportions of $70\%$, $20\%$, and $10\%$, respectively~\cite{bishop2006pattern}. The training set is used to fit the network parameters, the validation set is used for early stopping to prevent overfitting, and the test set is reserved for assessing the surrogate's generalization performance.

The surrogate model is implemented as a fully connected multilayer perceptron (MLP) with four hidden layers of $256$ units each, yielding the architecture $10 \to 256 \to 256 \to 256 \to 256 \to 1$. Scaled exponential linear unit (SELU) activations~\cite{klambauer2017self} are employed in all hidden layers, with a linear activation at the output. The network is trained using the mean-squared error (MSE) loss and the Adam optimizer~\cite{kingma2014adam} with a learning rate of $10^{-3}$, using mini-batches of size $64$. Training is performed for up to $500$ epochs with early stopping~\cite{prechelt2002early}. The validation MSE is monitored, and training is terminated if no improvement is observed for $50$ consecutive epochs. Model parameters are checkpointed whenever the validation loss reaches a new minimum, and the best-performing checkpoint is used for all subsequent evaluations.

The performance of the trained surrogate is summarized in Fig.~\ref{fig:surrogate_performance}. Fig.~\ref{fig:surrogate_performance}(a) shows the training and validation MSEs as functions of epoch on a logarithmic scale. The logarithmic axis makes it easier to visualize the optimization process over several orders of magnitude and to identify the epoch selected by early stopping. As indicated by the black marker and dashed vertical line, the minimum validation MSE is attained at epoch $204$, and the corresponding checkpoint is adopted as the final surrogate model. Although both curves exhibit small fluctuations during training, they remain at low levels after the initial decay, indicating stable optimization without evidence of sustained overfitting. Fig.~\ref{fig:surrogate_performance}(b) presents the parity plot on the test set, where the surrogate predictions are compared against the LS-DYNA outputs for the maximum backface deflection. Most samples cluster tightly around the reference line $y=x$, confirming that the surrogate captures the input-output mapping with high fidelity over the test domain. Quantitatively, the test-set coefficient of determination is $R^2=0.9998$, with a root mean squared error (RMSE) of $6.74\times10^{-4}$~cm and a mean absolute error (MAE) of $4.04\times10^{-4}$~cm. A few isolated points deviate more visibly from the diagonal than the majority of the samples; however, these deviations are sparse and do not affect the overall near-identity trend of the parity plot. Taken together, the loss history and the test-set parity results indicate that the surrogate provides an accurate and sufficiently robust approximation of the LS-DYNA forward model for the subsequent OUQ computations.

\begin{figure}[pos=htbp]
\centering
\includegraphics[width=0.9\textwidth]{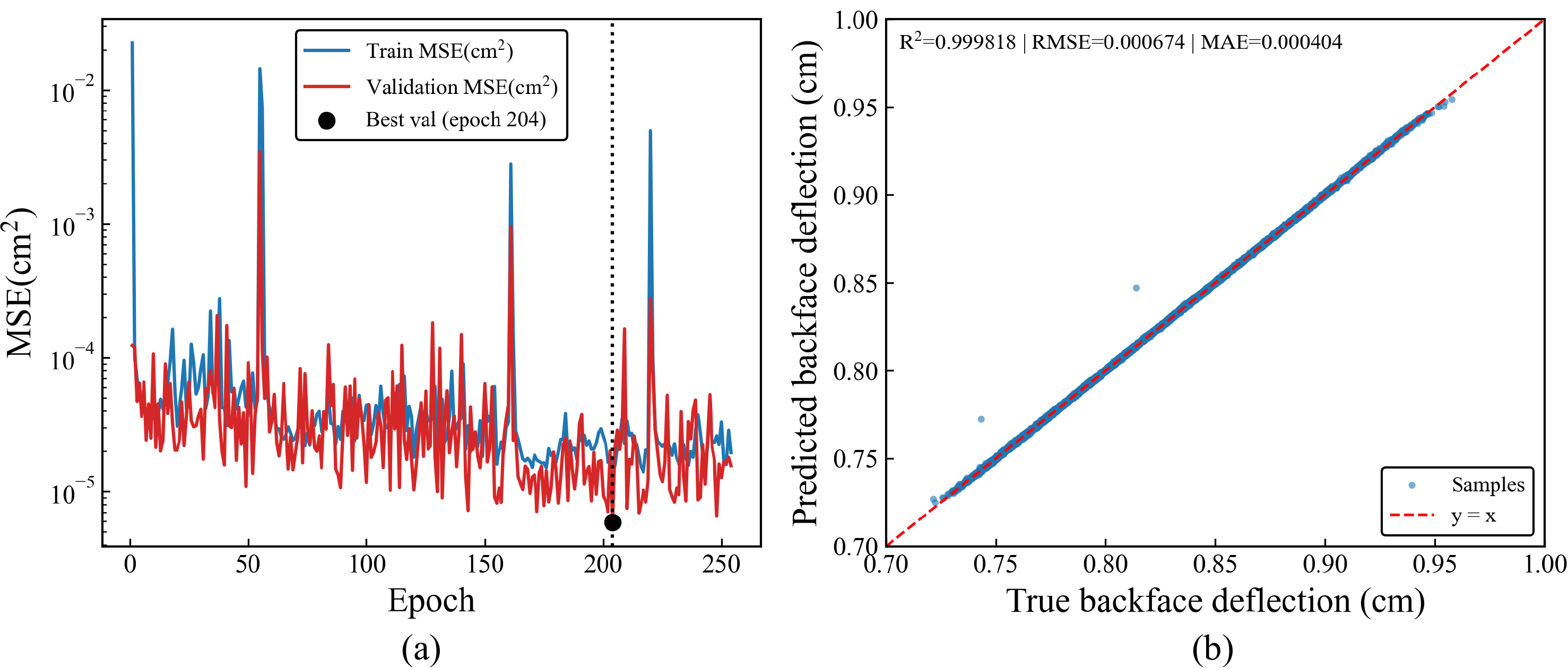}
\caption{Surrogate model performance: (a) training and validation MSE histories on a logarithmic scale, with the selected early-stopping checkpoint marked at epoch $204$; and (b) parity plot comparing surrogate predictions with LS-DYNA outputs on the test set.}
\label{fig:surrogate_performance}
\end{figure}

To quantify the computational acceleration enabled by the surrogate forward model, we benchmark its performance against the LS-DYNA solver on the same workstation using a single CPU core. A single LS-DYNA impact simulation requires approximately $649$~s. By contrast, the trained surrogate model, evaluated in CPU inference mode and averaged over repeated calls after a warm-up stage to suppress timing noise, requires only $7.73 \times 10^{-5}$~s per sample. This yields an acceleration factor of approximately $8.39 \times 10^{6}$, corresponding to nearly seven orders of magnitude. Since OUQ and Monte Carlo analyses typically evaluate the forward model in batches, the achievable throughput can be even higher in practice. For a batch size of $1024$, the effective average inference time is reduced to $3.65 \times 10^{-6}$~s per sample, corresponding to a speedup of eight orders of magnitude. Hence, the single-sample benchmark reported here should be regarded as a conservative lower bound on the computational gain afforded by the surrogate model in large-scale UQ calculations.

\subsubsection{OUQ Results Under Varying Levels of Uncertainty Information}

We simultaneously partition the domains of the ten uncertain input parameters into $K = 1$, $2$, $4$, and $8$ subdomains, and impose moment constraints up to second order, i.e., $r = 0$, $1$, and $2$. As in Case 2, whenever exact enumeration of all joint Dirac combinations is not feasible, the PoF in the present ballistic-impact example is evaluated using the same ITS-based Monte Carlo strategy, with $N_{\mathrm{ITS}}=5\times10^4$ samples. We first consider the failure threshold $Y^{\mathrm{c}} = 0.93$~cm. Fig.~\ref{fig:10d_impact_compare}(a) compares the resulting optimal bounds with the reference PoF, which is estimated via Monte Carlo sampling of the surrogate ballistic-impact model. The numerical values of the optimal bounds are reported in Table~\ref{tab:ouq_10d_impact_bounds} in Appendix~\ref{sec:tables}. Consistent with earlier examples, increasing either the number of subdomains or the number of moment constraints reduces the upper bound $U$ and increases the lower bound $L$, thereby progressively tightening the OUQ interval. When the available uncertainty information is minimal---specifically in the case $K=1$, $r=0$---the optimal bounds are $[L,\,U]=[0,\,1]$, providing no useful certification information. In contrast, when the uncertainty description is most informative, namely for $K=8$ and $r=2$, the optimal bounds tighten to $[L,\,U]=[0.0131,\,0.0136]$, deviating from the reference PoF of $0.0134$ by only $[-1.62\%,\, +1.75\%]$.

A further observation from Fig.~\ref{fig:10d_impact_compare}(a) is that subdomain partitioning and moment constraints exhibit different efficiency-tightness trade-offs. When only subdomain information is imposed, the bounds can be tightened to $[L,\,U]=[0.0003,\,0.0485]$ at $K=8$, corresponding to deviations of $[-97.99\%,\, +263.39\%]$ from the reference PoF. Conversely, when only moment constraints are imposed, the bounds tighten to $[L,\,U]=[0,\,0.0892]$ at $r=2$, yielding much larger deviations of $[-100.00\%,\, +568.03\%]$. Although subdomain-only constraints produce tighter bounds than moment-only constraints, they also require a higher computational cost.

Despite these differences, Fig.~\ref{fig:10d_impact_compare}(a) also indicates that, when computational cost is taken into account, higher-order moment information plays a more important role than finer subdomain partitioning in improving the tightness of the optimal bounds for the present problem. For example, the bounds obtained for $K=4$, $r=0$, namely $[L,\,U]=[0,\,0.1144]$, are slightly wider than those obtained for $K=2$, $r=1$, which yield $[L,\,U]=[0,\,0.0715]$. A similar pattern is observed when comparing cases $K=8$, $r=0$ and $K=4$, $r=1$. These results suggest that, for this application, comparable improvements in PoF bound tightness can be achieved either by refining the subdomain partitioning or by enriching the uncertainty description through higher-order moment information across the input parameters.

\begin{figure}[pos=htbp]
\centering
\includegraphics[width=0.9\textwidth]{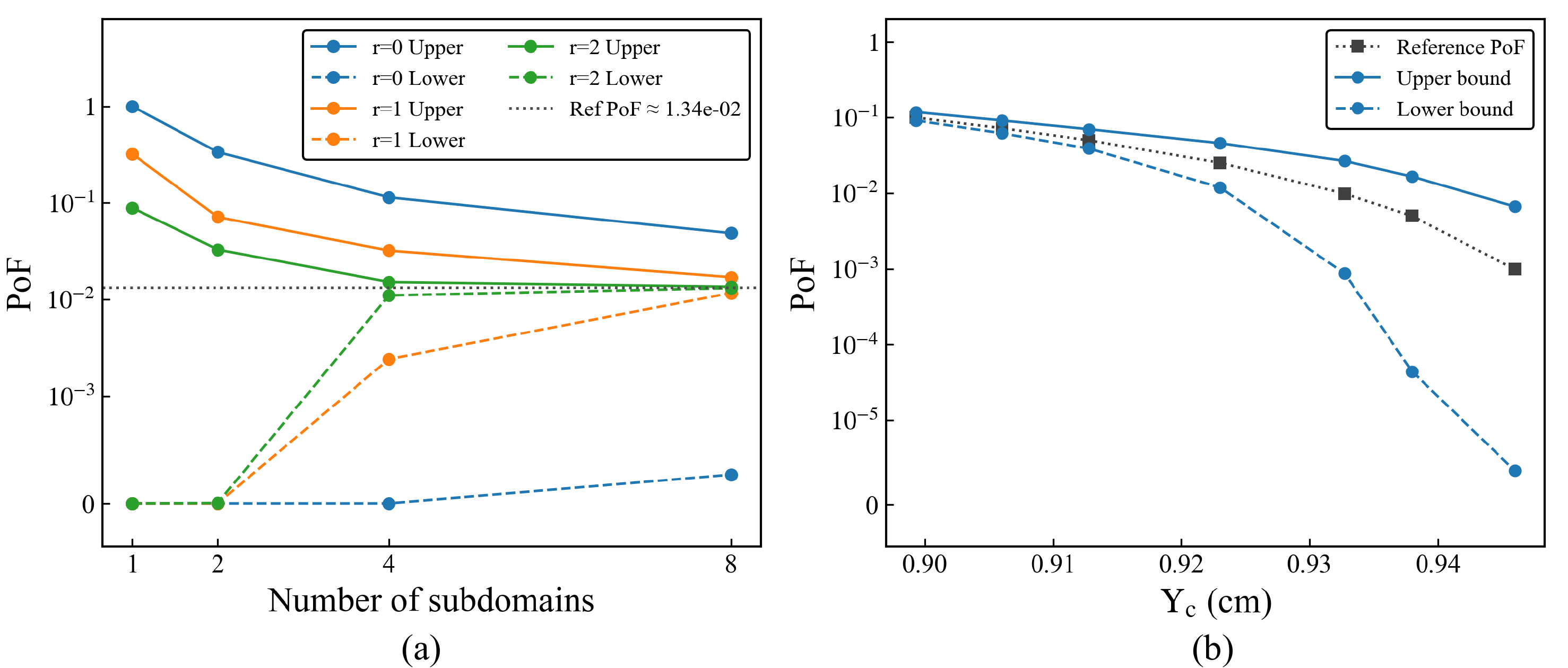}
\caption{OUQ bounds for the ten-dimensional ballistic-impact problem: (a) Optimal bounds with different $K$ and $r$ for $Y^{\mathrm{c}}=0.93$~cm; and (b) Optimal bounds for fixed $K=4$ and $r=1$ as $Y^{\mathrm{c}}$ varies.}
\label{fig:10d_impact_compare}
\end{figure}

\subsubsection{OUQ Results Under Varying Failure Thresholds}

To further examine the performance of the proposed framework as the failure event becomes rarer, we fix $K=4$ and $r=1$ and vary the failure threshold $Y^{\mathrm{c}}$. Fig.~\ref{fig:10d_impact_compare}(b) compares the resulting optimal bounds with the corresponding reference PoF. Table~\ref{tab:ouq_10d_impact_yc} in Appendix~\ref{sec:tables} reports the explicit numerical values of the reference PoFs and their corresponding optimal bounds. As $Y^{\mathrm{c}}$ increases, the reference PoF decreases monotonically from $9.9896\times10^{-2}$ to $9.86\times10^{-4}$, and both optimal bounds decrease accordingly. For relatively moderate failure probabilities, the bounds remain reasonably informative. For example, at $Y^{\mathrm{c}}=0.8993$, the optimal bounds are $[L,\,U]=[0.0916,\,0.1181]$, deviating from the reference PoF of $0.099896$ by $[-8.36\%,\, +18.23\%]$.

As the failure event becomes rarer, however, the optimal interval becomes increasingly conservative in relative terms. At $Y^{\mathrm{c}}=0.9327$, for which the reference PoF is $9.88\times10^{-3}$, the bounds are $[L,\,U]=[8.72\times10^{-4},\,2.67\times10^{-2}]$. When $Y^{\mathrm{c}}=0.9460$, the reference PoF further decreases to $9.86\times10^{-4}$, while the optimal bounds become $[L,\,U]=[4\times10^{-6},\,6.67\times10^{-3}]$. These results indicate that the proposed subdomain-based OUQ framework still captures the correct order of magnitude of the PoF as the event becomes rare, but the certification interval widens substantially in relative terms, especially because the lower bound rapidly approaches zero. This behavior highlights the increasing difficulty of sharply localizing extremal admissible measures in the tail region when only limited subdomain-based moment information is available.

\section{Concluding Remarks}
\label{sec:summary}

In this study, we have formulated an OUQ framework to account for uncertain inputs characterized by truncated moment constraints defined over input subdomains. We have developed a high-performance computational framework to compute rigorous optimal upper and lower bounds on the PoF over the admissible set of probability measures. The proposed methodology has integrated several key components: a reduction theory that transforms the original infinite-dimensional optimization problem over admissible probability measures into a finite-dimensional problem over admissible Dirac measures; the use of canonical moments that converts constrained optimization problems into unconstrained ones; eigendecomposition of the associated Jacobi matrix that mitigates ill-conditioning in root-finding procedures, and ITS that enables efficient and accurate evaluation of the PoF under Dirac measures.

We have presented five groups of numerical examples to evaluate the effectiveness of the proposed subdomain-based OUQ framework across varying numbers of subdomains and moment constraints. Based on these investigations, the following conclusions can be drawn:
\begin{enumerate}
    \item The evidence theory can be regarded as a special case of the proposed OUQ methodology, in which the uncertain inputs are constrained only by zeroth-order moments over subdomains. In this setting, the plausibility and belief functions in the evidence theory correspond, respectively, to the optimal upper and lower bounds obtained from the OUQ.

    \item For high-dimensional problems, the ITS strategy has reduced computational cost by up to two orders of magnitude while maintaining a relative error below \(1\%\). By contrast, for low-dimensional problems, the cost of ITS may exceed that of directly evaluating the PoF using all Dirac combinations.

    \item Across all numerical examples considered, a general trend is observed in which increasing either the number of subdomains or the order of moment constraints has systematically decreased the upper bound and increased the lower bound, thereby tightening the bound interval. In all cases, the true/reference PoF is consistently bracketed by the corresponding optimal bounds.

    \item The proposed framework has been demonstrated on smooth and non-smooth examples, rare events, and high-dimensional engineering problems with up to ten random inputs. For rare events, it remains informative, although certification becomes more difficult as the target PoF decreases. In such cases, the main practical value is the progressive tightening of rigorous upper bounds, which can serve as conservative safety-design criteria.

    \item We have identified regimes and input parameters for which the optimal bounds are more sensitive to subdomain partitioning or to higher-order moment information. In some cases, localized subdomain information is more effective for tightening the OUQ bounds, whereas in others higher-order moments provide a more favorable improvement when computational cost is taken into account. These findings provide practical guidance for prioritizing uncertainty reduction efforts when certifying system safety.
\end{enumerate}

In this work, we have focused on computing optimal PoF bounds for systems whose uncertain inputs are characterized by truncated moment constraints over input subdomains, a form of epistemic uncertainty. In many engineering applications, however, sufficient data may be available for certain uncertain quantities to justify probabilistic modeling using stochastic distributions, while only limited data exist for other quantities, precluding such representations. An important direction for future research is therefore the integration of aleatory uncertainties, described by stochastic distributions, with epistemic uncertainties characterized by moment constraints over subdomains, within a unified OUQ framework~\cite{miska2022method}. 

It is also worth noting that, in some instances, the upper or lower bounds remain unchanged despite the introduction of additional uncertainty information. This behavior arises when the added information pertains to regions of the input space that lie entirely within the safe and failure domain and therefore do not influence the bounds on PoF, which are primarily governed by probability mass near the boundary between the safe and failure domains. Consequently, uncertainty information provided on subdomains that intersect, but are not entirely contained within, the failure region is more effective for tightening the PoF bounds. These marginal subdomains may further benefit from finer partitioning or the incorporation of higher-order moment constraints to improve bound tightness more efficiently. Accordingly, the identification of failure regions in the input space and the development of adaptive subdomain partitioning strategies constitute promising directions for future research.

It should be noted that the affine transformation in Eq.~\eqref{eq:affine_transform} requires each subdomain \(\mathcal{X}_{i,j}=[a_{i,j},b_{i,j}]\) to be bounded. Therefore, the present canonical-moment construction is formulated for bounded input domains. In many engineering applications, uncertain variables have natural physical or design bounds. When such bounds are not available, practical bounds may be specified based on expert judgment, available data, engineering constraints, or high-probability truncation of a reference distribution, as adopted in the numerical examples. The resulting OUQ bounds should then be interpreted relative to the prescribed bounded domain. Extending the present formulation to genuinely unbounded supports would require a different moment-parameterization strategy and is left for future work.

The present study focuses on statistically independent input variables, and its direct applicability to dependent inputs is therefore limited. More general UQ methodologies for dependent random variables have been developed in the literature, particularly in polynomial-chaos- and polynomial-dimensional-decomposition-based settings when joint or marginal dependence information is available. Representative examples include convergence theory for generalized polynomial chaos expansions under arbitrary probability measures~\cite{ernst2012convergence}, moment-based construction of multivariate polynomial chaos bases for dependent variables~\cite{feinberg2018multivariate}, measure-consistent polynomial chaos expansions for statistically dependent inputs~\cite{rahman2018polynomial}, and generalized polynomial dimensional decomposition under dependent random variables~\cite{rahman2019uncertainty}. These studies provide important context for positioning the present work relative to dependent-input UQ. Extending the present subdomain-moment OUQ framework to dependent inputs is possible in principle within the broader OUQ philosophy, but it would require a different admissible-set construction based on joint measures, copulas, or other dependence descriptions, together with a corresponding reformulation of the reduction and numerical solution procedures. Such an extension is therefore an important direction for future research.

We close this section by discussing the application of the proposed OUQ framework to high-dimensional rare-event problems. The present work focuses on either low-dimensional, low-probability cases (e.g., four inputs with PoF on the order of $10^{-7}$) or high-dimensional, moderate-probability cases (e.g., ten inputs with PoF on the order of $10^{-2}$). Direct application to high-dimensional, low-probability settings remains challenging. This difficulty arises because the required number of inverse-transform samples, $N_{\mathrm{ITS}}$, grows prohibitively large in high dimensions for rare events. For example, achieving reliable estimates for PoF on the order of $10^{-7}$ would typically require $N_{\mathrm{ITS}}$ on the order of $10^{9}$. This challenge is analogous to that encountered in standard Monte Carlo methods. Consequently, extending the OUQ framework to such regimes will likely require more advanced sampling strategies, such as importance sampling~\cite{melchers1989importance, engelund1993benchmark} or subset simulation~\cite{au2001estimation, au2003subset}, adapted to the Dirac-measure representation of the input probability space.

\section{Acknowledgments}
R.J. and X.S. gratefully acknowledge the support of the National Science Foundation (NSF), USA, under Grant No. $2429424$, and the support of the University of Kentucky, USA, through the faculty start-up fund. The authors also thank Dr. Burigede Liu for generously sharing the learning-based OUQ codes. The comments and suggestions of the anonymous reviewers have improved the quality and scope of this work.

\section*{Appendix}
\appendix

\section{Optimal Bounds under Zeroth-Order Moment Constraints}
\label{sec:deri}

As defined in Eq.~(\ref{eq:failuredomain}), the failure domain 
$\mathcal{X}^{\mathrm{c}}$ is a subset of the input space 
$\mathcal{X}$. Using Eqs.~(\ref{eq:failuredomain}) and 
(\ref{eq:domaindecomp}), we obtain
\begin{equation}
\begin{aligned}
    \mathcal{X}^{\mathrm c}
    &= \mathcal{X}^{\mathrm c}\cap\mathcal{X} \\
    &= 
    \mathcal{X}^{\mathrm c}
    \cap
    \Bigg( \bigcup_{\alpha_1=1}^{K}\cdots\bigcup_{\alpha_m=1}^{K}
    \mathcal{R}_{\alpha} \Bigg) \\
    &=
    \bigcup_{\alpha_1=1}^{K}\cdots\bigcup_{\alpha_m=1}^{K}
    \left(
    \mathcal{X}^{\mathrm c}\cap\mathcal{R}_{\alpha}
    \right).
\end{aligned}
\label{eq:xcxi}
\end{equation}
Substituting Eq.~\eqref{eq:xcxi} into Eq.~\eqref{eq:muxy} and using the disjointness in Eq.~\eqref{eq:xixj}, the PoF can be written as
\begin{equation}
\begin{aligned}
    \mu[Y \in \mathcal{Y}_{\mathrm{c}}]
    &=
    \mu\!\left[
        X \in
        \bigcup_{\alpha_1=1}^{K}\cdots\bigcup_{\alpha_m=1}^{K}
        \left(
        \mathcal{X}^{\mathrm c}\cap\mathcal{R}_{\alpha}
        \right)
    \right] \\
    &=
    \sum_{\alpha_1=1}^{K}\cdots\sum_{\alpha_m=1}^{K}
    \mu\!\left[
    X\in
    \mathcal{X}^{\mathrm c}\cap\mathcal{R}_{\alpha}
    \right].
\end{aligned}
\label{eq:law}
\end{equation}
The last equality follows from the additivity of the probability measure. Using the law of total probability, we have
\begin{equation}
\mu[Y\in\mathcal{Y}_{\mathrm c}]
=
\sum_{\alpha_1=1}^{K}\cdots\sum_{\alpha_m=1}^{K}
\mu\!\left[
X\in\mathcal{R}_{\alpha}
\right]
\,
\mu\!\left[
X\in\mathcal{X}^{\mathrm c}
\,\big|\,
X\in\mathcal{R}_{\alpha}
\right].
\end{equation}
Because the inputs are independent, we have
\begin{equation}
\mu\!\left[
X\in\mathcal{R}_{\alpha}
\right]
=
\prod_{i=1}^{m}
\mu_i\left[X_i\in\mathcal{X}_{i,\alpha_i}\right]
=
\prod_{i=1}^{m} M_{i,\alpha_i,0}.
\end{equation}
Therefore,
\begin{equation}
\mu[Y\in\mathcal{Y}_{\mathrm c}]
=
\sum_{\alpha_1=1}^{K}\cdots\sum_{\alpha_m=1}^{K}
\left(
\prod_{i=1}^{m} M_{i,\alpha_i,0}
\right)
\mu\!\left[
X\in\mathcal{X}^{\mathrm c}
\,\big|\,
X\in\mathcal{R}_{\alpha}
\right].
\end{equation}

Accordingly, the upper and lower bounds over the information set \(\mathcal{A}\) become
\begin{subequations}
\begin{equation}
U(\mathcal{A})
=
\sup_{\mu\in\mathcal{A}}
\sum_{\alpha_1=1}^{K}\cdots\sum_{\alpha_m=1}^{K}
\left(
\prod_{i=1}^{m} M_{i,\alpha_i,0}
\right)
\mu\!\left[
X\in\mathcal{X}^{\mathrm c}
\,\big|\,
X\in\mathcal{R}_{\alpha}
\right],
\end{equation}
and
\begin{equation}
L(\mathcal{A})
=
\inf_{\mu\in\mathcal{A}}
\sum_{\alpha_1=1}^{K}\cdots\sum_{\alpha_m=1}^{K}
\left(
\prod_{i=1}^{m} M_{i,\alpha_i,0}
\right)
\mu\!\left[
X\in\mathcal{X}^{\mathrm c}
\,\big|\,
X\in\mathcal{R}_{\alpha}
\right].
\end{equation}
\label{eq:opt}
\end{subequations}

In general, the supremum of a sum of functions does not equal the sum of their suprema, nor does the infimum distribute over summation. However, Proposition~\ref{prop:suminf} in Appendix~\ref{sec:thm} establishes that, for the particular structure of Eq.~\eqref{eq:opt}, the supremum and infimum do commute with the summation. Combining this result with the known zeroth moments $M_{i,j,0}$ from Eq.~\eqref{eq:infset0}, we obtain
\begin{subequations}
\begin{equation}
U(\mathcal{A})
=
\sum_{\alpha_1=1}^{K}\cdots\sum_{\alpha_m=1}^{K}
\left(
\prod_{i=1}^{m} M_{i,\alpha_i,0}
\right)
\sup_{\mu\in\mathcal{A}}
\mu\!\left[
X\in\mathcal{X}^{\mathrm c}
\,\big|\,
X\in\mathcal{R}_{\alpha}
\right],
\end{equation}
and
\begin{equation}
L(\mathcal{A})
=
\sum_{\alpha_1=1}^{K}\cdots\sum_{\alpha_m=1}^{K}
\left(
\prod_{i=1}^{m} M_{i,\alpha_i,0}
\right)
\inf_{\mu\in\mathcal{A}}
\mu\!\left[
X\in\mathcal{X}^{\mathrm c}
\,\big|\,
X\in\mathcal{R}_{\alpha}
\right].
\end{equation}
\label{eq:opt1}
\end{subequations}
Since only zeroth moments are specified, each conditional probability must satisfy
\begin{equation}
    \mu\!\left[
    X\in\mathcal{X}^{\mathrm c}
    \,\big|\,
    X\in\mathcal{R}_{\alpha}
    \right]
    =
    \begin{cases}
    0, & \mathcal{X}^{\mathrm c}\cap\mathcal{R}_{\alpha}=\varnothing,\\[4pt]
    1, & \mathcal{X}^{\mathrm c}\cap\mathcal{R}_{\alpha}=\mathcal{R}_{\alpha},\\[4pt]
    \in(0,1), & \text{otherwise}.
    \end{cases}
\end{equation}
Thus, the extremal conditional probabilities are
\begin{subequations}
\begin{equation}
    \sup_{\mu\in\mathcal{A}}
    \mu\!\left[
    X\in\mathcal{X}^{\mathrm c}
    \,\big|\,
    X\in\mathcal{R}_{\alpha}
    \right]
    =
    \mathbf{1}\!\left[
    \mathcal{X}^{\mathrm c}\cap\mathcal{R}_{\alpha}\neq\varnothing
    \right],
\end{equation}
and
\begin{equation}
    \inf_{\mu\in\mathcal{A}}
    \mu\!\left[
    X\in\mathcal{X}^{\mathrm c}
    \,\big|\,
    X\in\mathcal{R}_{\alpha}
    \right]
    =
    \mathbf{1}\!\left[
    \mathcal{X}^{\mathrm c}\cap\mathcal{R}_{\alpha}=\mathcal{R}_{\alpha}
    \right].
\end{equation}
\end{subequations}
Substituting these expressions into Eq.~\eqref{eq:opt1} yields the upper and lower bounds in Eq.~\eqref{eq:bounds0}.

\section{Theorem and Proof}
\label{sec:thm}

We begin by recalling a classical characterization of the supremum and infimum of a set of real numbers.
\begin{theorem}[Characterization of Supremum and Infimum]
\label{thm:sup}
Let $A \subset \mathbb{R}$, and suppose that both $\sup A$ and $\inf A$ exist. Then,
\begin{enumerate}[topsep=0pt, itemsep=1pt]
    \item $a = \sup A$ if and only if  
    \begin{enumerate}[topsep=0pt, itemsep=1pt]
        \item $y \le a$ for all $y \in A$, and  
        \item for every $\epsilon > 0$ there exists $x \in A$ such that $x > a - \epsilon$.
    \end{enumerate}

    \item $a = \inf A$ if and only if  
    \begin{enumerate}[topsep=0pt, itemsep=1pt]
        \item $y \ge a$ for all $y \in A$, and  
        \item for every $\epsilon > 0$ there exists $x \in A$ such that $x < a + \epsilon$.
    \end{enumerate}
\end{enumerate}
\end{theorem}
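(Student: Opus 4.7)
The plan is to prove each biconditional by appealing directly to the definitions of supremum as the least upper bound and infimum as the greatest lower bound. Since the two parts are structurally symmetric (part 2 can be obtained from part 1 applied to $-A$), the main work lies in establishing part 1 carefully, after which part 2 follows by an analogous argument.

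For the forward direction of part 1, I would assume $a = \sup A$. Condition (a) is then immediate: any supremum is by definition an upper bound. For condition (b), I would argue by contradiction. Suppose that for some $\epsilon_0 > 0$ no element of $A$ exceeds $a - \epsilon_0$. Then $a - \epsilon_0$ would itself be an upper bound of $A$, strictly smaller than $a$, which contradicts the minimality of $a$ among upper bounds. Hence such an $x$ must exist for every $\epsilon > 0$.

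For the reverse direction of part 1, I would assume that $a$ satisfies both (a) and (b) and show it equals $\sup A$. Condition (a) ensures that $a$ is an upper bound of $A$. To show $a$ is the \emph{least} such bound, suppose for contradiction that some $a' < a$ were also an upper bound. Setting $\epsilon = a - a' > 0$, condition (b) would produce an element $x \in A$ with $x > a - \epsilon = a'$, contradicting the assumption that $a'$ bounds $A$ from above. Therefore no upper bound smaller than $a$ exists, and $a = \sup A$.

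Part 2 follows in two ways. The most economical route is to invoke the identity $\inf A = -\sup(-A)$, reducing the statement to part 1 after replacing $a$ by $-a$ and $\epsilon$ by $\epsilon$; the strict inequalities translate correctly under negation. Alternatively, I would simply mirror the argument above: (a) asserts that $a$ is a lower bound, and (b) rules out any larger lower bound by an analogous contradiction using $\epsilon = a' - a$. There is no genuine obstacle in this proof; the only thing to be careful about is that the strict inequality in (b) is essential --- using $\ge$ in place of $>$ would still characterize $\sup A$ correctly for $\sup A \in A$, but the strict form is the one needed in the subsequent Proposition~\ref{prop:suminf}, so I would keep the formulation exactly as stated.
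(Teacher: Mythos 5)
Your proof is correct: the forward direction via the contradiction that $a-\epsilon_0$ would be a smaller upper bound, and the reverse direction taking $\epsilon = a - a'$ to defeat any putative smaller upper bound $a'$, is exactly the standard argument, and the reduction of part 2 to part 1 through $\inf A = -\sup(-A)$ (or the mirrored argument) is sound. Note that the paper itself does not prove this theorem --- it is recalled as a classical fact and used only as input to Proposition~\ref{prop:suminf} --- so there is no authorial proof to diverge from; your write-up supplies the missing standard justification. One minor point: your parenthetical claim that replacing $>$ by $\ge$ in (b) still characterizes $\sup A$ ``for $\sup A \in A$'' is in fact true unconditionally (take $\epsilon = (a-a')/2$ in the same contradiction argument), but since this is an aside and not part of the proof, it does not affect correctness.
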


Using Theorem~\ref{thm:sup}, we now establish a key proposition used in our derivation of the optimal bounds in Section~\ref{sec:deri}.

\begin{prop}
Let $\mathcal{X}_j$ ($j=1,...,K)$ be mutually disjoint subdomains of the sample space $\mathcal{X}$ of the random variable $X$, and suppose 
$\mathcal{X}^*_j \subseteq \mathcal{X}_j$ for all $j$. Let $\mathcal{A}$ be the set of all probability measures $\mu$ on $\mathcal{X}$ satisfying the constraints
\[
\mathcal{A} = \big\{\mu : \mu \big[X \in\mathcal{X}_j \big] = M_j,~j=1,\dots,K\big\}.
\]
Then,
\begin{align}
\sup_{\mu \in \mathcal{A}} \sum_{j=1}^{K} \mu \big[X \in \mathcal{X}^*_j \big]
&= \sum_{j=1}^{K} \sup_{\mu \in \mathcal{A}} \mu \big[X \in \mathcal{X}^*_j \big], 
\label{eq:prop_sup} \\
\inf_{\mu \in \mathcal{A}} \sum_{j=1}^{K} \mu \big[X \in \mathcal{X}^*_j \big]
&= \sum_{j=1}^{K} \inf_{\mu \in \mathcal{A}} \mu \big[X \in \mathcal{X}^*_j \big].
\label{eq:prop_inf}
\end{align}
\label{prop:suminf}
\end{prop}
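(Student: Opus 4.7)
The plan is to exploit the fact that the constraints defining $\mathcal{A}$ are localised to the disjoint subdomains $\mathcal{X}_j$: once the total mass $M_j$ on each $\mathcal{X}_j$ is fixed, the restriction of any admissible $\mu$ to $\mathcal{X}_j$ can be modified independently of its restriction to any other $\mathcal{X}_l$, because the subdomains do not overlap. This decoupling suggests that the $K$ scalar functionals $\mu \mapsto \mu[X\in\mathcal{X}^*_j]$ can be jointly extremised by a single admissible measure, which is exactly the statement that the sup (resp.\ inf) commutes with the sum.

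First I would dispatch the trivial directions. For every $\mu\in\mathcal{A}$ and every $j$, monotonicity gives $\mu[X\in\mathcal{X}^*_j]\le \sup_{\nu\in\mathcal{A}}\nu[X\in\mathcal{X}^*_j]$; summing over $j$ and then taking the supremum over $\mu$ yields the inequality $\le$ in \eqref{eq:prop_sup}. The same argument with the inequalities reversed produces the inequality $\ge$ in \eqref{eq:prop_inf}. These steps are purely formal and use no structure of $\mathcal{A}$.

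The heart of the proof is the reverse inequality, for which I would use a gluing construction. Fix $\epsilon>0$. For each $j\in\{1,\dots,K\}$, invoke Theorem~\ref{thm:sup} to pick $\mu^{(j)}\in\mathcal{A}$ with $\mu^{(j)}[X\in\mathcal{X}^*_j]>\sup_{\mu\in\mathcal{A}}\mu[X\in\mathcal{X}^*_j]-\epsilon/K$ (and analogously with the reversed inequality for the infimum case). Define a new set function on measurable $A\subseteq\mathcal{X}$ by
\[
\tilde{\mu}(A) \;:=\; \sum_{j=1}^{K} \mu^{(j)}\bigl(A\cap\mathcal{X}_j\bigr).
\]
Using the disjointness assumption $\mathcal{X}_j\cap\mathcal{X}_l=\varnothing$ for $j\ne l$, I would check that $\tilde{\mu}[X\in\mathcal{X}_l]=\mu^{(l)}(\mathcal{X}_l)=M_l$, so $\tilde{\mu}\in\mathcal{A}$ (its total mass equals $\sum_j M_j=1$ because the $\mathcal{X}_j$'s partition $\mathcal{X}$, as stipulated in Eqs.~\eqref{eq:xixj}--\eqref{eq:union}). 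The same disjointness, combined with the inclusion $\mathcal{X}^*_j\subseteq\mathcal{X}_j$, yields $\tilde{\mu}[X\in\mathcal{X}^*_j]=\mu^{(j)}[X\in\mathcal{X}^*_j]$. Summing over $j$ and letting $\epsilon\downarrow 0$ then closes both \eqref{eq:prop_sup} and \eqref{eq:prop_inf}.

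The main obstacle is largely measure-theoretic bookkeeping: one must confirm that $\tilde{\mu}$ is genuinely a probability measure on $\mathcal{X}$---countable additivity is inherited from the $\mu^{(j)}$'s via the disjoint decomposition, and normalisation follows from the partition property---and that, since the suprema and infima need not be attained, the $\epsilon$-approximation step is unavoidable (which is precisely why Theorem~\ref{thm:sup} is invoked). Once these details are verified, the two inequalities match and the proposition follows, the inf case being obtained by the symmetric argument with $\mu^{(j)}[X\in\mathcal{X}^*_j]<\inf_{\mu\in\mathcal{A}}\mu[X\in\mathcal{X}^*_j]+\epsilon/K$.
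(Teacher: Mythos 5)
Your proposal is correct and follows essentially the same route as the paper's proof: the easy inequality by termwise bounding, and the reverse inequality by gluing $\epsilon$-near-optimal measures on the disjoint subdomains via the supremum characterization (Theorem~\ref{thm:sup}). The only cosmetic difference is that you select the near-optimizers directly from $\mathcal{A}$, whereas the paper first passes to the relaxed local constraint sets $\mathcal{A}_j$ before gluing; both versions hinge on the same disjointness and partition properties.
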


\begin{proof} We prove the supremum identity. The infimum case is analogous.

For any $\mu \in \mathcal{A}$ and every $j$,
\begin{equation}
\mu \big[X \in \mathcal{X}^*_j \big]
\le 
\sup_{\nu \in \mathcal{A}} \nu \big[X \in \mathcal{X}^*_j \big].
\end{equation}
Summing over $j$ gives
\begin{equation}
\sum_{j=1}^{K} \mu \big[X \in \mathcal{X}^*_j \big]
\le 
\sum_{j=1}^{K} \sup_{\nu \in \mathcal{A}} \nu \big[X \in \mathcal{X}^*_j \big].
\end{equation}
Taking the supremum over $\mu \in \mathcal{A}$ yields
\begin{equation}
\sup_{\mu \in \mathcal{A}} \sum_{j=1}^{K} \mu \big[X \in \mathcal{X}^*_j \big]
\le 
\sum_{j=1}^{K} \sup_{\nu \in \mathcal{A}} \nu \big[X \in \mathcal{X}^*_j \big].
\label{eq:upper_step}
\end{equation}
For each fixed $j$, define the constraint set
\[
\mathcal{A}_j 
:= \big\{\mu : \mu \big[X \in \mathcal{X}_j \big] = M_j \big\},
\]
i.e., the local constraint on the subdomain $\mathcal{X}_j$.  
Since $\mathcal{X}_j$'s are disjoint and $\mathcal{A}$ imposes these 
constraints independently, we have
\begin{equation}
\sup_{\mu \in \mathcal{A}} \mu \big[X \in \mathcal{X}^*_j \big]
=
\sup_{\mu \in \mathcal{A}_j} \mu \big[X \in \mathcal{X}^*_j \big].
\label{eq:Aj_equiv}
\end{equation}
By the characterization of supremum (Theorem~\ref{thm:sup}), for any $\epsilon>0$ 
there exists a measure $\mu_j^\epsilon \in \mathcal{A}_j$ such that
\begin{equation}
\mu_j^\epsilon[X \in \mathcal{X}^*_j]
>
\sup_{\nu \in \mathcal{A}_j} \nu \big[X \in \mathcal{X}^*_j \big]
- \frac{\epsilon}{K}.
\label{eq:epsilon_choice}
\end{equation}

Now define a new probability measure $\mu^\epsilon$ on $\mathcal{X}$ by
\begin{equation}
\mu^\epsilon \big[X \in \mathcal{B} \big]
=
\sum_{j=1}^K 
\mu_j^\epsilon \big[ X \in \mathcal{B} \cap \mathcal{X}_j \big],
\quad \mathcal{B} \subseteq \mathcal{X}.
\end{equation}
Since the $\mathcal{X}_j$’s are disjoint and 
\(\mu_j^\epsilon(\mathcal{X}_j)=M_j\),
this measure satisfies all constraints and thus
\begin{equation}
\mu^\epsilon \in \mathcal{A}.
\end{equation}
Then,
\begin{equation}
\mu^\epsilon \big[X \in \mathcal{X}^*_j \big]
=
\mu_j^\epsilon \big[X \in \mathcal{X}^*_j \big],
\end{equation}
and summing over \(j\), using \eqref{eq:epsilon_choice},
\begin{equation}
\sum_{j=1}^{K} \mu^\epsilon \big[X \in \mathcal{X}^*_j \big]
>
\sum_{j=1}^{K} 
\bigg(
\sup_{\nu \in \mathcal{A}_j} \nu \big[X \in \mathcal{X}^*_j \big]
- \frac{\epsilon}{K}
\bigg)
=
\sum_{j=1}^{K} \sup_{\nu \in \mathcal{A}} \nu \big[X \in \mathcal{X}^*_j \big]
- \epsilon.
\label{eq:lower_step}
\end{equation}
Because $\epsilon > 0$ is arbitrary and \(\mu^\epsilon \in \mathcal{A}\),
\begin{equation}
\sup_{\mu \in \mathcal{A}} \sum_{j=1}^{K} \mu \big[X \in \mathcal{X}^*_j \big]
\ge
\sum_{j=1}^{K} \sup_{\nu \in \mathcal{A}} \nu \big[X \in \mathcal{X}^*_j \big].
\label{eq:lower_final}
\end{equation}
 
From \eqref{eq:upper_step} and \eqref{eq:lower_final}, equality follows
\[
\sup_{\mu \in \mathcal{A}} \sum_{j=1}^{K} \mu \big[X \in \mathcal{X}^*_j \big]
=
\sum_{j=1}^{K} \sup_{\nu \in \mathcal{A}} \nu \big[X \in \mathcal{X}^*_j \big].
\]
A parallel argument proves the infimum identity. This completes the proof.
\end{proof}

\section{Numerical Values of Optimal Bounds and Relative Deviations}
\label{sec:tables}
This section reports the numerical values of the optimal OUQ bounds and their relative deviations with respect to the corresponding benchmark probabilities of failure. Results are provided for the one-dimensional examples in Tables~\ref{tab:ouq_1d_normal}-\ref{tab:ouq_1d_bimodal} (Section~\ref{sec:1D}), the five-dimensional nonlinear smooth problem in Tables~\ref{tab:ouq_5d_bounds}-\ref{tab:ouq_nits_scan} (Section~\ref{sec:5D}), the two-dimensional non-smooth four-branch problem in Tables~\ref{tab:ouq_case3a_fourbranch}-\ref{tab:ouq_case3b_fourbranch} (Section~\ref{sec:2D}), the eight-dimensional rare-event roof-truss problem in Table~\ref{tab:ouq_case4_rooftruss_bounds} (Section~\ref{sec:8D}), and the ten-dimensional ballistic impact problem in Tables~\ref{tab:ouq_10d_impact_bounds}-\ref{tab:ouq_10d_impact_yc} (Section~\ref{sec:ballistic}).

\begin{table}[width=\linewidth,cols=6,pos=h]
\caption{Optimal bounds and relative deviations for the one-dimensional truncated normal case.}
\label{tab:ouq_1d_normal}
\begin{tabular*}{\tblwidth}{@{} L L L L L L @{}}
\toprule
$r$ & $K$ & $U$ & RD of $U$ ($\%$) & $L$ & RD of $L$ ($\%$) \\
\midrule
\multirow{4}{*}{$0$}
  & $1$ & $1.0000$ & $+313.22$ & $0.0000$ & $-100.00$ \\
  & $2$ & $0.5000$ & $+106.61$ & $0.0000$ & $-100.00$ \\
  & $4$ & $0.5000$ & $+106.61$ & $0.0062$ & $-97.36$ \\
  & $8$ & $0.5000$ & $+106.61$ & $0.1056$ & $-56.36$ \\
\midrule
\multirow{4}{*}{$1$}
  & $1$ & $0.8772$ & $+262.48$ & $0.0000$ & $-100.00$ \\
  & $2$ & $0.5000$ & $+106.61$ & $0.0114$ & $-95.37$ \\
  & $4$ & $0.5000$ & $+106.61$ & $0.0261$ & $-89.21$ \\
  & $8$ & $0.4146$ & $+71.24$ & $0.1056$ & $-56.36$ \\
\midrule
\multirow{4}{*}{$2$}
  & $1$ & $0.6711$ & $+177.36$ & $0.0000$ & $-100.00$ \\
  & $2$ & $0.5000$ & $+106.61$ & $0.0128$ & $-94.71$ \\
  & $4$ & $0.4465$ & $+84.55$ & $0.0469$ & $-80.62$ \\
  & $8$ & $0.3979$ & $+64.42$ & $0.1269$ & $-47.56$ \\
\midrule
\multirow{4}{*}{$3$}
  & $1$ & $0.6349$ & $+162.36$ & $0.0042$ & $-98.26$ \\
  & $2$ & $0.4145$ & $+71.20$ & $0.0819$ & $-66.16$ \\
  & $4$ & $0.4097$ & $+69.30$ & $0.0924$ & $-61.82$ \\
  & $8$ & $0.3556$ & $+46.94$ & $0.1436$ & $-40.66$ \\
\bottomrule
\end{tabular*}

\vspace{2pt}
\parbox{\tblwidth}{%
  \footnotesize\normalfont\raggedright
  \emph{Note:} RD denotes relative deviation, computed with respect to the reference PoF of $0.2420$.%
}
\end{table}

\begin{table}[width=\linewidth,cols=6,pos=h]
\caption{Optimal bounds and relative deviations for the one-dimensional uniform case.}
\label{tab:ouq_1d_uniform}
\begin{tabular*}{\tblwidth}{@{} L L L L L L @{}}
\toprule
$r$ & $K$ & $U$ & RD of $U$ ($\%$) & $L$ & RD of $L$ ($\%$) \\
\midrule
\multirow{4}{*}{$0$}
  & $1$ & $1.0000$ & $+203.03$ & $0.0000$ & $-100.00$ \\
  & $2$ & $0.5000$ & $+51.52$  & $0.0000$ & $-100.00$ \\
  & $4$ & $0.5000$ & $+51.52$  & $0.2500$ & $-24.24$  \\
  & $8$ & $0.3750$ & $+13.64$  & $0.2500$ & $-24.24$  \\
\midrule
\multirow{4}{*}{$1$}
  & $1$ & $0.7463$ & $+126.15$ & $0.0000$ & $-100.00$ \\
  & $2$ & $0.5000$ & $+51.52$  & $0.1212$ & $-63.27$  \\
  & $4$ & $0.4338$ & $+31.45$  & $0.2500$ & $-24.24$  \\
  & $8$ & $0.3750$ & $+13.64$  & $0.2773$ & $-15.97$  \\
\midrule
\multirow{4}{*}{$2$}
  & $1$ & $0.7425$ & $+125.00$ & $0.0000$ & $-100.00$ \\
  & $2$ & $0.4951$ & $+50.03$  & $0.1237$ & $-62.52$  \\
  & $4$ & $0.4300$ & $+30.30$  & $0.2500$ & $-24.24$  \\
  & $8$ & $0.3704$ & $+12.24$  & $0.2799$ & $-15.18$  \\
\midrule
\multirow{4}{*}{$3$}
  & $1$ & $0.5966$ & $+80.79$  & $0.0982$ & $-70.24$  \\
  & $2$ & $0.4482$ & $+35.82$  & $0.1965$ & $-40.45$  \\
  & $4$ & $0.3966$ & $+20.18$  & $0.2731$ & $-17.24$  \\
  & $8$ & $0.3608$ & $+9.33$   & $0.2963$ & $-10.21$  \\
\bottomrule
\end{tabular*}

\vspace{2pt}
\parbox{\tblwidth}{%
  \footnotesize\normalfont\raggedright
  \emph{Note:} RD denotes relative deviation, computed with respect to the reference PoF of $0.3300$.%
}
\end{table}

\begin{table}[width=\linewidth,cols=6,pos=h]
\caption{Optimal bounds and relative deviations for the one-dimensional truncated Weibull case.}
\label{tab:ouq_1d_weibull}
\begin{tabular*}{\tblwidth}{@{} L L L L L L @{}}
\toprule
$r$ & $K$ & $U$ & RD of $U$ ($\%$) & $L$ & RD of $L$ ($\%$) \\
\midrule
\multirow{4}{*}{$0$}
  & $1$ & $1.0000$ & $+527.53$ & $0.0000$ & $-100.00$ \\
  & $2$ & $1.0000$ & $+527.53$ & $0.0192$ & $-87.94$  \\
  & $4$ & $0.2472$ & $+55.18$  & $0.0192$ & $-87.94$  \\
  & $8$ & $0.2472$ & $+55.18$  & $0.0767$ & $-51.86$  \\
\midrule
\multirow{4}{*}{$1$}
  & $1$ & $0.6018$ & $+277.78$ & $0.0000$ & $-100.00$ \\
  & $2$ & $0.5840$ & $+266.61$ & $0.0192$ & $-87.94$  \\
  & $4$ & $0.2472$ & $+55.18$  & $0.0574$ & $-63.98$  \\
  & $8$ & $0.2472$ & $+55.18$  & $0.0828$ & $-48.02$  \\
\midrule
\multirow{4}{*}{$2$}
  & $1$ & $0.5127$ & $+221.84$ & $0.0000$ & $-100.00$ \\
  & $2$ & $0.4376$ & $+174.70$ & $0.0192$ & $-87.94$  \\
  & $4$ & $0.2472$ & $+55.18$  & $0.0688$ & $-56.82$  \\
  & $8$ & $0.2189$ & $+37.44$  & $0.1017$ & $-36.15$  \\
\midrule
\multirow{4}{*}{$3$}
  & $1$ & $0.4632$ & $+190.78$ & $0.0034$ & $-97.87$  \\
  & $2$ & $0.3979$ & $+149.81$ & $0.0339$ & $-78.73$  \\
  & $4$ & $0.2262$ & $+41.97$  & $0.0863$ & $-45.81$  \\
  & $8$ & $0.2148$ & $+34.87$  & $0.1098$ & $-31.05$  \\
\bottomrule
\end{tabular*}

\vspace{2pt}
\parbox{\tblwidth}{%
  \footnotesize\normalfont\raggedright
  \emph{Note:} RD denotes relative deviation, computed with respect to the reference PoF of $0.1593$.%
}
\end{table}

\begin{table}[width=\linewidth,cols=6,pos=h]
\caption{Optimal bounds and relative deviations for the one-dimensional truncated bimodal normal-mixture case.}
\label{tab:ouq_1d_bimodal}
\begin{tabular*}{\tblwidth}{@{} L L L L L L @{}}
\toprule
$r$ & $K$ & $U$ & RD of $U$ ($\%$) & $L$ & RD of $L$ ($\%$) \\
\midrule
\multirow{4}{*}{$0$}
  & $1$ & $1.0000$ & $+102.96$ & $0.0000$ & $-100.00$ \\
  & $2$ & $0.6432$ & $+30.55$  & $0.0000$ & $-100.00$ \\
  & $4$ & $0.6432$ & $+30.55$  & $0.1999$ & $-59.44$  \\
  & $8$ & $0.5027$ & $+2.03$   & $0.1999$ & $-59.44$  \\
\midrule
\multirow{4}{*}{$1$}
  & $1$ & $0.8887$ & $+80.36$  & $0.0000$ & $-100.00$ \\
  & $2$ & $0.6432$ & $+30.55$  & $0.1269$ & $-74.27$  \\
  & $4$ & $0.6432$ & $+30.55$  & $0.2860$ & $-41.96$  \\
  & $8$ & $0.5027$ & $+2.03$   & $0.3488$ & $-29.21$  \\
\midrule
\multirow{4}{*}{$2$}
  & $1$ & $0.8778$ & $+78.17$  & $0.0185$ & $-96.27$  \\
  & $2$ & $0.6432$ & $+30.55$  & $0.2395$ & $-51.40$  \\
  & $4$ & $0.6198$ & $+25.80$  & $0.3113$ & $-36.82$  \\
  & $8$ & $0.5027$ & $+2.03$   & $0.4235$ & $-14.05$  \\
\midrule
\multirow{4}{*}{$3$}
  & $1$ & $0.7781$ & $+57.94$  & $0.0698$ & $-85.86$  \\
  & $2$ & $0.6281$ & $+27.49$  & $0.2673$ & $-45.76$  \\
  & $4$ & $0.6112$ & $+24.06$  & $0.3437$ & $-30.25$  \\
  & $8$ & $0.5027$ & $+2.03$   & $0.4467$ & $-9.34$   \\
\bottomrule
\end{tabular*}

\vspace{2pt}
\parbox{\tblwidth}{%
  \footnotesize\normalfont\raggedright
  \emph{Note:} RD denotes relative deviation, computed with respect to the reference PoF of $0.4927$.%
}
\end{table}

\begin{table}[width=\linewidth,cols=6,pos=h]
\caption{Optimal bounds and relative deviations for the five-dimensional nonlinear problem.}
\label{tab:ouq_5d_bounds}
\begin{tabular*}{\tblwidth}{@{} L L L L L L @{}}
\toprule
$r$ & $K$ & $U$ & RD of $U$ ($\%$) & $L$ & RD of $L$ ($\%$) \\
\midrule
\multirow{4}{*}{$0$}
  & $1$ & $1.0000$ & $+239.21$ & $0.0000$ & $-100.00$ \\
  & $2$ & $1.0000$ & $+239.21$ & $0.0000$ & $-100.00$ \\
  & $4$ & $0.8839$ & $+199.84$ & $0.0190$ & $-93.56$  \\
  & $8$ & $0.6433$ & $+118.23$ & $0.0878$ & $-70.23$  \\
\midrule
\multirow{4}{*}{$1$}
  & $1$ & $1.0000$ & $+239.21$ & $0.0000$ & $-100.00$ \\
  & $2$ & $0.6229$ & $+111.28$ & $0.1319$ & $-55.27$  \\
  & $4$ & $0.4007$ & $+35.92$  & $0.2247$ & $-23.78$  \\
  & $8$ & $0.3232$ & $+9.64$   & $0.2736$ & $-7.19$   \\
\midrule
\multirow{4}{*}{$2$}
  & $1$ & $0.6949$ & $+135.73$ & $0.0269$ & $-90.87$  \\
  & $2$ & $0.3542$ & $+20.15$  & $0.2078$ & $-29.50$  \\
  & $4$ & $0.3377$ & $+14.55$  & $0.2550$ & $-13.51$  \\
  & $8$ & $0.2988$ & $+1.37$   & $0.2925$ & $-0.77$   \\
\bottomrule
\end{tabular*}

\vspace{2pt}
\parbox{\tblwidth}{%
  \footnotesize\normalfont\raggedright
  \emph{Note:} RD denotes relative deviation, computed with respect to the reference PoF of $0.2948$.%
}
\end{table}

\begin{table}[width=\linewidth,cols=7,pos=t]
\caption{Optimal bounds versus ITS sample size for the representative case \(K=8\) and \(r=2\) in the five-dimensional nonlinear smooth problem.}
\label{tab:ouq_nits_scan}
\begin{tabular*}{\tblwidth}{@{} L L L @{}}
\toprule
$N_{\mathrm{ITS}}$ & $U$ & $L$ \\
\midrule
$5\times10^3$   & 0.3058   & 0.2784    \\
$7\times10^3$   & 0.3061   & 0.2791    \\
$1\times10^4$   & 0.3056   & 0.2848    \\
$1.5\times10^4$ & 0.3062   & 0.2868    \\
$2\times10^4$   & 0.3041   & 0.2874    \\
$3.5\times10^4$ & 0.3029   & 0.2894  \\
$5\times10^4$   & 0.3031   & 0.2899   \\
$1\times10^5$   & 0.3014   & 0.2905   \\
\bottomrule
\end{tabular*}
\end{table}

\begin{table}[width=\linewidth,cols=6,pos=h]
\caption{Optimal bounds and relative deviations for the two-dimensional non-smooth four-branch problem with $Y^{\mathrm{c}}=0$.}
\label{tab:ouq_case3a_fourbranch}
\begin{tabular*}{\tblwidth}{@{} L L L L L L @{}}
\toprule
$r$ & $K$ & $U$ & RD of $U$ ($\%$) & $L$ & RD of $L$ ($\%$) \\
\midrule
\multirow{4}{*}{$0$}
  & $1$ & $1.0000$ & $+2.21\times10^4$ & $0.0000$ & $-100.00$ \\
  & $2$ & $1.0000$ & $+2.21\times10^4$ & $0.0000$ & $-100.00$ \\
  & $4$ & $1.0000$ & $+2.21\times10^4$ & $1.5423\times10^{-4}$ & $-96.58$ \\
  & $8$ & $0.0546$ & $+1.11\times10^3$ & $1.8928\times10^{-4}$ & $-95.80$ \\
\midrule
\multirow{4}{*}{$1$}
  & $1$ & $1.0000$ & $+2.21\times10^4$ & $0.0000$ & $-100.00$ \\
  & $2$ & $0.3404$ & $+7.45\times10^3$ & $0.0000$ & $-100.00$ \\
  & $4$ & $0.1402$ & $+3.01\times10^3$ & $1.5423\times10^{-4}$ & $-96.58$ \\
  & $8$ & $0.0149$ & $+229.53$ & $0.0015$ & $-66.37$ \\
\midrule
\multirow{4}{*}{$2$}
  & $1$ & $0.1145$ & $+2.44\times10^3$ & $0.0000$ & $-100.00$ \\
  & $2$ & $0.0433$ & $+860.90$ & $0.0000$ & $-100.00$ \\
  & $4$ & $0.0256$ & $+467.92$ & $2.2285\times10^{-4}$ & $-95.06$ \\
  & $8$ & $0.0081$ & $+79.38$ & $0.0023$ & $-49.31$ \\
\bottomrule
\end{tabular*}

\vspace{2pt}
\parbox{\tblwidth}{%
  \footnotesize\normalfont\raggedright
  \emph{Note:} RD denotes relative deviation, computed with respect to the reference PoF of $4.51\times10^{-3}$.
}
\end{table}

\begin{table}[width=\linewidth,cols=6,pos=h]
\caption{Optimal bounds and relative deviations for the two-dimensional non-smooth four-branch problem with $Y^{\mathrm{c}}=2$.}
\label{tab:ouq_case3b_fourbranch}
\begin{tabular*}{\tblwidth}{@{} L L L L L L @{}}
\toprule
$r$ & $K$ & $U$ & RD of $U$ ($\%$) & $L$ & RD of $L$ ($\%$) \\
\midrule
\multirow{4}{*}{$0$}
  & $1$ & $1.0000$ & $+8.62\times10^6$ & $0.0000$ & $-100.00$ \\
  & $2$ & $1.0000$ & $+8.62\times10^6$ & $0.0000$ & $-100.00$ \\
  & $4$ & $0.0124$ & $+1.07\times10^5$ & $0.0000$ & $-100.00$ \\
  & $8$ & $0.0014$ & $+1.20\times10^4$ & $2.1889\times10^{-6}$ & $-81.13$ \\
\midrule
\multirow{4}{*}{$1$}
  & $1$ & $1.0000$ & $+8.62\times10^6$ & $0.0000$ & $-100.00$ \\
  & $2$ & $0.0507$ & $+4.37\times10^5$ & $0.0000$ & $-100.00$ \\
  & $4$ & $9.5711\times10^{-4}$ & $+8.15\times10^3$ & $0.0000$ & $-100.00$ \\
  & $8$ & $2.4623\times10^{-4}$ & $+2.02\times10^3$ & $2.1889\times10^{-6}$ & $-81.13$ \\
\midrule
\multirow{4}{*}{$2$}
  & $1$ & $0.0151$ & $+1.30\times10^5$ & $0.0000$ & $-100.00$ \\
  & $2$ & $0.0066$ & $+5.64\times10^4$ & $0.0000$ & $-100.00$ \\
  & $4$ & $1.3068\times10^{-4}$ & $+1.03\times10^3$ & $0.0000$ & $-100.00$ \\
  & $8$ & $4.0035\times10^{-5}$ & $+245.13$ & $2.2345\times10^{-6}$ & $-80.74$ \\
\bottomrule
\end{tabular*}

\vspace{2pt}
\parbox{\tblwidth}{%
  \footnotesize\normalfont\raggedright
  \emph{Note:} RD denotes relative deviation, computed with respect to the reference PoF of $1.16\times10^{-5}$.%
}
\end{table}

\begin{table}[width=\linewidth,cols=6,pos=h]
\caption{Optimal bounds and relative deviations for the rare-event roof-truss problem.}
\label{tab:ouq_case4_rooftruss_bounds}
\begin{tabular*}{\tblwidth}{@{} L L L L L L @{}}
\toprule
$r$ & $K$ & $U$ & RD of $U$ ($\%$) & $L$ & RD of $L$ ($\%$) \\
\midrule
\multirow{4}{*}{$0$}
  & $1$ & $1.0000$             & $+1.99\times10^{8}$ & $0.0000$ & $-100.00$ \\
  & $2$ & $0.3750$             & $+7.46\times10^{7}$ & $0.0000$ & $-100.00$ \\
  & $4$ & $0.1250$             & $+2.49\times10^{7}$ & $0.0000$ & $-100.00$ \\
  & $8$ & $0.0334$             & $+6.66\times10^{6}$ & $0.0000$ & $-100.00$ \\
\midrule
\multirow{4}{*}{$1$}
  & $1$ & $0.4976$             & $+9.90\times10^{7}$ & $0.0000$ & $-100.00$ \\
  & $2$ & $0.0255$             & $+5.07\times10^{6}$ & $0.0000$ & $-100.00$ \\
  & $4$ & $0.0042$             & $+8.43\times10^{5}$ & $0.0000$ & $-100.00$ \\
  & $8$ & $9.0931\times10^{-4}$ & $+1.81\times10^{5}$ & $0.0000$ & $-100.00$ \\
\midrule
\multirow{4}{*}{$2$}
  & $1$ & $0.0157$             & $+3.12\times10^{6}$ & $0.0000$ & $-100.00$ \\
  & $2$ & $1.1072\times10^{-3}$ & $+2.20\times10^{5}$ & $0.0000$ & $-100.00$ \\
  & $4$ & $2.6038\times10^{-4}$ & $+5.17\times10^{4}$ & $0.0000$ & $-100.00$ \\
  & $8$ & $8.6255\times10^{-5}$ & $+1.71\times10^{4}$ & $0.0000$ & $-100.00$ \\
\bottomrule
\end{tabular*}

\vspace{2pt}
\parbox{\tblwidth}{%
  \footnotesize\normalfont\raggedright
  \emph{Note:} RD denotes relative deviation, computed with respect to the reference PoF of $5.0249\times10^{-7}$.
}
\end{table}

\begin{table}[width=\linewidth,cols=6,pos=h]
\caption{Optimal bounds and relative deviations for the ten-dimensional ballistic impact problem.}
\label{tab:ouq_10d_impact_bounds}
\begin{tabular*}{\tblwidth}{@{} L L L L L L @{}}
\toprule
$r$ & $K$ & $U$ & RD of $U$ ($\%$) & $L$ & RD of $L$ ($\%$) \\
\midrule
\multirow{4}{*}{$0$}
  & $1$ & $1.0000$ & $+7389.51$ & $0.0000$ & $-100.00$ \\
  & $2$ & $0.3359$ & $+2416.01$ & $0.0000$ & $-100.00$ \\
  & $4$ & $0.1144$ & $+757.10$  & $0.0000$ & $-100.00$ \\
  & $8$ & $0.0485$ & $+263.39$  & $2.68\times10^{-4}$ & $-97.99$ \\
\midrule
\multirow{4}{*}{$1$}
  & $1$ & $0.3217$ & $+2309.47$ & $0.0000$ & $-100.00$ \\
  & $2$ & $0.0715$ & $+435.14$ & $0.0000$ & $-100.00$ \\
  & $4$ & $0.0321$ & $+140.74$ & $0.0024$ & $-81.92$ \\
  & $8$ & $0.0170$ & $+27.35$  & $0.0117$ & $-12.52$ \\
\midrule
\multirow{4}{*}{$2$}
  & $1$ & $0.0892$ & $+568.03$ & $0.0000$ & $-100.00$ \\
  & $2$ & $0.0329$ & $+146.48$ & $4.0\times10^{-6}$ & $-99.97$ \\
  & $4$ & $0.0152$ & $+13.66$  & $0.0110$ & $-17.39$ \\
  & $8$ & $0.0136$ & $+1.75$   & $0.0131$ & $-1.62$ \\
\bottomrule
\end{tabular*}

\vspace{2pt}
\parbox{\tblwidth}{%
  \footnotesize\normalfont\raggedright
  \emph{Note:} RD denotes relative deviation, computed with respect to the reference PoF of $0.013352$.%
}
\end{table}

\begin{table}[width=\linewidth,cols=6,pos=h]

\caption{Optimal bounds and relative deviations for the ten-dimensional ballistic impact problem under varying failure threshold $Y^{\mathrm{c}}$.}
\label{tab:ouq_10d_impact_yc}
\begin{tabular*}{\tblwidth}{@{} L L L L L L @{}}
\toprule
$Y^{\mathrm{c}}$ (cm) & Reference PoF & $U$ & RD of $U$ ($\%$) & $L$ & RD of $L$ ($\%$) \\
\midrule
$0.8993$ & $0.0999$ & $0.1181$ & $+18.23$ & $0.0915$ & $-8.36$ \\
$0.9060$ & $0.0723$ & $0.0915$ & $+26.55$ & $0.0619$ & $-14.35$ \\
$0.9128$ & $0.0500$ & $0.0701$ & $+40.19$ & $0.0389$ & $-22.18$ \\
$0.9230$ & $0.0253$ & $0.0459$ & $+81.83$ & $0.0119$ & $-52.76$ \\
$0.9327$ & $0.0099$ & $0.0267$ & $+170.09$ & $8.72\times10^{-4}$ & $-91.17$ \\
$0.9380$ & $0.0050$ & $0.0166$  & $+232.07$ & $4.4\times10^{-5}$ & $-99.12$ \\
$0.9460$ & $9.86\times10^{-4}$ & $0.0067$ & $+576.67$ & $4.0\times10^{-6}$ & $-99.59$ \\
\bottomrule
\end{tabular*}

\vspace{2pt}
\parbox{\tblwidth}{%
  \footnotesize\normalfont\raggedright
  \emph{Note:} RD denotes relative deviation, computed with respect to the corresponding reference PoF at each $Y^{\mathrm{c}}$.%
}
\end{table}

\printcredits

\bibliographystyle{model1-num-names}

\bibliography{cas-refs}

\end{document}